\DeclareMathAlphabet{\mathbbmsl}{U}{bbm}{m}{sl}
\newcommand{\HH}{\mathbbmsl H}
\newcommand{\FF}{\mathbbmsl F}
\newtheorem{theorem}{Theorem}[section]
\newtheorem{lemma}[theorem]{Lemma}
\title{A Faster Algorithm to Recognize Even-Hole-Free Graphs\thanks{To
    appear in {\em Journal of Combinatorial Theory, Series B}. The
    current version slightly improves upon the preliminary
    version~\cite{ChangL12} appeared in SODA~2012:~(a) The time
    complexity for recognizing even-hole-free $n$-node $m$-edge graphs
    $G$ is reduced from $O(m^2 n^7)$ to $O(m^3 n^5)$, which is an
    improvement if $m=o(n^2)$; and (b) if $G$ contains even holes,
    then the current version shows how to output an even hole of $G$
    also in $O(m^3 n^5)$ time.}}
\author{%
Hsien-Chih Chang\thanks{Department of Computer Science, University of
  Illinois at Urbana-Champaign, USA. This research was performed while
  this author was affiliated with Department of Computer Science and
  Information Engineering, National Taiwan University. Email:
  {\tt{hchang17@illinois.edu}}.}
\and
Hsueh-I Lu\thanks{Corresponding author. Department of Computer
  Science and Information Engineering, National Taiwan University.
  This author also holds joint appointments in the Graduate Institute
  of Networking and Multimedia and the Graduate Institute of
  Biomedical Electronics and Bioinformatics, National Taiwan
  University. Address: 1 Roosevelt Road, Section 4, Taipei 106,
  Taiwan, ROC. Research of this author is supported in part by NSC
  grant~101--2221--E--002--062--MY3.  Email:
  {\tt{hil@csie.ntu.edu.tw}}. Web: {\url{www.csie.ntu.edu.tw/\~hil}}.}
}
\begin{document}

\maketitle

\begin{abstract}
We study the problem of determining whether an $n$-node graph $G$
contains an {\em even hole}, i.e., an induced simple cycle consisting
of an even number of nodes.  Conforti, Cornu{\'e}jols, Kapoor, and
Vu\v{s}kovi\'{c} gave the first polynomial-time algorithm for the
problem, which runs in $O(n^{40})$ time.  Later, Chudnovsky,
Kawarabayashi, and Seymour reduced the running time to $O(n^{31})$.
The best previously known algorithm for the problem, due to da~Silva
and Vu\v{s}kovi\'{c}, runs in $O(n^{19})$ time.  In this paper, we
solve the problem in $O(n^{11})$ time via a decomposition-based
algorithm that relies on the decomposition theorem of da~Silva and
Vu\v{s}kovi\'{c}.  Moreover, if $G$ contains even holes, then our
algorithm also outputs an even hole of $G$ in $O(n^{11})$ time.
\end{abstract}

\paragraph{\em Keywords} 
even hole, decomposition-based detection algorithm, extended clique
tree, $2$-join, star-cutset, diamond, beetle, tracker,


\section{Introduction}
\label{section:intro}
For any graphs $G$ and $F$, we say that $G$ {\em contains} $F$ if $F$
is isomorphic to an induced subgraph of~$G$.  If $G$ does not contain
$F$, then $G$ is {\em $F$-free}.  For any family $\FF$ of graphs, $G$
is {\em $\FF$-free} if $G$ is $F$-free for each graph $F$ in $\FF$.  A
{\em hole} is an induced simple cycle consisting of at least four
nodes.  A hole is {\em even} (respectively, {\em odd}) if it consists
of an even (respectively, odd) number of nodes.  Even-hole-free graphs
have been extensively studied in the literature~(see,
e.g.,~\cite{ConfortiCKV99, ConfortiCKV00, ConfortiCKV02a, daSilvaV07,
  Addario-BerryCHRS08, SilvaSS10, daSilvaV13, KloksMV09}).  See
Vu\v{s}kovi\'{c}~\cite{Vuskovic10} for a recent survey.  This paper
studies the problem of determining whether a graph contains even
holes.  Let $n$ be the number of nodes of the input graph.  Conforti,
Cornu{\'e}jols, Kapoor, and
Vu\v{s}kovi\'{c}~\cite{ConfortiCKV97con,ConfortiCKV02b} gave the first
polynomial-time algorithm for the problem, which runs in $O(n^{40})$
time~\cite{ChudnovskyKS05}.  Later, Chudnovsky, Kawarabayashi, and
Seymour~\cite{ChudnovskyKS05} reduced the running time to $O(n^{31})$.
Chudnovsky et al.~\cite{ChudnovskyKS05} also observed that the running
time can be further reduced to $O(n^{15})$ as long as prisms can be
detected efficiently, but Maffray and Trotignon~\cite{MaffrayT05}
showed that detecting prisms is NP-hard.  The best previously known
algorithm for the problem, due to da~Silva and
Vu\v{s}kovi\'{c}~\cite{daSilvaV13}, runs in $O(n^{19})$ time.  We
solve the problem in $O(n^{11})$ time, as stated in the following
theorem.

\begin{theorem}
\label{theorem:theorem1}
It takes $O(m^3 n^5)$ time to determine whether an $n$-node $m$-edge
connected graph contains even holes.
\end{theorem}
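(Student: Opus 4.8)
The plan is to run a decomposition-based detection algorithm built on the da~Silva--Vu\v{s}kovi\'{c} decomposition theorem: every connected even-hole-free graph either is \emph{basic}, in which case its structure is captured by an \emph{extended clique tree}, or admits a \emph{$2$-join} or a \emph{star-cutset}. Rather than recursively splitting $G$ along such cutsets --- which by itself neither preserves the presence of an even hole nor yields a fast algorithm --- I would search for an even hole directly, invoking the decomposition theorem only to certify that the search is exhaustive.

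First I would dispatch the basic endpoint: test whether $G$ is basic and, if so, build an extended clique tree and read off from it whether $G$ has an even hole (exhibiting one if so); this costs a low-order polynomial. So assume $G$ is not basic. Call a minimum even hole $\Omega$ \emph{clean} if no vertex of $G\setminus\Omega$ is ``awkwardly'' attached to it, i.e.\ has three or more neighbors on $\Omega$, or two neighbors in a forbidden pattern. The first, elementary step is to show that a clean minimum even hole, should one exist, can be found in about $O(n^3)$ time --- for instance by guessing a constant number of its vertices, joining them by shortest paths in appropriate vertex-deleted subgraphs, and checking length parity and inducedness. The remaining, principal task is \emph{cleaning}: manufacturing from $G$ a polynomial-size family of induced subgraphs, one per \emph{tracker}, so that whenever $G$ contains an even hole some member of the family contains a clean minimum even hole, while no member gains an even hole when $G$ had none. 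Here a tracker is an $O(1)$-size gadget --- a bounded number of edges and vertices recording a hypothetical minimum even hole together with the \emph{diamonds} and \emph{beetles} formed by its awkward attachments --- and from each tracker one deletes the vertices it flags, producing one cleaned subgraph. With a tracker naming, say, three edges and two vertices one enumerates $O(m^3 n^2)$ of them, and combined with the $O(n^3)$ detection subroutine this gives $O(m^3 n^5)$, which also absorbs the basic-case cost; tightening this accounting is exactly where the improvement over the previous $O(n^{19})$ bound must come from.

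The hard part is the structural lemma underlying cleaning: that the awkward attachments of a minimum even hole are genuinely governed by a constant-size tracker. The threat is a vertex --- or a dispersed set of vertices --- with many neighbors strung along $\Omega$ that cannot be removed without destroying every even hole; unlike for odd holes, minimality of $\Omega$ does not on its own rule this out. This is where the da~Silva--Vu\v{s}kovi\'{c} theorem does its work: I would show that a sufficiently rich family of awkward attachments to $\Omega$ either forces a $2$-join or a star-cutset in an even-hole-free subgraph built around $\Omega$ --- whence minimality of $\Omega$, or basicness of that subgraph, yields a contradiction or a clean reduction --- or else confines the attachments to one of finitely many beetle types, which the tracker simply records. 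Pushing this analysis through with the constants small enough to land at $O(m^3 n^5)$, in particular keeping the number of edges a tracker names at three and the per-tracker work at $O(n^3)$, is the quantitative crux and, I expect, where the bulk of the technical effort lies.
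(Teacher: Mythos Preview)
Your proposal inverts the actual architecture and, in doing so, introduces a genuine gap. You treat the da~Silva--Vu\v{s}kovi\'{c} decomposition theorem as a structural lemma about attachments to a minimum even hole, and plan to detect a \emph{clean} minimum even hole directly in $O(n^3)$ via guessed vertices and shortest paths. This direct-search step is not elementary and is not known to work: even after one kills major nodes and $N^{2,2}$-attachments, the hole can still carry $N^{1,1}$-attachments (vertices with two non-adjacent neighbours on the hole), and these create prism-like shortcuts that derail shortest-path reconstruction. Indeed, the Chudnovsky--Kawarabayashi--Seymour direct approach costs $O(n^{31})$, and their observation that one could reach $O(n^{15})$ \emph{if} prisms were detectable is blocked by the NP-hardness of prism detection. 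So the ``first, elementary step'' is exactly the step that fails.

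The paper's route is the opposite of what you sketch. Cleaning produces $O(m^{2}n)$ trackers $(H,u_1u_2u_3)$, each an induced subgraph together with a $3$-path, not an $O(1)$ gadget of edges/vertices/diamonds; what dominates the $O(m^{3}n^{5})$ bound is a separate $O(m^{3}n^{5})$ beetle-recognition pass via Chudnovsky--Seymour's three-in-a-tree, which you do not mention. Then, for each tracker, the algorithm \emph{does} decompose: first iteratively along full star-cutsets (using the $3$-path $u_1u_2u_3$ to pick the correct side in $O(1)$ branching, which is the paper's key new idea), then along non-path $2$-joins, until every piece is an extended clique tree or has $O(1)$ size; the da~Silva--Vu\v{s}kovi\'{c} theorem is invoked only at that terminal point, to guarantee that a piece with neither cutset is an extended clique tree. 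The per-tracker cost is $O(mn^{4})$, giving $O(m^{2}n)\cdot O(mn^{4})=O(m^{3}n^{5})$. Your plan neither includes the beetle step nor the star-cutset/2-join decomposition chain, and the substitute you offer---direct $O(n^{3})$ detection of a clean hole---is precisely the thing the decomposition machinery exists to avoid.
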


\paragraph{Technical overview}

The $O(n^{40})$-time algorithm of Conforti et
al.~\cite{ConfortiCKV02b} is based on their decomposition
theorem~\cite{ConfortiCKV02a} stating that a connected even-hole-free
graph either (i) is an extended clique tree or (ii) contains non-path
$2$-joins or $k$-star-cutsets with $k\in \{1,2,3\}$. The main body of
their algorithm recursively decomposes the input graph $G$ into a list
$\mathbbmsl{L}$ of a polynomial number of smaller or simpler graphs
using non-path $2$-joins or $k$-star-cutsets with $k\in \{1,2,3\}$
until each graph in $\mathbbmsl{L}$ does not contain any of the
mentioned cutsets.  Since even holes in extended clique trees can be
detected in polynomial time, it suffices for their algorithm to ensure
the even-hole-preserving condition of $\mathbbmsl{L}$: $G$ is
even-hole-free if and only if all graphs in $\mathbbmsl{L}$ are
even-hole-free.  To ensure the condition of $\mathbbmsl{L}$, their
algorithm requires a cleaning process to either detect an even hole in
$G$ or remove bad structures from $G$ before obtaining $\mathbbmsl{L}$
from $G$.  The $O(n^{31})$-time algorithm Chudnovsky et
al.~\cite{ChudnovskyKS05}, which is not based upon any decomposition
theorem but still requires the cleaning process, looks for even holes
directly. (The algorithms of Chudnovsky et al.~\cite{ChudnovskyCLSV05}
for recognizing perfect graphs are also of this type of
non-decomposition-based algorithms.)
The $O(n^{19})$-time algorithm of da~Silva and
Vu\v{s}kovi\'{c}~\cite{daSilvaV13}, adopting the decomposition-based
approach, relies on a stronger decomposition theorem stating that if a
connected even-hole-free graph has no star-cutsets and non-path
$2$-joins, then it is an extended clique tree.  Since $k$-star-cutsets
with $k\in\{2,3\}$ need not be taken into account, the decomposition
process is significantly simplified, leading to a much lower time
complexity.
Our $O(n^{11})$-time algorithm is also based on the brilliant
decomposition theorem of da~Silva and
Vu\v{s}kovi\'{c}~\cite{daSilvaV13}.  Our improvement is obtained via
the new idea of trackers, which allow for fewer graphs to be generated
in the process of decomposition using star-cutsets.  The cleaning
process is also sped up by an algorithm for recognizing beetle-free
graphs, based upon the three-in-a-tree algorithm of Chudnovsky and
Seymour~\cite{ChudnovskyS10}.

Specifically, our recognition algorithm for even-hole-free graphs
consists of two phases.
Throughout the paper, a {\em $k$-hole} (respectively, {\em $k$-cycle}
and {\em $k$-path}) is a $k$-node hole (respectively, cycle and path).
The first phase (see~Lemma~\ref{lemma:cleaning}) either (1) ensures
that the input graph $G$ contains even holes via the existence of a
``beetle'' (see~\S\ref{section:prelim} and
Figure~\ref{figure:fig2}(a)) or a $4$-hole in $G$ or (2) produces a
set~$\mathbbmsl{T}$ of ``trackers'' $(H,u_1u_2u_3)$ of $G$, where $H$
is a beetle-free and $4$-hole-free induced subgraph of $G$ and
$u_1u_2u_3$ is a $3$-path of $H$. $\mathbbmsl{T}$ satisfies the
following even-hole-preserving condition (see Condition~L1): If~$G$
contains even holes, then at least one element $(H, u_1u_2u_3)$ of
$\mathbbmsl{T}$ is ``lucky'' such that a shortest even hole $C$ of $G$
is a subgraph of $H$ and the following holds: (a) $u_1u_2u_3$ is a
path of $C$ and (b) the neighborhood of $C$ in $H$ is ``super clean''
(i.e., $M_H(C)=N^{2,2}_H(C)=N^{1,2}_H(C)=N^4_H(C)=\varnothing$ using
notation defined in~\S\ref{section:prelim}).  The second phase applies
an algorithm (see Lemma~\ref{lemma:decomposition}) on each tracker
$(H,u_1u_2u_3)\in\mathbbmsl{T}$ to either ensure that $H$ contains
even holes or ensure that $(H,u_1u_2u_3)$ is not lucky.  If all
trackers in $\mathbbmsl{T}$ are not lucky, then the
even-hole-preserving condition of $\mathbbmsl{T}$ implies that $G$ is
even-hole-free.  Otherwise, an induced subgraph~$H$ of $G$ contains an
even hole, implying that $G$ contains even holes.

The recognition algorithm for beetle-free graphs (see the proof of
Lemma~\ref{lemma:cleaning}) in the first phase is based on Chudnovsky
and Seymour's three-in-a-tree algorithm~\cite{ChudnovskyS10} (see
Theorem~\ref{lemma:three_tree}). If $G$ contains beetles or $4$-holes,
then $G$ contains even holes. Otherwise, if $G$ contains even holes,
then the neighborhood of each shortest even hole $C$ of $G$ is
``clean'' (i.e., $N^{1,2}_G(C)=N^4_G(C)=\varnothing$, see the proof of
Lemma~\ref{lemma:hole-node}). To further ensure that the neighborhood
of $C$ is super clean, we generate a set $\mathbbmsl{S}$ of ``super
cleaners'' $(S,u_1u_2u_3)$, where $S$ is a node subset of $G$ and
$u_1u_2u_3$ is a path of $G$, such that at least one super cleaner
$(S,u_1u_2u_3)\in\mathbbmsl{S}$ satisfies $u_1u_2u_3\subseteq
C\subseteq H=G\setminus S$ and $M_H(C)=N^{2,2}_H(C)=\varnothing$ for
some shortest even hole $C$ of $G$ (see the proof of
Lemma~\ref{lemma:cleaning}). The set $\mathbbmsl{T}$ consisting of the
trackers $(G\setminus S, u_1u_2u_3)$ with $(S,
u_1u_2u_2)\in\mathbbmsl{S}$ satisfies the required
even-hole-preserving condition. The underlying technique of guessing
``bad nodes'' of $G$ (using
Lemmas~\ref{lemma:short-prism},~\ref{lemma:major},
and~\ref{lemma:gate}) to be removed by super cleaners is called
``cleaning'' in the literature (see, e.g.,
Vu\v{s}kovi\'{c}~\cite[\S4]{Vuskovic10}).

The algorithm applied on each tracker $T=(H,
u_1u_2u_3)\in\mathbbmsl{T}$ in the second phase relies on the
decomposition theorem of da~Silva
and~Vu\v{s}kovi\'{c}~\cite{daSilvaV13} (see
Theorem~\ref{lemma:structure-new}).  Since even holes can be
efficiently detected in an extended clique tree (see
Lemma~\ref{lemma:extended-clique-tree-even-hole}, which is a slightly
faster implementation of the algorithm of da~Silva
and~Vu\v{s}kovi\'{c}~\cite{daSilvaV13}), our algorithm performs two
stages of even-hole-preserving decompositions on $H$, first via
star-cutsets and then via non-path $2$-joins, until each of the
resulting graphs either is an extended clique tree or has $O(1)$
nodes.  If all of the resulting graphs are even-hole-free, then $T$ is
not lucky; otherwise, $H$ contains even holes.  As noted by
Chv\'{a}tal~\cite{Chvatal85} (see Lemma~\ref{lemma:fullstar}), if $H$
has no dominated nodes, then a star-cutset of $H$ has to be a full
star-cutset of $H$, which can be efficiently detected.  Thus, at the
beginning of each decomposition in the first stage, we preprocess
$(H,u_1u_2u_3)$ by deleting all dominated nodes of $H$ and carefully
updating nodes $u_1$, $u_2$, and $u_3$ such that the luckiness of
$(H,u_1u_2u_3)$ is preserved (see Lemma~\ref{lemma:dominate-find}).
The correctness of this preprocessing step relies on the fact that $H$
is beetle-free and the requirement for $(H,u_1u_2u_3)$ to be lucky
that the neighborhood of some shortest even hole $C$ in $H$ with
$u_1u_2u_3\subseteq C$ is super clean.  Path $u_1u_2u_3$ is crucial in
the stage of decompositions via star-cutsets for the graph $H$ having
no dominated nodes.  Specifically, if $S$ is a star-cutset of $H$,
then by merely examining the neighborhood of path $u_1u_2u_3$ in $H$,
we can efficiently identify a connected component $B$ of $H\setminus
S$ such that $(H[S\cup B], u_1u_2u_3)$ preserves the luckiness of
$(H,u_1u_2u_3)$ (see Step~\ref{dominated3} in the proof of
Lemma~\ref{lemma:star-cutset}).  We then let $H=H[C\cup B]$ and repeat
the above procedure for $O(n)$ iterations until $H$ has no
star-cutsets.  The second stage, i.e., decompositions via non-path
$2$-joins for graphs having no star-cutsets, is based upon the
detection algorithm for non-path $2$-joins of Charbit et
al.~\cite{CharbitHTV12} (see Lemma~\ref{lemma:non-path-2-join}).  This
stage decomposes an $m$-edge graph having no star-cutsets into a set
of $O(m)$ smaller graphs, each of which either consists of~$O(1)$
nodes or is an extended clique tree (see the proof of
Lemma~\ref{lemma:no-star-cutsets}).

\paragraph{Related work}
Even-hole-free planar graphs~\cite{Porto92con} can be recognized in
$O(n^3)$ time.  It is NP-complete to determine whether a graph
contains an even (respectively, odd) hole that passes a given
node~\cite{Bienstock91,Bienstock92}.  The Strong Perfect Graph Theorem
of Chudnovsky, Robertson, Seymour, and Thomas~\cite{ChudnovskyRST06}
states that a graph $G$ is perfect if and only if both $G$ and the
complement of $G$ are odd-hole-free.  Although perfect graphs can be
recognized in $O(n^9)$ time~\cite{ChudnovskyCLSV05}, the tractability
of recognizing odd-hole-free graphs remains open (see,
e.g.,~\cite{Johnson05}).  Polynomial-time algorithms for detecting odd
holes are known for planar graphs~\cite{Hsu87}, claw-free
graphs~\cite{ShremSG10, KennedyK11un}, and graphs with bounded clique
numbers~\cite{ConfortiCLVZ06}.  Graphs containing no holes (i.e.,
chordal graphs) can be recognized in $O(m+n)$
time~\cite{TarjanY84,TarjanY85, RoseT75con, RoseTL76}.  Graphs
containing no holes consisting of five or more nodes (i.e., weakly
chordal graphs) can be recognized in $O(m^2+n)$
time~\cite{NikolopoulosP04con, NikolopoulosP07}.  It takes $O(n^2)$
time to detect a hole that passes any $o((\log n / \log \log
n)^{2/3})$ given nodes in an $O(1)$-genus
graph~\cite{KawarabayashiK09con, KawarabayashiK10}.
See~\cite{ChudnovskyRST10, Zwols10, Defossez09, ConfortiCV04} for more
results on odd-hole-free graphs.

\paragraph{Road map}
The rest of the paper is organized as follows.
Section~\ref{section:prelim} gives the preliminaries and proves
Theorem~\ref{theorem:theorem1} by Lemmas~\ref{lemma:cleaning}
and~\ref{lemma:decomposition}.  Section~\ref{section:cleaning} proves
Lemma~\ref{lemma:cleaning}.  Section~\ref{section:decomposition}
proves Lemma~\ref{lemma:decomposition}. Section~\ref{section:conclude}
concludes the paper by explaining how to augment our proof of
Theorem~\ref{theorem:theorem1} into an~$O(m^3 n^5)$-time algorithm
that outputs an even hole of an $n$-node $m$-edge graph containing
even holes.


\section{Preliminaries and the topmost structure of our proof}
\label{section:prelim}

Unless clearly specified otherwise, all graphs throughout the paper
are simple and undirected.  Let $|S|$ denote the cardinality of set
$S$.  Let $G$ be a graph.  Let $V(G)$ consist of the nodes in $G$.
For any subgraph $H$ of $G$, let $G[H]$ denote the subgraph of $G$
induced by $V(H)$.  Subgraphs $H$ and $H'$ of graph $G$ are {\em
  adjacent} in $G$ if some node of $H$ and some node of $H'$ are
adjacent in $G$.  For any subset~$U$ of $V(G)$, let $G\setminus
U=G[V(G)\setminus U]$.  For any subgraph $H$ of $G$, let $N_G(H)$
consist of the nodes of $V(G)\setminus V(H)$ that are adjacent to $H$
in~$G$ and let $N_G[H]=N_G(H)\cup V(H)$.

\begin{figure}[t]
\centerline{\scalebox{0.9}{\input{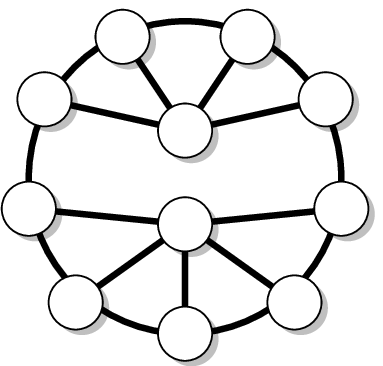}}}
\caption{$C=v_1c_2c_3v_2c_7c_8v_1$ is a clean even hole of the
  $11$-node graph $G$, since $M_G(C)=N_G^{2,2}(C)=\varnothing$.
  $C'=c_1c_2\cdots c_9c_1$ is an odd hole with $M_G(C')=\{v_2\}$.}
\label{figure:fig1}
\end{figure}

Let $C$ be a hole of $G$.  Let $x$ be a node in $V(G)\setminus V(C)$.
Let $N_C(x)=N_G(x)\cap V(C)$. We say that $x$ is a {\em major
  node}~\cite{ChudnovskyKS05} of $C$ in $G$ if at least three distinct
nodes of $N_C(x)$ are pairwise non-adjacent in $G$.  Let $M_G(C)$
consist of the major nodes of~$C$ in $G$.  For instance, in
Figure~\ref{figure:fig1}, $M_G(C)=\varnothing$ and $M_G(C')=\{v_2\}$.

\begin{lemma}[{Chudnovsky et al.~{\cite[Lemma 2.2]{ChudnovskyKS05}}}]
\label{lemma:hole-parity}
If $C$ is a shortest even hole of graph $G$ and $x\in M_G(C)$, then 
$|N_C(x)|$ is even.
\end{lemma}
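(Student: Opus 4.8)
The plan is to study how the neighbours of $x$ cut the cycle $C$ into arcs and to play off the minimality of $C$ on each arc against the parity of $|V(C)|$. First I would set up notation: write $|V(C)|=2k$, and list the nodes of $N_C(x)$ in the cyclic order in which they occur along $C$ as $a_1,a_2,\dots,a_t$, where $t=|N_C(x)|\ge 3$ since $x$ is a major node of $C$. For each index $i$ (taken modulo $t$) let $P_i$ be the arc of $C$ from $a_i$ to $a_{i+1}$ whose internal nodes are not neighbours of $x$, and let $\ell_i\ge 1$ be the number of edges of $P_i$; since these arcs partition the edge set of $C$ we have $\ell_1+\cdots+\ell_t=|V(C)|=2k$.

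The basic observation I would record is that whenever $\ell_i\ge 2$, the set $V(P_i)\cup\{x\}$ induces a hole of $G$ of length $\ell_i+2$. Indeed, $P_i$ is an induced path, being a subpath of the induced cycle $C$, and $x$ is adjacent to exactly the two endpoints $a_i,a_{i+1}$ of $P_i$, because the internal nodes of $P_i$ lie strictly between two consecutive neighbours of $x$ on $C$; hence the induced subgraph on $V(P_i)\cup\{x\}$ is a chordless cycle, and it has $\ell_i+2\ge 4$ nodes.

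The heart of the argument is the claim that every $\ell_i$ is odd. Suppose some $\ell_j$ is even; then $\ell_j\ge 2$, so by the observation $G[V(P_j)\cup\{x\}]$ is an even hole, and since $C$ is a shortest even hole we get $\ell_j+2\ge|V(C)|$. Combining this with $|V(C)|=\sum_i\ell_i\ge\ell_j+(t-1)$ yields $|V(C)|\ge(|V(C)|-2)+(t-1)$, i.e.\ $t\le 3$, so $t=3$. But then $\ell_j=|V(C)|-2$ forces the two remaining arcs to have exactly one edge each, so two of the pairs $\{a_1,a_2\}$, $\{a_2,a_3\}$, $\{a_3,a_1\}$ are edges of $C$; consequently $N_C(x)=\{a_1,a_2,a_3\}$ has no three pairwise non-adjacent nodes, contradicting $x\in M_G(C)$. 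This establishes the claim.

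Finally, since each $\ell_i$ is odd and $\ell_1+\cdots+\ell_t=2k$ is even, the number $t$ of summands must be even; that is, $|N_C(x)|=t$ is even, which is the assertion. The only step demanding real care is the claim just proved — specifically, confirming that $V(P_i)\cup\{x\}$ induces a genuinely chordless cycle of length at least four, and disposing of the borderline case $t=3$, where one needs the full strength of ``major'' (three \emph{pairwise non-adjacent} neighbours, not merely three neighbours) rather than just $t\ge 3$. Everything else is bookkeeping with arc lengths and parity.
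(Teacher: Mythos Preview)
Your argument is correct. The paper does not supply its own proof of this lemma; it is quoted verbatim as Lemma~2.2 of Chudnovsky, Kawarabayashi, and Seymour, so there is nothing in the present paper to compare against. Your arc-decomposition argument---showing that an even-length arc would yield a shorter even hole unless $t=3$, and that $t=3$ with two unit arcs violates the definition of a major node---is precisely the standard route, and your remark that the full strength of ``three pairwise non-adjacent neighbours'' (not just three neighbours) is what kills the $t=3$ case is exactly the right point to flag.
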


\noindent
If $x\in N_G(C)\setminus M_G(C)$, then $1\leq |N_C(x)|\leq 4$ and
$C[N_C(x)]$ has at most two connected components.  Moreover, if
$C[N_C(x)]$ is not connected, then each connected component of
$C[N_C(x)]$ has at most two nodes.  Let $N_G^{i}(C)$ with $1\leq i\leq
4$ consist of the nodes $x\in N_G(C)\setminus M_G(C)$ such that
$|N_C(x)|=i$ and $C[N_C(x)]$ is connected.  Let $N_G^{i,j}(C)$ with
$1\leq i\leq j\leq 2$ consist of the nodes $x\in N_G(C)\setminus
M_G(C)$ such that $C[N_C(x)]$ is not connected and the two connected
components of $C[N_C(x)]$ has $i$ and $j$ nodes, respectively.  We
have
\begin{equation}
\label{equation:partition}
N_G(C)=N_G^{1}(C)\cup N_G^{2}(C)\cup N_G^{3}(C)\cup N_G^{4}(C)\cup
N_G^{1,1}(C)\cup N_G^{1,2}(C)\cup N_G^{2,2}(C)\cup M_G(C).
\end{equation}
We say that $C$ is a {\em clean} hole of $G$ if
$M_G(C)=N_G^{2,2}(C)=\varnothing$.  We say that $C$ is a {\em
  $u_1u_2u_3$-hole} of $G$ if $u_1u_2u_3$ is a $3$-path of $C$ and $C$
is a clean shortest even hole of $G$.  For instance, if $G$ is as
shown in Figure~\ref{figure:fig1}, then $C=v_1c_2c_3v_2c_7c_8v_1$ is a
$v_1c_2c_3$-hole of $G$.  If $H$ is an induced subgraph of $G$ and
$u_1u_2u_3$ is a $3$-path of $H$, then we call $(H,u_1u_2u_3)$ a {\em
  tracker} of $G$.  A tracker~$(H,u_1u_2u_3)$ of $G$ is {\em lucky} if
there is a $u_1u_2u_3$-hole of $H$.  If there are lucky trackers of
$G$, then $G$ contains even holes. Therefore, a set $\mathbbmsl{T}$ of
trackers of $G$ satisfying the following even-hole-preserving
condition reduces the problem of determining whether $G$ is
even-hole-free to the problem of determining whether all trackers in
$\mathbbmsl{T}$ are not lucky:
\begin{enumerate}[\em {Condition~L}1:]
\item 
If $G$ contains even holes, then at least one element of
$\mathbbmsl{T}$ is a lucky tracker of $G$.
\end{enumerate}

\begin{figure}[t]
\centerline{\scalebox{0.9}{\input{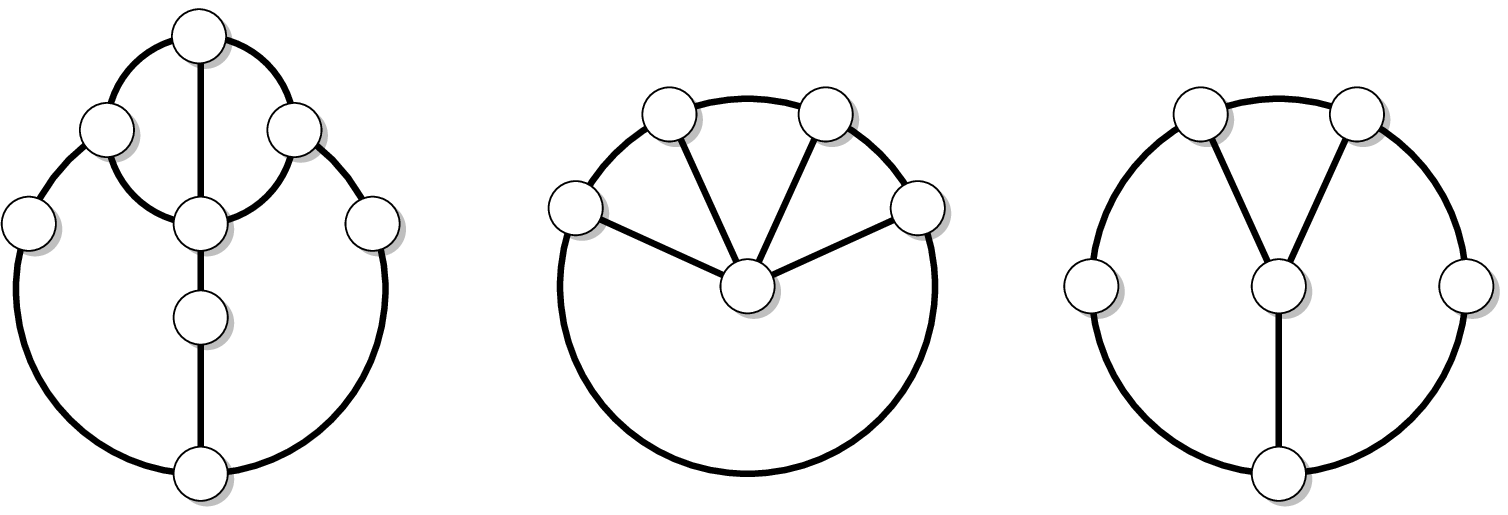}}}
\caption{(a) A beetle $B$, where $B[\{b_1,b_2,b_3,b_4\}]$ is a
  diamond. (b) If $x\in N_G^4(C)$, then $G[C\cup\{x\}]$ is a beetle
  $B$, where $B[\{u_1,u_2,u_3,x\}]$ is a diamond. (c) A node $x\in
  N_G^{1,2}(C)$.}
\label{figure:fig2}
\end{figure}

An induced subgraph $B$ of $G$ is a {\em beetle} of $G$ if $B$
consists of
(1) a $4$-cycle $b_1b_2b_3b_4b_1$ with exactly one chord $b_2b_4$
(i.e., a {\em diamond}~\cite{KloksMV09,ConfortiCKV02b} of $G$) and
(2) a tree $I$ of $G\setminus\{b_4\}$ having exactly three
leaves~$b_1$, $b_2$, and $b_3$ with the property that
$I\setminus\{b_1,b_2,b_3\}$ is an induced tree of $G$ not adjacent to
$b_4$.  See Figure~\ref{figure:fig2}(a) for an illustration.  Node
$b_5$ (respectively, $b_6$ and $b_7$) is the neighbor of $b_1$
(respectively, $b_2$ and $b_3$) in $I$.  Node $b_8$ is the only
degree-$3$ node of $I$.
Note that at least one of the three cycles in $B\setminus\{b_2\}$,
$B\setminus\{b_1,b_4\}$, and $B\setminus \{b_3,b_4\}$ is an even hole
of $G$.
Nodes $b_5$, $b_6$, $b_7$, and $b_8$ need not be distinct.  For
instance, as illustrated by Figure~\ref{figure:fig2}(b), if $C$ is a
hole of $G$ and $x$ is a node of $N_G^4(C)$, then $G[C\cup \{x\}]$ is
a beetle of~$G$.

\begin{lemma}
\label{lemma:hole-node}
If $G$ is a beetle-free graph, then $N_G(C) \subseteq N_G^{1,1}(C)\cup
N_G^1(C)\cup N_G^2(C)\cup N_G^3(C)$ holds for any clean shortest even
hole $C$ of $G$.
\end{lemma}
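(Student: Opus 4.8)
The plan is to peel off the easy cases via the partition~(\ref{equation:partition}) and then do the real work on the single remaining case. Since $C$ is a clean hole, $M_G(C)=N_G^{2,2}(C)=\varnothing$ by definition, so~(\ref{equation:partition}) reduces the lemma to the two disjointness claims (i)~$N_G^4(C)=\varnothing$ and (ii)~$N_G^{1,2}(C)=\varnothing$. Claim~(i) is immediate: by the observation recorded just before the lemma (illustrated in Figure~\ref{figure:fig2}(b)), any $x\in N_G^4(C)$ would make $G[C\cup\{x\}]$ a beetle of $G$, contradicting beetle-freeness; note this step uses only that $C$ is a hole, not that it is even or shortest.

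For claim~(ii) I would instead use that $C$ is a \emph{shortest} even hole. Suppose $x\in N_G^{1,2}(C)$. Then $N_C(x)=\{v_1,v_2,u\}$, where $v_1v_2$ is an edge of $C$ (the $2$-node component of $C[N_C(x)]$) and $u$ (the $1$-node component) is non-adjacent in $C$, hence in $G$ since $C$ is induced, to both $v_1$ and $v_2$. Write $C$ as the concatenation of a path $P_1$ from $u$ to $v_1$, the edge $v_1v_2$, and a path $P_2$ from $v_2$ to $u$, where $P_1$ and $P_2$ meet only at $u$; counting edges, $|C|=|P_1|+|P_2|+1$, and moreover $|P_1|\ge 2$ and $|P_2|\ge 2$ because $u\not\sim v_1$ and $u\not\sim v_2$.

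The key point is that $|C|$ is even, so $|P_1|+|P_2|$ is odd and exactly one of $|P_1|,|P_2|$ is even; by the symmetry between $v_1$ and $v_2$ assume $|P_1|$ is even, so $|P_2|$ is odd and therefore $|P_2|\ge 3$. Let $C'$ be the cycle formed by $P_1$ together with the edges $xv_1$ and $xu$. One checks that $C'$ is induced in $G$: the vertices of $P_1$ induce exactly $P_1$ since $P_1$ is a subpath of the induced cycle $C$, and $x$ is adjacent to exactly $v_1$ and $u$ among $V(P_1)$ because $N_C(x)=\{v_1,v_2,u\}$ and $v_2\notin V(P_1)$. Hence $C'$ is a hole with $|C'|=|P_1|+2$, which is even and at least $4$, while $|C'|=|C|-|P_2|+1\le|C|-2<|C|$, contradicting the choice of $C$ as a shortest even hole. (If instead $|P_2|$ is the even one, use the mirror cycle built from $P_2$ and the edges $xv_2$, $xu$.) Thus $N_G^{1,2}(C)=\varnothing$, and together with claim~(i) this proves the lemma.

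I expect the only delicate step to be verifying that the replacement cycle $C'$ is genuinely induced and has the same parity as $C$; this is routine provided one is careful with the facts that $C$ is an induced cycle and that $x$ sees exactly three vertices of $C$, arranged as one edge plus one isolated vertex. The reduction via~(\ref{equation:partition}) and the beetle argument for $N_G^4(C)$ require nothing beyond definitions and the observation already stated before the lemma.
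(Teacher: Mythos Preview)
Your proposal is correct and follows essentially the same argument as the paper's proof: reduce via Equation~(\ref{equation:partition}) and cleanness to ruling out $N_G^4(C)$ and $N_G^{1,2}(C)$, dispatch the first by the beetle observation, and for the second build a shorter even hole by replacing one arc of $C$ between $u$ and $\{v_1,v_2\}$ with the two edges through $x$. Your explicit parity and inducedness checks for $C'$ are more detailed than the paper's terse ``either $G[\{x\}\cup P_1]$ or $G[\{x\}\cup P_2]$ is an even hole shorter than $C$'', but the content is identical.
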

\begin{proof}
By $M_G(C)=N_G^{2,2}(C)=\varnothing$ and
Equation~(\ref{equation:partition}), it suffices to show
$N_G^{1,2}(C)=N_G^{4}(C)=\varnothing$.  If~$x\in N_G^4(C)$ as
illustrated by Figure~\ref{figure:fig2}(b), then $G[C \cup \{x\}]$ is
a beetle of $G$, a contradiction.  If~$x\in N_G^{1,2}(C)$, then let
$u$, $v_1$, and $v_2$ be the nodes of $N_C(x)$ such that $v_1$ and
$v_2$ are adjacent in $C$, as illustrated by
Figure~\ref{figure:fig2}(c).  Let $P_1$ be the path of~$C\setminus
\{v_2\}$ between $u$ and $v_1$.  Let $P_2$ be the path of~$C\setminus
\{v_1\}$ between $u$ and $v_2$.  Either $G[\{x\}\cup P_1]$ or
$G[\{x\}\cup P_2]$ is an even hole of $G$ shorter than~$C$, a
contradiction.
\end{proof}

\subsection{Proving Theorem~\ref{theorem:theorem1}}

\begin{lemma}
\label{lemma:cleaning}
It takes $O(m^3n^5)$ time to complete either one of the following
tasks for any $n$-node $m$-edge graph $G$. Task~1: Ensuring that $G$
contains even holes.  Task~2: (a) Ensuring that $G$ is beetle-free and
(b) obtaining a set $\mathbbmsl{T}$ of $O(m^2 n)$ trackers of $G$ that
satisfies Condition~L1.
\end{lemma}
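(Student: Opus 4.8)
The plan is to carry out the two tasks in order: first test whether $G$ contains a beetle or a $4$-hole, and only if it does not, construct the family $\mathbbmsl{T}$ of trackers. A $4$-hole is found directly, and beetle-freeness is decided by a reduction to a polynomial number of three-in-a-tree instances (Theorem~\ref{lemma:three_tree}): roughly, for each of the $O(mn^2)$ diamonds $b_1b_2b_3b_4$ with chord $b_2b_4$ of $G$, one looks for an induced tree whose leaves are $b_1$, $b_2$, $b_3$ and which avoids $b_4$ together with every neighbor of $b_4$ outside $\{b_1,b_2,b_3\}$, and this is posed as a three-in-a-tree question on a suitably built auxiliary graph. If a beetle $B$ or a $4$-hole is produced, then $G$ contains an even hole---$4$-holes are even, and for a beetle one of the three holes $B\setminus\{b_2\}$, $B\setminus\{b_1,b_4\}$, $B\setminus\{b_3,b_4\}$ is even by the observation following the definition of beetle---so Task~1 is complete. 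Otherwise Task~2(a) holds: $G$ is beetle-free (and also $4$-hole-free). The cost of this step is the number of three-in-a-tree calls times the cost of one call, which fits within the claimed bound.

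Assume henceforth that $G$ is beetle-free and $4$-hole-free; I must produce a set $\mathbbmsl{T}$ of $O(m^2n)$ trackers obeying Condition~L1, namely that if $G$ has an even hole then some tracker in $\mathbbmsl{T}$ is lucky. Fix a shortest even hole $C$ of $G$, so $|V(C)|\ge 6$. The strategy is to realize every tracker as $(G\setminus S,\,u_1u_2u_3)$ with $u_1u_2u_3$ a $3$-path of $G$ and $S\subseteq V(G)$, and to arrange that, for one of the pairs produced, $u_1u_2u_3$ is a $3$-path of $C$ and $S$ satisfies
\[
M_G(C)\cup N_G^{2,2}(C)\ \subseteq\ S\ \subseteq\ V(G)\setminus V(C).
\]
Given such an $S$, put $H=G\setminus S$. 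Then $V(C)\subseteq V(H)$, so $C$ is a hole of $H$, in fact a shortest even hole of $H$ because every even hole of $H$ is an even hole of $G$. Whether a node $x\notin V(C)$ is a major node of $C$, or lies in $N_G^{2,2}(C)$, depends only on the adjacencies among $V(C)\cup\{x\}$, which $H$ inherits from $G$; hence $M_H(C)=M_G(C)\cap V(H)=\varnothing$ and $N_H^{2,2}(C)=N_G^{2,2}(C)\cap V(H)=\varnothing$, so $C$ is a clean shortest even hole of $H$. As an induced subgraph of the beetle-free graph $G$, the graph $H$ is beetle-free, so Lemma~\ref{lemma:hole-node} additionally gives $N_H^{1,2}(C)=N_H^4(C)=\varnothing$: the neighborhood of $C$ in $H$ is super clean, and $(H,u_1u_2u_3)$ is a lucky tracker. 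Hence it suffices to generate a family $\mathbbmsl{S}$ of $O(m^2n)$ pairs $(S,u_1u_2u_3)$, at least one of which meets the displayed inclusions, and then set $\mathbbmsl{T}=\{(G\setminus S,u_1u_2u_3):(S,u_1u_2u_3)\in\mathbbmsl{S}\}$.

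The core is therefore the cleaning step: from data about $G$ guessed without knowledge of $C$, define $S$ so that it contains all bad nodes $M_G(C)\cup N_G^{2,2}(C)$ of the shortest even hole $C$ yet avoids $V(C)$. I would carry this out through the cleaning lemmas of Section~\ref{section:cleaning} (Lemmas~\ref{lemma:short-prism},~\ref{lemma:major},~\ref{lemma:gate}): using Lemma~\ref{lemma:hole-parity} (for a major node $x$, $|N_C(x)|$ is even, hence at least $4$) together with the beetle- and $4$-hole-freeness of $G$, one shows that after guessing $O(1)$ nodes or edges of $G$ that meet $C$ in a prescribed way, the set of all nodes satisfying an explicit local condition relative to the guess both contains $M_G(C)\cup N_G^{2,2}(C)$ and is disjoint from $V(C)$; taking $S$ to be this set, and letting $u_1u_2u_3$ range over the $3$-paths of $G$ (for the correct guess one of them lies on $C$), produces $\mathbbmsl{S}$. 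For the count, the $3$-path contributes a factor $O(mn)$ (the number of $3$-paths of $G$) and the remaining guess---essentially one edge---a factor $O(m)$, so $|\mathbbmsl{S}|=|\mathbbmsl{T}|=O(m^2n)$; each $S$ is computed from its guess by neighborhood and reachability computations in polynomial time, so the whole construction stays within $O(m^3n^5)$. I expect the cleaning lemmas to be the principal obstacle: selecting the right local structure to guess, proving it captures $M_G(C)\cup N_G^{2,2}(C)$ without deleting a vertex of $C$, and doing this with only $O(m)$ guesses beyond the choice of $3$-path and with per-guess work cheap enough to meet the $O(m^3n^5)$ bound.
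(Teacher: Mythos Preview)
Your high-level plan matches the paper's: detect beetles and $4$-holes first; if none are found, build trackers of the form $(G\setminus S,\,u_1u_2u_3)$ with $S$ designed so that $M_G(C)\cup N_G^{2,2}(C)\subseteq S\subseteq V(G)\setminus V(C)$. However, the cleaning step has a genuine gap.

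You assert that the guess beyond the $3$-path is ``essentially one edge.'' For $N_G^{2,2}(C)$ this is fine via Lemma~\ref{lemma:short-prism}. For $M_G(C)$, Lemma~\ref{lemma:major} gives a dichotomy: either $M_G(C)\subseteq N_G(u_2)$ for some $u_2\in V(C)$ (and then the $3$-path guess handles it), or $G[M_G(C)]$ is a clique. In the second case no vertex of $C$ dominates $M_G(C)$, and you propose nothing that captures an arbitrary clique of major nodes while staying disjoint from $V(C)$ using only an edge guess. The paper's solution is to guess a \emph{maximal clique} $K\supseteq M_G(C)$ and take $S=(N_G(u_1)\cap N_G(u_2))\cup V(K)$; to keep the number of such guesses at $O(m)$ it first invokes Lemmas~\ref{lemma:clique1} and~\ref{lemma:clique2}, which either produce an even hole or certify that $G$ has at most $n+2m$ maximal cliques (and list them). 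You cite Lemmas~\ref{lemma:short-prism}, \ref{lemma:major}, \ref{lemma:gate} but not the clique lemmas, and without them the $O(m^2n)$ bound on $|\mathbbmsl{T}|$ is unsupported in the clique case.

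A secondary issue concerns your beetle test. You enumerate only the $O(mn^2)$ diamonds and defer the rest to ``a suitably built auxiliary graph,'' but in the diamond $b_1b_2b_3b_4$ one has $b_1b_2,b_2b_3\in E(G)$, so three-in-a-tree applied directly to $\{b_1,b_2,b_3\}$ always succeeds with the trivial induced path $b_1b_2b_3$, which tells you nothing about beetles. The paper avoids this by additionally guessing the three edges $b_1b_5,\,b_2b_6,\,b_3b_7$ leaving the leaves, deleting $(N_G[b_1]\cup\cdots\cup N_G[b_4])\setminus\{b_5,b_6,b_7\}$, and running three-in-a-tree on $\{b_5,b_6,b_7\}$; this yields $O(m^3n)$ instances rather than $O(mn^2)$, which is what drives the overall $O(m^3n^5)$ bound.
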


\begin{lemma}
\label{lemma:decomposition}
Given a tracker $T=(H,u_1u_2u_3)$ of an $n$-node $m$-edge beetle-free
graph $G$, it takes $O(mn^4)$ time to either ensure that $H$ contains
even holes or ensure that $T$ is not lucky.
\end{lemma}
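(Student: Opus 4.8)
The plan is to carry out the two-stage, even-hole-preserving decomposition described above, using the structure theorem of da~Silva and Vu\v{s}kovi\'{c} (Theorem~\ref{lemma:structure-new}) as the stopping rule: a connected even-hole-free graph with neither a star-cutset nor a non-path $2$-join is an extended clique tree, and even holes in an extended clique tree are detected quickly (Lemma~\ref{lemma:extended-clique-tree-even-hole}).  We would maintain a single current tracker $(\hat H,\hat p)$, initialized as follows: if $u_1$, $u_2$, $u_3$ do not all lie in one connected component of $H$, then no hole of $H$ contains the path $u_1u_2u_3$, so $T$ is not lucky and we report Task~2; otherwise we set $\hat p=u_1u_2u_3$ and let $\hat H$ be the component of $H$ containing $u_1$, $u_2$, and $u_3$.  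We would then shrink $\hat H$ step by step, preserving two invariants: (i) $\hat H$ is a connected induced subgraph of $H$, hence of $G$, and hence beetle-free, so by Lemma~\ref{lemma:hole-node} every clean shortest even hole of $\hat H$ is in fact super clean; and (ii) luckiness is preserved, i.e., if $T$ is lucky then $(\hat H,\hat p)$ is lucky.  Every reduction step below either keeps both invariants or exhibits an even hole of $\hat H$; in the latter case, since $\hat H$ is an induced subgraph of $H$, Task~1 is already done.

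\emph{Stage 1 (removing star-cutsets).}  We would iterate the following.  First apply Lemma~\ref{lemma:dominate-find} to preprocess $(\hat H,\hat p)$: delete every dominated node of $\hat H$ and update the three marked nodes so that invariants (i) and (ii) are maintained; afterwards $\hat H$ has no dominated node.  By Chv\'{a}tal's observation (Lemma~\ref{lemma:fullstar}), every star-cutset of $\hat H$ is then a full star-cutset $N_{\hat H}[v]$, so we can find one --- or conclude that none exists --- by testing, for each node $v$ of $\hat H$, whether $\hat H\setminus N_{\hat H}[v]$ is disconnected.  If none exists, we leave the loop.  Otherwise, given the star-cutset $S$, we apply Lemma~\ref{lemma:star-cutset}, which by examining the neighborhood of $\hat p$ in $\hat H$ selects a connected component $B$ of $\hat H\setminus S$ such that replacing $\hat H$ by the (connected) induced subgraph $\hat H[S\cup B]$ maintains invariants (i) and (ii); we make this replacement.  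Since $S\cup B$ omits at least one node of $\hat H$, each iteration strictly decreases $|V(\hat H)|$, so after $O(n)$ iterations $\hat H$ has neither a dominated node nor a star-cutset.

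\emph{Stage 2 (removing non-path $2$-joins, and concluding).}  We would then apply Lemma~\ref{lemma:no-star-cutsets}, which --- using the non-path $2$-join detection algorithm of Charbit et al.\ (Lemma~\ref{lemma:non-path-2-join}) together with Theorem~\ref{lemma:structure-new} and Lemma~\ref{lemma:extended-clique-tree-even-hole} --- decomposes the now star-cutset-free graph $\hat H$ into $O(m)$ graphs, each of $O(1)$ size or an extended clique tree, such that $\hat H$ is even-hole-free if and only if all of these $O(m)$ graphs are even-hole-free.  We test each such graph for an even hole, by brute force for the $O(1)$-node ones and by Lemma~\ref{lemma:extended-clique-tree-even-hole} for the extended clique trees.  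If one of them has an even hole, then so does $\hat H$, hence so does $H$, and we report Task~1.  If none of them has an even hole, then $\hat H$ is even-hole-free; a fortiori $(\hat H,\hat p)$ is not lucky, so by invariant (ii) the tracker $T$ is not lucky, and we report Task~2.

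\emph{Running time and the main obstacle.}  Stage 1 runs $O(n)$ iterations, each costing the time of Lemma~\ref{lemma:dominate-find}, plus $O(n(n+m))$ to enumerate the candidate full star-cutsets and test their connectivity, plus the time of Lemma~\ref{lemma:star-cutset}; Stage 2 costs the time of Lemma~\ref{lemma:no-star-cutsets} plus $O(m)$ even-hole tests via Lemma~\ref{lemma:extended-clique-tree-even-hole}.  Verifying that the total is $O(mn^4)$ should be routine given the time bounds stated in those lemmas.  The substantive difficulty lies not in the time analysis but in the correctness of the two luckiness-preserving reductions, Lemmas~\ref{lemma:dominate-find} and~\ref{lemma:star-cutset}: one must argue that deleting a dominated node, and then keeping only the single component $B$ of $\hat H\setminus S$ chosen via the neighborhood of $\hat p$, never destroys every super-clean shortest even hole of $\hat H$ through $\hat p$.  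This is exactly where beetle-freeness (via Lemma~\ref{lemma:hole-node}, which upgrades ``clean'' to ``super clean'') and the super-cleanliness built into the definition of luckiness come into play, and I expect it to be the heart of the argument.
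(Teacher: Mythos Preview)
Your approach is correct and essentially identical to the paper's; the paper simply packages your entire Stage~1 loop as Lemma~\ref{lemma:star-cutset}, so its proof of Lemma~\ref{lemma:decomposition} is just two lines (apply Lemma~\ref{lemma:star-cutset} to get an induced subgraph $H'$ with no star-cutsets, then apply Lemma~\ref{lemma:no-star-cutsets} to $H'$).  One mislabeling worth fixing: what you invoke inside the loop as ``Lemma~\ref{lemma:star-cutset}'' to select the component $B$ is in fact Lemma~\ref{lemma:block}, which gives the dichotomy that either such a $B$ exists or two non-adjacent nodes of $S$ have two common neighboring components of $\hat H\setminus S$ (directly yielding an even hole); Lemma~\ref{lemma:star-cutset} is the whole Stage~1 procedure, not the per-iteration step.
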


\begin{proof}[Proof of Theorem~\ref{theorem:theorem1}]
We apply Lemma~\ref{lemma:cleaning} on $G$ in $O(m^3n^5)$ time.  If
Task~1 is completed, then the theorem is proved.  If Task~2 is
completed, then $G$ is beetle-free and we have a set $\mathbbmsl{T}$
of $O(m^2 n)$ trackers of~$G$ that satisfies Condition~L1.  By
Condition~L1 of~$\mathbbmsl{T}$ and Lemma~\ref{lemma:decomposition},
one can determine whether $G$ contains even holes in time
$|\mathbbmsl{T}|\cdot O(mn^4)=O(m^3 n^5)$.
\end{proof}


\section{Proving Lemma~\ref{lemma:cleaning}}
\label{section:cleaning}

A {\em clique} of $G$ is a complete subgraph of $G$.  A clique of $G$
is {\em maximal} if it is not contained by other cliques of $G$.  We
need the following theorem and three lemmas to prove
Lemma~\ref{lemma:cleaning}.

\begin{theorem}[Chudnovsky and Seymour~\cite{ChudnovskyS10}]
\label{lemma:three_tree}
Let $z_1$, $z_2$, and $z_3$ be three nodes of an $n$-node graph. It
takes $O(n^4)$ time to determine whether the graph contains an induced
tree $I$ with $\{z_1,z_2,z_3\}\subseteq V(I)$.
\end{theorem}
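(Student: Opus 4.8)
}
This statement is taken verbatim from Chudnovsky and Seymour~\cite{ChudnovskyS10}, so I would simply cite it; were I to reprove it, the plan would be as follows. First I would reduce the search to a very structured one. If $G$ has an induced tree containing $z_1$, $z_2$, and $z_3$, let $T$ be one that is minimal under vertex deletion. Every leaf of $T$ lies in $\{z_1,z_2,z_3\}$, since deleting a non-terminal leaf would leave a smaller induced tree through the three terminals; hence $T$ has at most three leaves, so $\Delta(T)\le 3$, and $T$ is either an induced path with one of $z_1,z_2,z_3$ in its interior, or an induced subdivision of $K_{1,3}$ whose three feet are $z_1$, $z_2$, and $z_3$. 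Thus it suffices to test, over the $O(1)$ ways of assigning the three terminals to the ``endpoint'' and ``interior/foot'' roles, for an induced path or an induced subdivided claw of the prescribed shape.

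The heart of the argument is a structural dichotomy. Following~\cite{ChudnovskyS10}, I would keep an instance $(G,z_1,z_2,z_3)$ and repeatedly apply reductions --- deleting or merging vertices according to local rules that do not change the answer, and bypassing ``homogeneous'' parts of $G$ around which any solution can be rerouted --- until the instance is fully reduced. One then shows that a fully reduced instance either visibly contains the desired tree (which can then be read off directly), or the way the components of $G$ minus the terminals' neighbourhoods attach to $z_1$, $z_2$, $z_3$ is so constrained that $G$ decomposes --- along a bounded cutset, a homogeneous set, or a ``local'' obstruction built around a diamond (of the same flavour as the diamonds and beetles used in the present paper) --- in a way that permits recursion on strictly smaller instances while preserving the yes/no answer. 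Iterating terminates after $O(n)$ reductions, and a careful implementation of each step yields the stated $O(n^4)$ bound; the procedure also exhibits a tree when one exists.

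The main obstacle is precisely this dichotomy, which is essentially the whole of~\cite{ChudnovskyS10}: one must enumerate all shapes that a reduced instance admitting no induced tree through the three terminals can take, and this requires a long and delicate case analysis on attachments and on diamond-like local configurations. Since Theorem~\ref{lemma:three_tree} is used here only as a black box --- inside the beetle-free recognition routine behind Lemma~\ref{lemma:cleaning} --- I would not reproduce that analysis, and would simply cite it after checking that the accounting in~\cite{ChudnovskyS10} does give the claimed $O(n^4)$ running time.
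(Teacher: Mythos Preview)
Your proposal is correct and matches the paper exactly: Theorem~\ref{lemma:three_tree} is stated with a citation to Chudnovsky and Seymour~\cite{ChudnovskyS10} and no proof is given, since it is used purely as a black box in the beetle-detection routine of Lemma~\ref{lemma:cleaning}. Your sketch of what a reproof would entail is a fair summary of~\cite{ChudnovskyS10}, but unnecessary here.
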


\begin{lemma}[Farber~{\cite[Proposition~2]{Farber89}} and da~Silva and Vu\v{s}kovi\'{c}~{\cite{daSilvaV07}}]
\label{lemma:clique1}
Let $G$ be an $n$-node $m$-edge $4$-hole-free graph.  It takes
$O(mn^2)$ time to either ensure that $G$ contains even holes or obtain
all $O(n^2)$ maximal cliques of $G$.
\end{lemma}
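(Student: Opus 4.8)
The $O(n^2)$ bound on the number of maximal cliques is purely combinatorial and is Farber's theorem~\cite{Farber89}. A $4$-hole of $G$ is exactly an induced $2K_2$ (two vertex-disjoint edges with no edge between them) in the complement of $G$, so the complement of $G$ has no induced $2K_2$; a graph on $n$ nodes with no induced $2K_2$ has fewer than $n^2$ maximal independent sets (Farber's injection argument, sending each maximal independent set to a distinct pair of vertices), and the maximal independent sets of the complement of $G$ are precisely the maximal cliques of $G$. Thus the substance of the lemma is the algorithmic claim: produce this list, or else certify an even hole, within $O(mn^2)$ time.

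My plan for the enumeration is to sweep over the edges of $G$ (isolated nodes form their own maximal cliques and are listed trivially). Each maximal clique with at least two nodes contains some edge $uv$ and then equals $\{u,v\}\cup K$ for a maximal clique $K$ of the common-neighborhood subgraph $Q_{uv}=G[N_G(u)\cap N_G(v)]$, and conversely; moreover if $Q_{uv}$ has a non-edge $xy$ then $G[\{u,v,x,y\}]$ is a diamond of $G$, since $u\sim v$ and $x,y$ are each adjacent to both $u$ and $v$. Hence as long as $G$ contains no diamond, each $Q_{uv}$ is a clique, every edge lies in a unique maximal clique, and one pass over the edges --- spending $O(n)$ per edge and deduplicating cliques by their two lowest-indexed nodes --- lists all $\le m$ maximal cliques in $O(mn)$ time. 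When $G$ does contain diamonds, I would still sweep over edges but recurse into each $Q_{uv}$, which is again $4$-hole-free and strictly smaller; Farber's bound caps the number of distinct maximal cliques ever produced at $O(n^2)$, and I would control the running time by charging the work of each recursive call to the maximal cliques it is responsible for and to the edges of the $Q_{uv}$ it touches, never recomputing a common neighborhood twice. Making this accounting come out to exactly $O(mn^2)$ --- rather than a larger polynomial --- is the step I expect to be the main obstacle, since this is where $4$-hole-freeness must be used quantitatively, beyond being merely the source of Farber's counting bound.

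The even-hole alternative is built in as a safeguard that makes the running time unconditional: the sweep is run with a counter, and if it ever emits more than $cn^2$ distinct maximal cliques, for the constant $c$ from Farber's bound, then by the contrapositive of that bound $G$ cannot be $4$-hole-free, hence $G$ contains a $4$-hole, which is an even hole, and the algorithm halts reporting ``$G$ contains even holes.'' On the promised $4$-hole-free input this branch never triggers. (A different route that yields the dichotomy more directly is to peel nodes one at a time: when processing node $v$ in the current graph, test whether its closed neighborhood induces a chordal graph; if so, its $O(n)$ maximal cliques are exactly the maximal cliques of the current graph containing $v$ and can be read off in $O(m)$ time, after which $v$ is deleted; if at some step no remaining node has a chordal closed neighborhood, one invokes the structural analysis of da~Silva and Vu\v{s}kovi\'{c}~\cite{daSilvaV07} to conclude that the current $4$-hole-free induced subgraph --- hence $G$ --- contains an even hole, which is then extracted. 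In that variant the main obstacle is this structural step, together with the bookkeeping needed to discard non-maximal produced cliques within the $O(mn^2)$ budget.)
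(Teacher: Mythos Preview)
The paper does not prove this lemma; it is quoted from Farber~\cite{Farber89} (for the $O(n^2)$ count) and da~Silva--Vu\v{s}kovi\'{c}~\cite{daSilvaV07} (for the $O(mn^2)$-time algorithm with the even-hole escape hatch). Your parenthetical second route is in fact the actual proof: the key theorem of~\cite{daSilvaV07} is that every even-hole-free graph has a vertex whose neighborhood induces a chordal graph. The algorithm repeatedly finds such a vertex $v$, lists the maximal cliques of $G$ containing $v$ (these are exactly the maximal cliques of the chordal graph $G[N_G[v]]$, of which there are at most $|N_G(v)|$, listable in linear time), deletes $v$, and continues. If at some step no vertex has a chordal neighborhood, the structural theorem certifies an even hole. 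So the ``even hole'' branch in the lemma is not a counter-overflow safeguard; it is triggered precisely when the triangulated-neighborhood theorem fails, and it can genuinely fire on $4$-hole-free inputs that contain longer even holes.

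Your primary approach, by contrast, has the gap you yourself flag and does not close it. Sweeping edges and recursing into $Q_{uv}=G[N_G(u)\cap N_G(v)]$ gives no control on the number of recursion paths: Farber's bound caps the number of \emph{distinct} maximal cliques, but the same clique is reached through many $(u,v)$ pairs, and ``never recomputing a common neighborhood twice'' is not defined. Generic output-sensitive clique enumeration (Tsukiyama et al.) gives only $O(nm)$ delay, hence $O(n^3m)$ total on an $O(n^2)$-clique graph---not $O(mn^2)$. Moreover, your counter safeguard is vacuous on the promised $4$-hole-free input, so it neither bounds the running time nor accounts for the dichotomy in the lemma's conclusion. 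The fix is exactly the chordal-neighborhood peeling you sketch in parentheses; promote that to the main argument and drop the edge-sweep.
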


\begin{lemma}[da~Silva and Vu\v{s}kovi\'{c}~{\cite{daSilvaV07}}]
\label{lemma:clique2}
The number of maximal cliques in an $n$-node $m$-edge even-hole-free graph
is at most $n+2m$.
\end{lemma}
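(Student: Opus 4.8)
}

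The plan is to prove the bound by induction on the number $n$ of nodes, using as the sole structural input the (triangulated-neighborhood) theorem of da~Silva and Vu\v{s}kovi\'{c}~\cite{daSilvaV07}: every even-hole-free graph with at least one node has a node $v$ whose neighborhood $G[N_G(v)]$ induces a chordal (triangulated) graph. This is the ingredient that replaces the usual appeal to a simplicial vertex, which need not exist in an even-hole-free graph (for instance, $W_5$ has none). Once this is granted, the only other fact I would use is the classical observation that a chordal graph on $t$ nodes has at most $t$ maximal cliques, obtained by peeling off simplicial vertices one at a time (each deletion drops the number of maximal cliques by at most one). As an alternative I could instead invoke the bisimplicial-vertex theorem of Addario-Berry et al.~\cite{Addario-BerryCHRS08}; a bisimplicial vertex of an even-hole-free graph automatically has a chordal neighborhood, since that neighborhood is covered by two cliques and contains no $4$-hole.

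For the inductive step, let $G$ be an $n$-node $m$-edge even-hole-free graph with $n\geq 1$, and let $v$ be a node of $G$ with $G[N_G(v)]$ chordal. I would split the maximal cliques of $G$ into those containing $v$ and those avoiding $v$. Every maximal clique containing $v$ has the form $\{v\}\cup Q$, where $Q$ is a maximal clique of $G[N_G(v)]$ (because a clique through $v$ lies in $N_G[v]$, and maximality transfers both ways); hence there are at most $\deg_G(v)$ of them when $\deg_G(v)\geq 1$, and exactly one (namely $\{v\}$) when $\deg_G(v)=0$. On the other hand, every maximal clique of $G$ avoiding $v$ is also a maximal clique of $G\setminus\{v\}$, so the number of these is at most the number of maximal cliques of $G\setminus\{v\}$; since $G\setminus\{v\}$ is even-hole-free with $n-1$ nodes and $m-\deg_G(v)$ edges, the induction hypothesis bounds it by $(n-1)+2(m-\deg_G(v))$.

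Adding the two contributions, the number of maximal cliques of $G$ is at most $\deg_G(v)+(n-1)+2(m-\deg_G(v))=n-1+2m-\deg_G(v)\leq n+2m$ when $\deg_G(v)\geq 1$, and at most $1+(n-1)+2m=n+2m$ when $\deg_G(v)=0$; the base case $n\in\{0,1\}$ is immediate. The bookkeeping here is routine, so the real obstacle is entirely the structural theorem on triangulated neighborhoods, whose proof I would simply quote from~\cite{daSilvaV07} rather than reproduce; it is this theorem, and not the counting, that makes the lemma specific to even-hole-free graphs rather than, say, merely $4$-hole-free graphs.
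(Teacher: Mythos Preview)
Your proposal is correct. The paper itself does not prove this lemma; it is quoted as a result of da~Silva and Vu\v{s}kovi\'{c}~\cite{daSilvaV07}, and your argument---induction on $n$ using the existence of a vertex with chordal neighborhood, together with the bound of $t$ maximal cliques in a $t$-vertex chordal graph---is precisely the approach taken in that reference.
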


\begin{lemma}[Chudnovsky, Kawarabayashi, and Seymour~{\cite[Lemma~4.2]{ChudnovskyKS05}}]
\label{lemma:short-prism}
For any shortest even hole $C$ of a $4$-hole-free graph $G$, there is
an edge $v_1v_2$ of $C$ with $N_G^{2,2}(C) \subseteq N_G(v_1) \cap
N_G(v_2)$.
\end{lemma}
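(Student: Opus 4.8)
The assertion is equivalent to saying that the nodes of $N^{2,2}_G(C)$ all attach to $C$ along one common edge: each $z\in N^{2,2}_G(C)$ has $C[N_C(z)]$ equal to two disjoint edges of $C$, and we must find an edge $v_1v_2$ of $C$ that is one of those two edges for \emph{every} $z$. I would prove this in three stages: (i) analyze the attachment of a single node of $N^{2,2}_G(C)$; (ii) show any two nodes of $N^{2,2}_G(C)$ attach along edge-pairs that share an edge of $C$; and (iii) promote pairwise sharing to one globally common edge.

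\emph{A single node.} Let $x\in N^{2,2}_G(C)$, so $N_C(x)$ consists of two edges $e=v_1v_2$ and $f=w_1w_2$ of $C$ with no other edge between $\{v_1,v_2\}$ and $\{w_1,w_2\}$. These two edges cut $C$ into two arcs $R$ and $R'$, say $R$ joining $v_2$ to $w_1$ and $R'$ joining $w_2$ to $v_1$, with $V(R)\cup V(R')=V(C)$ a disjoint union. Let $D$ be the cycle obtained from $R$ by adding $x$, and $D'$ likewise from $R'$. Both $D$ and $D'$ are induced, since each arc is chordless (being a subpath of the hole $C$) and $x$ has no neighbor in the interior of either arc. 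If $R$ had only two vertices then $v_2w_1\in E(C)$ and $C[N_C(x)]$ would be connected, contrary to $x\in N^{2,2}_G(C)$; if $R$ had exactly three vertices then $D$ would be an induced $4$-cycle, impossible since $G$ is $4$-hole-free (and $D$ cannot be a triangle, $C$ having no chord). So $|V(R)|,|V(R')|\ge 4$, hence $|D|,|D'|\ge 5$. Since $|D|+|D'|=|C|+2$ is even, $|D|$ and $|D'|$ have equal parity; if both were even, the shorter of $D,D'$ would be an even hole on at most $(|C|+2)/2<|C|$ nodes, contradicting the choice of $C$. Therefore $D$ and $D'$ are both odd holes. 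In particular each of the two arcs of $C$ between $e$ and $f$ contains an odd number of edges, so $e$ and $f$ are ``nearly antipodal'' on $C$, and $N^{2,2}_G(C)\ne\varnothing$ forces $|C|\ge 8$.

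\emph{Two nodes, and a common edge.} Take $x,y\in N^{2,2}_G(C)$ with edge-pairs $\{e_x,f_x\}$ and $\{e_y,f_y\}$, and suppose toward a contradiction that these pairs are disjoint. Because each pair is nearly antipodal, the four edges $e_x,f_x,e_y,f_y$ must interleave around $C$. If $xy\notin E(G)$, the interleaving produces two common neighbors $p,q$ of $x$ and $y$ on $C$ with $pq\notin E(G)$ (they lie far apart on the chordless cycle $C$), so $G[\{x,p,y,q\}]$ is a $4$-hole, a contradiction. If $xy\in E(G)$, I would use the parity information from the single-node analysis to pick a neighbor $a$ of $x$ on $C$, a neighbor $b$ of $y$ on $C$ with $a\notin N_C(y)$ and $b\notin N_C(x)$, and the arc $Q$ of $C$ from $a$ to $b$ whose interior avoids $N_C(x)\cup N_C(y)$ and whose length has the needed parity; then closing $Q$ up through the edge $xy$ gives an induced cycle on fewer than $|C|$ nodes and of even length, again contradicting the minimality of $C$. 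Hence $\{e_x,f_x\}\cap\{e_y,f_y\}\ne\varnothing$. Finally, if the edge-pairs did not all meet in a single edge there would be three nodes of $N^{2,2}_G(C)$ with edge-pairs of the form $\{e,f\},\{f,g\},\{e,g\}$ for distinct $C$-edges $e,f,g$ that are pairwise nearly antipodal, which by summing arc lengths forces $|C|\ge 12$ and is then ruled out by the same mechanism applied to a suitable pair among the three. Thus there is an edge $v_1v_2$ of $C$ lying in every edge-pair, and every $z\in N^{2,2}_G(C)$ is adjacent to both $v_1$ and $v_2$, i.e.\ $N^{2,2}_G(C)\subseteq N_G(v_1)\cap N_G(v_2)$.

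The step I expect to be the main obstacle is the two-node analysis, specifically the subcase $xy\in E(G)$: there no $4$-hole need appear, and one is forced to actually exhibit an even hole shorter than $C$ by routing a path through an arc of $C$ that simultaneously avoids $N_C(x)\cup N_C(y)$ and has the right length parity, which is exactly where the ``nearly antipodal'' parity facts about a single node become essential. A secondary subtlety is that pairwise sharing of the edge-pairs does not by itself force a common edge, so the three-node ``triangle'' configuration must be excluded separately.
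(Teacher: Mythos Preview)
The paper does not prove this lemma; it is quoted from Chudnovsky, Kawarabayashi, and Seymour~\cite[Lemma~4.2]{ChudnovskyKS05} and used as a black box. So there is no ``paper's own proof'' to compare against, and your proposal must stand or fall on its own.

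Your single-node analysis is correct and is indeed the key parity fact: for any $x\in N^{2,2}_G(C)$ the two attachment edges are nearly antipodal, with both $C$-arcs between them of odd edge-length. The problem is the two-node step. In the subcase $xy\notin E(G)$ with disjoint edge-pairs you write that ``the interleaving produces two common neighbors $p,q$ of $x$ and $y$ on $C$'', and then $G[\{x,p,y,q\}]$ is a $4$-hole. But if the edge-pairs $\{e_x,f_x\}$ and $\{e_y,f_y\}$ are disjoint as sets of edges, they may also be disjoint as sets of vertices, in which case $N_C(x)\cap N_C(y)=\varnothing$ and there are no such $p,q$ at all. For instance on an $8$-hole with vertices $1,\dots,8$, take $e_x=\{1,2\}$, $f_x=\{5,6\}$, $e_y=\{3,4\}$, $f_y=\{7,8\}$: both pairs are nearly antipodal, they interleave, and they share no vertex. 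So this branch of your argument simply does not go through; it is not the $xy\in E(G)$ branch that is the main obstacle, as you anticipated, but already the non-adjacent one.

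What actually carries the proof in \cite{ChudnovskyKS05} is not a $4$-hole on common $C$-neighbors but the construction, from two interleaved nearly-antipodal attachments, of an even hole strictly shorter than $C$ by splicing arcs of $C$ through $x$ and/or $y$; the parity bookkeeping (all four inter-edge arc lengths have the same parity, as your antipodality facts imply) is exactly what makes such a splice have even length. Your sketch gestures at this mechanism only in the $xy\in E(G)$ subcase; you need it in both subcases, and it, rather than a $4$-hole, is the engine of the contradiction.
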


\begin{lemma}
\label{lemma:major}
For any shortest even hole $C$ of a $4$-hole-free graph $G$, if
$G[M_G(C)]$ is not a clique of $G$, then there is a node $u$ of $C$
with $M_G(C)\subseteq N_G(u)$.
\end{lemma}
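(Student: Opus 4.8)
The plan is to argue by contradiction using the structure forced on the major nodes of a shortest even hole in a $4$-hole-free graph. Suppose $G$ is $4$-hole-free, $C$ is a shortest even hole of $G$, $G[M_G(C)]$ is not a clique, and yet no single node $u$ of $C$ satisfies $M_G(C)\subseteq N_G(u)$. Since $G[M_G(C)]$ is not a clique, there exist two nonadjacent major nodes $x$ and $y$ of $C$. By Lemma~\ref{lemma:hole-parity}, both $|N_C(x)|$ and $|N_C(y)|$ are even, so each of $x$ and $y$ has at least four neighbors on $C$ arranged so that (at least) three of them are pairwise nonadjacent on $C$. The key object to exploit is the pair of ``sectors'' that $N_C(x)$ cuts $C$ into: going around $C$, the neighbors of $x$ split $C$ into arcs, and because $x$ is major at least one of these arcs is ``long'' (has an internal vertex) in a way that, combined with $x$, would create an even hole shorter than $C$ unless it is blocked. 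First I would make precise the standard fact (used implicitly in Lemma~\ref{lemma:hole-node}) that for a major node $x$ of a shortest even hole $C$, between any two consecutive neighbors of $x$ on $C$ the arc of $C$ together with $x$ forms an induced cycle, and a parity/shortness count on these arcs shows each such arc has length exactly $2$ (so $N_C(x)$ is in fact an ``every other vertex'' pattern, or close to it); more carefully, I expect that each gap between consecutive neighbors of $x$ has exactly one internal vertex. This is the structural input I would establish first.

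Next, with $x$ and $y$ two nonadjacent major nodes understood this way, I would look at the node $u$ of $C$ that we are told (by the negation of the conclusion) is \emph{not} adjacent to all of $M_G(C)$ — concretely, pick $u\in V(C)$ and a major node not adjacent to $u$; the goal is to manufacture a forbidden configuration. The natural target is to build a $4$-hole. Because each of $x,y$ hits ``every other'' vertex of $C$, any two of $x,y$ and two suitably chosen consecutive-on-$C$ vertices $a,b$ of $C$ with $xa,xb,ya,yb\in E(G)$ and $xy\notin E(G)$ would give an induced $4$-cycle $xaybx$ — contradicting $4$-hole-freeness — \emph{unless} the pattern of neighbors of $x$ and the pattern of neighbors of $y$ on $C$ are ``out of phase'' so that no edge $ab$ of $C$ has both endpoints common to $x$ and $y$. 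I would then analyze the out-of-phase case: if $x$ covers the even-position vertices of $C$ and $y$ covers the odd-position vertices (using that $|V(C)|$ is even), then for an edge $v_1v_2$ of $C$ we have, say, $v_1\in N_C(x)\setminus N_C(y)$ and $v_2\in N_C(y)\setminus N_C(x)$; now I would use $x,v_1,v_2,y$ together with a short arc of $C$ to build either a $4$-hole or an even hole strictly shorter than $C$, again a contradiction. The remaining possibility is that $M_G(C)$ does not contain two such ``phase-complementary'' nodes, which should force $M_G(C)$ to all lie in $N_G(u)$ for some vertex $u$ of $C$ — the desired conclusion — completing the contrapositive.

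The main obstacle I anticipate is the bookkeeping in the out-of-phase case: once $x$ and $y$ are nonadjacent but their neighbor-patterns on $C$ interleave, one has to chase a concrete short cycle through $x$, $y$, and a minimal arc of $C$ and verify it is induced (no unexpected chords among $x$, $y$, and the two or three $C$-vertices involved), and one must handle the boundary cases where the relevant arc is so short that the cycle has only four vertices (giving the $4$-hole) versus slightly longer (giving a shorter even hole). Controlling chords is where $4$-hole-freeness and the shortest-even-hole hypothesis must be invoked repeatedly, and I would isolate a small claim: \emph{for a major node $x$ and an edge $v_1v_2$ of $C$ with exactly one of $v_1,v_2$ in $N_C(x)$, the other endpoint together with $x$ and the short side of $C$ yields a contradiction}. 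Packaging that claim cleanly, rather than the final counting argument, is the delicate part; once it is in hand, the conclusion that all major nodes share a common neighbor on $C$ follows by pairing each major node against the one witnessing non-cliqueness.
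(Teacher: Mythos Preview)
Your central structural claim is false, and the rest of the argument rests on it. You assert that for a major node $x$ of a shortest even hole $C$ in a $4$-hole-free graph, ``each gap between consecutive neighbors of $x$ has exactly one internal vertex,'' i.e., $N_C(x)$ is an every-other-vertex pattern. But a gap with exactly one internal vertex $b$ between consecutive neighbors $a_i,a_{i+1}$ of $x$ yields the induced $4$-cycle $x a_i b a_{i+1} x$, contradicting $4$-hole-freeness. So in a $4$-hole-free graph no gap can have exactly one internal vertex; the correct constraint is that each gap has an \emph{odd} number of edges (so the cycle through $x$ is odd) or length~$1$. For instance, on a $12$-cycle $c_1\cdots c_{12}$, a vertex $x$ with $N_C(x)=\{c_1,c_4,c_7,c_{10}\}$ is major, creates only $5$-cycles with $x$, and has two internal vertices in every gap. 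Thus the ``in-phase / out-of-phase'' dichotomy you set up never materializes, and the subsequent $4$-hole / shorter-even-hole hunt has no footing.

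The paper's proof does not attempt to pin down $N_C(x)$ globally. Instead it invokes a lemma of Chudnovsky, Kawarabayashi, and Seymour (Lemma~\ref{lemma:gate}): any two nonadjacent major nodes admit a \emph{gate}, an edge $u_1u_2$ of $C$ with controlled adjacencies to both, and any three major nodes spanning at most one edge have a common neighbor on $C$. From the gate one shows that the set $U$ of vertices of $C$ adjacent to both chosen nonadjacent major nodes satisfies $U\subseteq\{u_1,u_2\}$, and then a short case analysis using the three-in-a-common-neighbor statement and $4$-hole-freeness forces every other major node to be adjacent to $U$, with a final contradiction ruling out the possibility that different major nodes miss different endpoints of the gate. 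The gate lemma is the missing idea; without it (or a genuine substitute), your approach does not go through.
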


Before working on the proof of Lemma~\ref{lemma:major}, we first prove
Lemma~\ref{lemma:cleaning} using Theorem~\ref{lemma:three_tree} and
Lemmas~\ref{lemma:clique1},~\ref{lemma:clique2},~\ref{lemma:short-prism},
and~\ref{lemma:major}.

\begin{proof}[Proof of Lemma~\ref{lemma:cleaning}]
We claim that $G$ contains beetles if and only if at least one of the
$O(m^3n)$ choices of node $b_4$ and three distinct edges $b_1b_5$,
$b_2b_6$, and $b_3b_7$ of $G$ satisfies all of the following four
conditions:
\begin{itemize}
\addtolength{\itemsep}{-0.5\baselineskip}
\item $G[\{b_1,b_2,b_3,b_4\}]$ is the $4$-cycle $b_1b_2b_3b_4b_1$ with
  exactly one chord $b_2b_4$.
\item The edges between $\{b_1,b_2,b_3\}$ and $\{b_5,b_6,b_7\}$ are
  exactly $b_1b_5$, $b_2b_6$, and $b_3b_7$.
\item $\{b_5,b_6,b_7\}\cap \{b_1,b_2,b_3,b_4\}=\varnothing$, but nodes
  $b_5$, $b_6$, and $b_7$ need not be distinct.
\item There is an induced tree $I'$ of $G\setminus
  \left((N_G[b_1]\cup\cdots\cup N_G[b_4])\setminus
  \{b_5,b_6,b_7\}\right)$ with $\{b_5,b_6,b_7\}\subseteq V(I')$.
\end{itemize}
The claim can be verified by seeing that if $I''$ is the minimal
subtree of $I'$ satisfying $\{b_5,b_6,b_7\}\subseteq V(I'')$, then
$I=I''\cup\{b_1b_5,b_2b_6,b_3b_7\}$ is a tree of $G\setminus\{b_4\}$
with leaf set $\{b_1,b_2,b_3\}$ having the property that $I\setminus
\{b_1,b_2,b_3\}$ is an induced tree of $G$ not adjacent to $b_4$.  By
the claim above and Theorem~\ref{lemma:three_tree}, it takes~$O(m^3
n^5)$ time to determine whether $G$ contains beetles.  It takes
$O(n^4)$ time to determine whether $G$ contains $4$-holes.  If $G$
contains $4$-holes or beetles, then $G$ contains even holes.  The
lemma is proved by completing Task~1 in $O(m^3 n^5)$ time.  The rest
of the proof assumes that $G$ is $4$-hole-free and beetle-free.

By Lemma~\ref{lemma:clique1}, it takes $O(m n^2)$ time to either
ensure that $G$ contains even holes or obtain the $O(n^2)$ maximal
cliques of $G$.  If $G$ contains even holes, then the lemma is proved
by completing Task~1 in $O(m n^2)$ time.  Otherwise, we have all the
$O(n^2)$ maximal cliques of $G$. If the number of maximal cliques in
$G$ is larger than $n+2m$, then Lemma~\ref{lemma:clique2} implies that
$G$ contains even holes, also proving the lemma by completing
Task~1. If the number of maximal cliques in $G$ is $n+2m$ or fewer,
then let $\mathbbmsl{T}$ consist of the trackers of $G$ that are in
the form of~$(G\setminus S_1, u_1u_2u_3)$ or~$(G\setminus
S_2,u_1u_2u_3)$ with
\begin{displaymath}
\begin{array}{rllll}
S_1 &=& S_1(u_1,u_2,u_3,v_1,v_2)&=&
(N_G(v_1) \cap N_G(v_2)) \cup (N_G(u_2) \setminus \{u_1,u_3\});\\
S_2 &=& S_2(u_1,u_2,K)&=&
(N_G(u_1) \cap N_G(u_2)) \cup V(K),
\end{array}
\end{displaymath}
where $u_1u_2$ and $v_1v_2$ are edges of $G$ and $K$ is a maximal
clique of $G$.  We have $|\mathbbmsl{T}| = O(m^2 n)$.  Since all
$O(n+m)$ maximal cliques of $G$ are available, $\mathbbmsl{T}$ can be
computed in time $O(m^2 n) \cdot O(n+m)=O(m^3 n)$ time.  To ensure the
completion of Task~2, it remains to prove that $\mathbbmsl{T}$
satisfies Condition~L1. Suppose that $G$ contains even holes.  Let $C$
be an arbitrary shortest even hole of $G$.  The following case
analysis shows that there are lucky trackers of $G$ in
$\mathbbmsl{T}$.

{\em Case~1: $M_G(C) \subseteq N_G(u_2)$ holds for a node $u_2$ of
  $C$}.  Let $u_1$ and $u_3$ be the neighbors of $u_2$ in~$C$. By
$M_G(C) \subseteq N_G(u_2) \setminus \{u_1,u_3\}$ and
Lemma~\ref{lemma:short-prism}, there is an edge $v_1v_2$ of $C$ with
$M_G(C)\cup N_G^{2,2}(C)\subseteq S_1$.  By the choices of $u_1$ and
$u_3$, we have $(N_{G}(u_2)\setminus \{u_1,u_3\})\cap C=\varnothing$.
Since $v_1v_2$ is an edge of hole $C$, we have $N_G(v_1) \cap N_G(v_2)
\cap C=\varnothing$.  Thus, $S_1\cap C=\varnothing$, implying that $C$
is a clean hole of $G\setminus S_1$ and $u_1u_2u_3$ is a path of $C$.
Since $C$ is a shortest even hole of $G$, $C$ is also a shortest even
hole of $G\setminus S_1$.  Therefore, $C$ is a $u_1u_2u_3$-hole of
$G\setminus S_1$.

{\em Case~2: $M_G(C)\not\subseteq N_G(u)$ holds for all nodes $u$ of
  $C$}.  By Lemma~\ref{lemma:major}, $G[M_G(C)]$ is a clique of~$G$.
Let $K$ be a maximal clique of $G$ with $M_G(C)\subseteq V(K)$.
Combining with Lemma~\ref{lemma:short-prism}, there is an edge
$u_1u_2$ of $C$ with $M_G(C)\cup N_G^{2,2}(C)\subseteq S_2$.  We have
$V(K)\cap C= \varnothing$ or else $M_G(C)\cap C=\varnothing$ implies
$M_G(C) \subseteq V(K) \setminus \{u\} \subseteq N_G(u)$ for any node
$u\in V(K) \cap C$, a contradiction.  Since $u_1u_2$ is an edge of
$C$, we have $N_G(u_1) \cap N_G(u_2)\cap C=\varnothing$. Thus,
$S_2\cap C=\varnothing$, implying that $C$ is a clean hole of
$G\setminus S_2$. Letting $u_3$ be the neighbor of $u_2$ in $C$ other
than $u_1$, $u_1u_2u_3$ is a path of $C$.  Since $C$ is a shortest
even hole of $G$, $C$ is also a shortest even hole of $G\setminus
S_2$.  Therefore, $C$ is a $u_1u_2u_3$-hole of $G\setminus S_2$.
\end{proof}

\begin{figure}[t]
\centerline{\scalebox{0.9}{\input{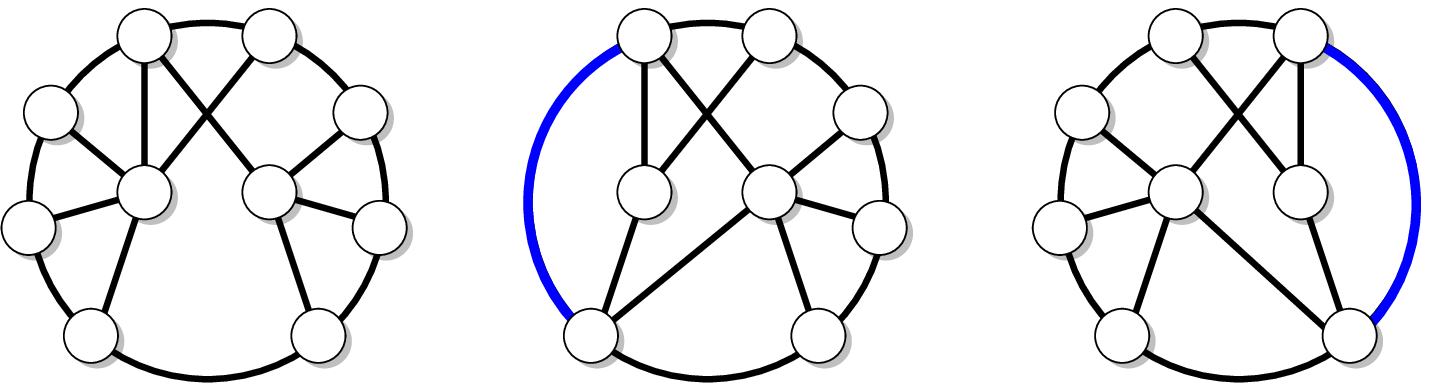}}}
\caption{(a) Edge $u_1u_2$ is a gate of the $8$-hole $C$ induced by
  nodes other than $x_1$ and $x_2$, which are the major nodes of
  $C$. (b) and (c) Illustrations for the proof of
  Lemma~\ref{lemma:major}.}
\label{figure:fig3}
\end{figure}

The rest of the section proves Lemma~\ref{lemma:major}.  An edge
$u_1u_2$ of hole $C$ is a {\em gate}~\cite{ChudnovskyKS05} of $C$ with
respect to major nodes $x_1$ and $x_2$ of $C$ if both of the following
conditions hold:
\begin{enumerate}[\em {Condition~G}1:]
\addtolength{\itemsep}{-0.5\baselineskip}

\item 
\label{item:G1}
There are two edges $u_1x_2$ and $u_2x_1$ and at least one of edges
$u_1x_1$ and $u_2x_2$.

\item 
\label{item:G2}
There is a node $u_0$ of $C\setminus\{u_1,u_2\}$ such that $x_1$
(respectively, $x_2$) is not adjacent to $C\setminus V(P_1)$
(respectively, $C\setminus V(P_2)$), where $P_1$ (respectively, $P_2$)
is the path of $C$ between $u_2$ (respectively, $u_1$) and $u_0$ that
passes $u_1$ (respectively, $u_2$).
\end{enumerate}
See Figure~\ref{figure:fig3}(a) for an illustration.

\begin{lemma}[Chudnovsky et al.~{\cite[Lemmas~2.3 and~2.4]{ChudnovskyKS05}}]
\label{lemma:gate}
The following statements hold for any shortest even hole $C$ of a
$4$-hole-free graph $G$.
\begin{enumerate}
\addtolength{\itemsep}{-0.5\baselineskip}
\item
\label{gate:item1}
If $x_1$ and $x_2$ are non-adjacent nodes of $M_G(C)$, then there is a
gate of $C$ with respect to $x_1$ and $x_2$ in~$G$.

\item
\label{gate:item2}
If $X$ is a subset of $M_G(C)$ with $|X|=3$ such that $G[X]$ has
at most one edge, then $X\subseteq N_G(u)$ holds for some node~$u$ of
$C$.
\end{enumerate}
\end{lemma}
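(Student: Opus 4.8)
The plan is to prove the two assertions in order, using statement~\ref{gate:item1} (existence of a gate) as the engine for statement~\ref{gate:item2}. For statement~\ref{gate:item1}, fix the shortest even hole $C$ and two non-adjacent major nodes $x_1,x_2$. First I would pin down the shape of the $C$-neighborhood of a single major node $x$. Since $G$ is $4$-hole-free, $x$ cannot be adjacent to two nodes of $C$ at distance two unless it is also adjacent to the node between them; hence $N_C(x)$ splits into maximal runs of consecutive nodes separated by gaps of at least two non-neighbors. For a gap bounded by consecutive neighbors $p$ and $q$ of $x$, the graph induced by $x$ together with the arc of $C$ through that gap is a hole whose length is the gap size plus three; because $x$ has at least four neighbors on $C$ (majorness and Lemma~\ref{lemma:hole-parity}) while the gap contains none, the complementary arc carries at least two further neighbors, so this hole is strictly shorter than $C$. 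Minimality of $C$ with Lemma~\ref{lemma:hole-parity} then forces its length to be odd, i.e.\ every gap is even. With this structure in hand I would choose the gate edge to be a ``crossing'' edge $u_1u_2$ of $C$ with $u_1\sim x_2$ and $u_2\sim x_1$ at which the neighborhoods of $x_1$ and $x_2$ meet; such an edge exists because both neighborhoods are spread around $C$, and the crossing can be selected so that one of $u_1\sim x_1$, $u_2\sim x_2$ also holds, which is exactly Condition~G\ref{item:G1}.

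The main obstacle in statement~\ref{gate:item1} is Condition~G\ref{item:G2}: producing the second meeting point $u_0$ so that every neighbor of $x_1$ lies on $P_1$ and every neighbor of $x_2$ lies on $P_2$. I would argue this by contradiction, showing that a neighbor of $x_1$ strictly past the separation on the $x_2$ side (or symmetrically) can be combined with a suitable neighbor of $x_2$ and an arc of $C$ to produce either a $4$-hole or an even hole shorter than $C$; the even-gap structure and Lemma~\ref{lemma:hole-parity} are what keep the parity bookkeeping under control here.

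For statement~\ref{gate:item2}, the first observation is that a gate already forces a common $C$-neighbor of the pair and anchors its edge at the neighborhood of $x_1$. The two arcs of a gate for $x_1,x_2$ meet only in $\{u_1,u_2,u_0\}$, so $N_C(x_1)\cap N_C(x_2)\subseteq\{u_1,u_2,u_0\}$, and Condition~G\ref{item:G1} makes $u_1$ (if $u_1\sim x_1$) or $u_2$ (if $u_2\sim x_2$) a common neighbor of $x_1$ and $x_2$. Moreover the endpoint $u_2$, which is always adjacent to $x_1$, is an extreme neighbor of $x_1$, since its cycle-neighbor outside $P_1$ is a non-neighbor of $x_1$; as the other cycle-neighbor of $u_2$ is the only candidate for $u_1$, the gate edge is determined by $u_2$, so each gate for a pair through $x_1$ is anchored at one of the two ends of the minimal arc of $C$ containing $N_C(x_1)$.

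Now take $X=\{x_1,x_2,x_3\}$ with at most one edge; then some node, say $x_1$, is non-adjacent to both others. I would apply statement~\ref{gate:item1} to $(x_1,x_2)$ and to $(x_1,x_3)$, obtaining two gate edges, each anchored at an end of the arc of $N_C(x_1)$. If the two anchors coincide, the gates share the edge $u_1u_2$ with $u_2\sim x_1$, $u_1\sim x_2$, and $u_1\sim x_3$, and combining the two instances of Condition~G\ref{item:G1} exhibits a single node adjacent to all of $x_1,x_2,x_3$ (namely $u_1$ if $u_1\sim x_1$, and $u_2$ otherwise). The remaining case, where the two gates are anchored at opposite ends of the arc of $N_C(x_1)$, is the crux: there I would use the confinement of $N_C(x_2)$ and $N_C(x_3)$ to their arcs (and, when $x_2\not\sim x_3$, a third gate for $(x_2,x_3)$) to locate a common neighbor, ruling out the alternative by exhibiting a hole shorter than $C$. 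I expect this opposite-ends configuration to be the most delicate part of the lemma.
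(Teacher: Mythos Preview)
The paper does not give its own proof of this lemma: it is quoted verbatim from Chudnovsky, Kawarabayashi, and Seymour~\cite[Lemmas~2.3 and~2.4]{ChudnovskyKS05} and used as a black box in the proof of Lemma~\ref{lemma:major}. So there is no ``paper's proof'' to compare against; any self-contained argument you supply would be additional to what the paper does.

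On the substance of your outline: the single-vertex structure you extract (runs and even gaps in $N_C(x)$, using $4$-hole-freeness and minimality of $C$) is correct and is indeed the starting point in~\cite{ChudnovskyKS05}. Your observation for statement~\ref{gate:item2} that the gate endpoint $u_2$ must be one of the two ends of the minimal arc carrying $N_C(x_1)$ is valid and makes the ``same anchor'' case go through cleanly. However, the proposal is explicitly incomplete at precisely the two hard spots. For statement~\ref{gate:item1} you assert that a crossing edge satisfying Condition~G\ref{item:G1} exists ``because both neighborhoods are spread around $C$'' and that Condition~G\ref{item:G2} follows ``by contradiction,'' but neither argument is given; in the original proof these steps require a careful choice among several candidate crossings together with parity bookkeeping on the interleaving of the two neighbor sets, and a generic crossing edge need not satisfy G\ref{item:G2}. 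For statement~\ref{gate:item2}, the ``opposite ends'' case that you flag as the crux is where the real work lies, and invoking a third gate for $(x_2,x_3)$ when $x_2\not\sim x_3$ does not by itself close it: you still have to reconcile the three separation points $u_0$ and rule out configurations where the common neighbor lands off $C$. As written, the proposal is a plausible plan but not a proof; the missing pieces are exactly the content of~\cite[Lemmas~2.3 and~2.4]{ChudnovskyKS05}.
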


\begin{proof}[Proof of Lemma~\ref{lemma:major}]
Let $x_1$ and $x_2$ be two non-adjacent nodes of $M_G(C)$.  Let $U$
consist of the nodes~$u$ of $C$ that are adjacent to both of $x_1$ and
$x_2$.  By Lemma~\ref{lemma:gate}(\ref{gate:item1}), there is a gate
$u_1u_2$ of~$C$ with respect to $x_1$ and $x_2$.  We have
$\varnothing\ne U \subseteq \{u_1,u_2,u_0\}$, where $u_0$ is a node of
$C$ ensured by Condition~G\ref{item:G2}.  Assume $u_0\in U$.  By
Condition~G\ref{item:G1}, $u_0$ is adjacent to $u_1$ or $u_2$ in $G$
or else one of $u_1x_1u_0x_2u_1$ and $u_2x_1u_0x_2u_2$ would be a
$4$-hole of $G$.
If $u_0$ is adjacent to $u_1$ as illustrated by
Figure~\ref{figure:fig3}(b), then Condition~G\ref{item:G2} implies
$N_C(x_1)=\{u_0,u_1,u_2\}$, which contradicts with $x_1\in M_G(C)$.
If~$u_0$ is adjacent to $u_2$ as illustrated by
Figure~\ref{figure:fig3}(c), then Condition~G\ref{item:G2} implies
$N_G(x_2)=\{u_0,u_1,u_2\}$, which contradicts with $x_2\in M_G(C)$.
Therefore, $u_0\not\in U$, and thus $U\subseteq \{u_1,u_2\}$.  The
lemma holds trivially if $|M_G(C)|=2$.  To prove the lemma for
$|M_G(C)|\geq 3$, we first show the claim: ``{\em Each node~$x\in
  M_G(C)\setminus\{x_1,x_2\}$ is adjacent to $U$.}''  If one of $x_1$
and $x_2$ is not adjacent to $x$, then the claim follows from
Lemma~\ref{lemma:gate}(\ref{gate:item2}).  If both of $x_1$ and $x_2$
are adjacent to $x$, then each node $u\in U$ is adjacent to~$x$ in $G$
or else $ux_1xx_2u$ is a $4$-hole, a contradiction. The claim is
proved.

\begin{figure}[t]
\centerline{\scalebox{0.9}{\input{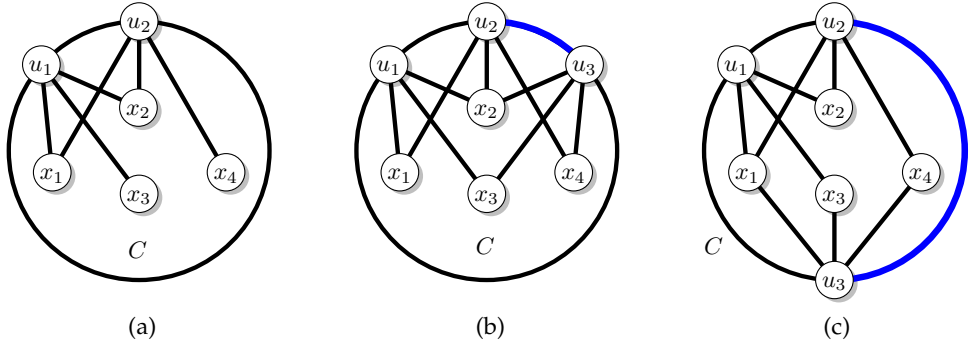}}}
\caption{Illustrations for the proof of Lemma~\ref{lemma:major}.}
\label{figure:fig4}
\end{figure}

By the claim above, the lemma holds if $|M_G(C)|=3$ or $|U|=1$.  It
remains to consider the cases with $|M_G(C)|\geq 4$ and
$U=\{u_1,u_2\}$ (thus, there are edges $u_1x_1$ and $u_2x_2$) by
showing that either $u_1$ or $u_2$ is adjacent to each node $x\in
M_G(C)$.  Assume $x_3\in M_G(C)\setminus N_G(u_2)$ and $x_4\in
M_G(C)\setminus N_G(u_1)$ for contradiction.  By the claim above, 
$u_1x_3$ and $u_2x_4$ are edges of $G$. We know $x_3\notin N_G(x_4)$ or
else $u_1u_2x_4x_3u_1$ is a $4$-hole.  See
Figure~\ref{figure:fig4}(a).  Observe that $x_4$ cannot be adjacent to
both of $x_1$ and $x_2$ or else $u_1x_1x_4x_2u_1$ is a $4$-hole.
{\em Case~1: $x_4$ is not adjacent to $x_2$}.  
By Lemma~\ref{lemma:gate}(\ref{gate:item2}), a node $u_3$ of $C$ is
adjacent to all of $x_2$, $x_3$, and $x_4$.  Since $u_3$ is adjacent
to both of $x_3$ and $x_4$, we have $u_3\notin\{u_1,u_2\}$.  See
Figure~\ref{figure:fig4}(b).  If $u_2u_3$ is an edge of $C$, then
$u_1x_3u_3u_2u_1$ is a $4$-hole; otherwise, $u_2x_2u_3x_4u_2$ is a
$4$-hole, a contradiction.
{\em Case~2: $x_4$ is not adjacent to $x_1$}.  
By Lemma~\ref{lemma:gate}(\ref{gate:item2}), a node $u_3$ of $C$ is
adjacent to all of~$x_1$, $x_3$, and $x_4$.  Since $u_3$ is adjacent
to both of $x_3$ and $x_4$, we have $u_3\notin\{u_1,u_2\}$.  See
Figure~\ref{figure:fig4}(c).  If $u_2u_3$ is an edge of $C$, then
$u_1x_3u_3u_2u_1$ is a $4$-hole; otherwise, $u_2x_1u_3x_4u_2$ is a
$4$-hole, a contradiction.
\end{proof}


\section{Proving Lemma~\ref{lemma:decomposition}}
\label{section:decomposition}

Subset $S$ of $V(H)$ is a {\em star-cutset}~\cite{Chvatal85} of graph
$H$ if $S\subseteq N_H[s]$ holds for some node $s$ of $S$ and the
number of connected components of $H\setminus S$ is larger than that
of $H$.

\begin{lemma}
\label{lemma:star-cutset}
For any tracker $T=(H,u_1u_2u_3)$ of an $n$-node $m$-edge beetle-free
connected graph $G$, it takes~$O(mn^3)$ time to complete one of the
following three tasks.  Task~1: Ensuring that $H$ contains even holes.
Task~2: Ensuring that $T$ is not lucky.  Task~3: Obtaining an induced
subgraph $H'$ of $H$ having no star-cutsets such that if $T$ is lucky,
then $H'$ contains even holes.
\end{lemma}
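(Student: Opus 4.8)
The plan is to shrink the tracker iteratively, maintaining the invariant ``if the input $T=(H,u_1u_2u_3)$ is lucky then the current tracker is lucky'' and forcing $|V(H)|$ to decrease strictly at each step; then after at most $n$ steps the current $H$ has no star-cutsets (Task~3), unless along the way I either exhibit an even hole of the current (hence of the original) $H$ (Task~1) or certify that the current (hence the original) tracker is not lucky (Task~2). Two harmless preliminaries. A $4$-hole is an even hole, so if $H$ has one --- testable in $O(n^{4})$ time --- Task~1 is done, and I may assume $H$ is $4$-hole-free henceforth (beetle-freeness and connectivity descend to the induced subgraphs arising below). And if $u_1\sim u_3$ then $u_1u_2u_3u_1$ is a triangle, which lies in no hole, so $T$ is not lucky and Task~2 holds; so assume $u_1\not\sim u_3$.

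One iteration proceeds as follows. First I clear all dominated nodes of the current $H$, deleting them one at a time and, after each deletion, re-choosing a $3$-path of the smaller graph for which luckiness is preserved; that such a $3$-path always exists and is computable is a self-contained sub-lemma proved using beetle-freeness and the tameness of $N_H(C)$ for a clean shortest even hole $C$ (Lemma~\ref{lemma:hole-node}). Afterwards $H$ has no dominated node, so by Chv\'{a}tal~\cite{Chvatal85} it has a star-cutset if and only if it has a \emph{full} star-cutset $N_H[s]$: if $S\subseteq N_H[s]$ with $s\in S$ were a star-cutset while $N_H[s]$ is not, then some component $B$ of $H\setminus S$ would lie inside $N_H(s)$, making every node of $B$ dominated by $s$. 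Full star-cutsets are found by testing, for each of the $O(n)$ centres $s$, whether $H\setminus N_H[s]$ has more components than $H$, in $O(m)$ time each. If none does, I output $H'=H$: it has no star-cutset, and by the invariant ``$T$ lucky'' $\Rightarrow$ ``$(H',u_1u_2u_3)$ lucky'' $\Rightarrow$ ``$H'$ has an even hole'', completing Task~3. Otherwise I fix a full star-cutset $S=N_H[s]$ with components $B_1,\dots,B_k$ ($k\ge 2$) of $H\setminus S$.

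The shrinking step is the heart of the proof. Using only the incidences of $u_1,u_2,u_3$ with $S$ and with $B_1,\dots,B_k$, the algorithm performs one of two moves: \textbf{(a)} replace $H$ by $H[S\cup B_i]$, keeping the path $u_1u_2u_3$, when the incidence data forces every lucky hole through $u_1u_2u_3$ to lie inside that single $S\cup B_i$; or \textbf{(b)} replace $H$ by $H\setminus\{s\}$, when the data forces $s$ off every lucky hole through $u_1u_2u_3$ --- for instance when $s\sim u_2$ and $s\notin\{u_1,u_2,u_3\}$, since on an induced cycle through the path $u_1u_2u_3$ the vertex $u_2$ already has exactly $u_1$ and $u_3$ as its cycle-neighbours, so it cannot be adjacent on the cycle to the further vertex $s$. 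Each move strictly shrinks $V(H)$, and in move~(b) one has $s\notin\{u_1,u_2,u_3\}$, so the path survives. Correctness rests on the shape of a putative $u_1u_2u_3$-hole $C$ of the current $H$ (recall this is a clean shortest even hole of which $u_1u_2u_3$ is a $3$-path): since $H$ is beetle-free and $C$ is clean, Lemma~\ref{lemma:hole-node} gives $N_H(C)\subseteq N_H^{1,1}(C)\cup N_H^{1}(C)\cup N_H^{2}(C)\cup N_H^{3}(C)$, so $C\cap S$ is either \textbf{(1)} a connected sub-path of $C$ on at most three vertices --- the cases $s\in V(C)$ and $s\in N_H^{1}(C)\cup N_H^{2}(C)\cup N_H^{3}(C)$ --- in which $C\setminus S$ is one sub-path, so $V(C)$ lies in $S\cup B$ for a \emph{unique} component $B=B(C)$, which the path identifies whenever $\{u_1,u_2,u_3\}\not\subseteq S$; or \textbf{(2)} a pair of non-adjacent vertices of $C$, i.e.\ $s\in N_H^{1,1}(C)$, in which $s\notin V(C)$ while the two arcs of $C$ between those vertices may land in two distinct components, so that no $S\cup B_i$ carries all of $V(C)$ --- but exactly there $s$ is off the hole, so move~(b) keeps it.

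The main obstacle is to make the above case analysis airtight: for every configuration of $(u_1,u_2,u_3)$ relative to $S$ and $B_1,\dots,B_k$, decide which of (a),(b) applies and --- in case (a) --- which component (the awkward sub-case being $\{u_1,u_2,u_3\}\subseteq S$, where $B(C)$ must be located via a $C$-neighbour of an endpoint of the path), and prove the chosen move preserves luckiness. The crux of this is shape~(2): a $u_1u_2u_3$-hole may fail to fit in any single $S\cup B_i$, and one must recognise exactly those configurations from the incidence data and show that there the centre $s$ lies on no lucky hole, so deleting it is legitimate. I expect an elementary but case-heavy argument driven by Lemma~\ref{lemma:hole-node}, by $4$-hole-freeness (which forces the parities of the two arcs in the $N_H^{1,1}$ case, hence the ``split'' structure), and by the minimality of $C$. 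For the running time there are at most $n$ iterations (each strictly shrinks $V(H)$); within one iteration, clearing all dominated nodes costs $O(m n^{2})$ (at most $n$ deletions, each located by an $O(mn)$ scan that tests $N_H[v]\subseteq N_H[u]$ for every vertex $v$ and neighbour $u$ of $v$), while full-star-cutset detection and the shrinking step cost $O(mn)$; so the total is $O(m n^{3})$.
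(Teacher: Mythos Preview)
Your overall architecture matches the paper's: iteratively clear dominated nodes (Lemma~\ref{lemma:dominate-find}), locate a full star-cutset $S=N_H[s]$ via Chv\'atal, then shrink; the running-time accounting is also right. The gap is in the shrinking step: moves~(a) and~(b) together are not exhaustive, and the missing ingredient is exactly the paper's Lemma~\ref{lemma:block}.

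Consider the configuration $u_2\notin S$. You would apply move~(a) with $B$ the component of $u_2$. But if $s\in N_H^{1,1}(C)$ for the $u_1u_2u_3$-hole $C$ and the two arcs of $C\setminus N_C(s)$ land in distinct components, then $C\not\subseteq H[S\cup B]$ and move~(a) loses luckiness. Move~(b) is not available either: from the incidence data alone you cannot certify $s\notin V(C)$, since $s$ could equally well sit on the long arc of some $u_1u_2u_3$-hole far from $u_1,u_2,u_3$, and deleting it would destroy that hole. A second bad case is $s=u_2$: then $s\in V(C)$ always, so move~(b) is impossible, yet several components may each be adjacent to both $u_1$ and $u_3$, leaving move~(a) unable to choose. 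The paper resolves both cases not by deleting $s$ but by a third option you omit: test whether some non-adjacent pair $s_1,s_2\in S$ has two common neighbouring components (Condition~B\ref{block2}, an $O(mn^2)$ check); if so, shortest $s_1$--$s_2$ paths through the two components together with the edges $ss_1,ss_2$ give three holes of which one is even (Task~1). If this test fails, Lemma~\ref{lemma:block} guarantees that every $u_1u_2u_3$-hole lies in a single $H[S\cup B]$, and then your move~(a) works, with uniqueness of $B$ in the $\{u_1,u_2,u_3\}\subseteq S$ sub-case following precisely from the failure of Condition~B\ref{block2} for the pair $(u_1,u_3)$. So move~(b) is never needed; replacing it by the B\ref{block2}-test repairs your argument.
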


\begin{lemma}
\label{lemma:no-star-cutsets}
It takes $O(mn^4)$ time to determine if an $n$-node $m$-edge graph
having no star-cutsets contains even holes.
\end{lemma}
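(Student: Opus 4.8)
The plan is to reduce the problem to repeatedly detecting a \emph{non-path $2$-join}, decomposing along it, and recognizing extended clique trees, invoking the da~Silva--Vu\v{s}kovi\'{c} decomposition theorem (Theorem~\ref{lemma:structure-new}) as the engine for correctness. Let $H$ be the input $n$-node $m$-edge graph with no star-cutsets. First I would handle the trivial case: if $H$ has $O(1)$ nodes, brute force in $O(1)$ time. Otherwise, by the decomposition theorem a connected even-hole-free graph with no star-cutsets is either an extended clique tree or has a non-path $2$-join. So the first step is to test whether $H$ is an extended clique tree (this requires first extracting the maximal cliques; by Lemma~\ref{lemma:clique1} this takes $O(mn^2)$ time and either already exposes an even hole or yields the $O(n^2)$ maximal cliques, after which checking the clique-tree structure and the ``extended'' condition is cheap). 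If $H$ is an extended clique tree, apply Lemma~\ref{lemma:extended-clique-tree-even-hole} to decide even-hole-freeness directly. If $H$ is not an extended clique tree and is even-hole-free, the theorem forces a non-path $2$-join to exist; so the second step is to run the non-path-$2$-join detection algorithm of Charbit et al.\ (Lemma~\ref{lemma:non-path-2-join}). If none is found, $H$ must contain an even hole and we report it. If one is found, decompose $H$ along it into the two standard \emph{$2$-join blocks} $H_1$ and $H_2$ (each obtained from one side of the split augmented by a short ``marker path'' of the appropriate parity), which is the classical even-hole-preserving $2$-join decomposition: $H$ is even-hole-free iff both blocks are; this is where one must be careful to use the \emph{non-path} hypothesis so the marker paths are well-defined and the decomposition strictly decreases a size measure.

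The third step is to recurse on $H_1$ and $H_2$. The key structural point — and this is what makes the time bound work — is that the $2$-join blocks of a graph with no star-cutsets again have no star-cutsets (or can be cleaned to have none without affecting even-hole-freeness), so the decomposition theorem applies recursively, and as argued in the proof of Lemma~\ref{lemma:no-star-cutsets}'s companion Lemma~\ref{lemma:no-star-cutsets} (the road-map paragraph says this stage decomposes an $m$-edge graph with no star-cutsets into a set of $O(m)$ smaller graphs each of which is an extended clique tree or has $O(1)$ nodes), the total number of blocks produced over the whole recursion is $O(m)$, and each block has at most $n$ nodes and $O(m)$ edges. Running the non-path-$2$-join detector (which by Lemma~\ref{lemma:non-path-2-join} costs roughly $O(mn^2)$ or so per call, or at worst $O(n^4)$) and the extended-clique-tree even-hole test on each of the $O(m)$ blocks gives the claimed $O(mn^4)$ total. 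Concretely, I would: (i) maintain a worklist of blocks; (ii) pop a block $B$, test if $|V(B)|=O(1)$ or $B$ is an extended clique tree and if so decide it directly; (iii) otherwise search for a non-path $2$-join; if absent, halt and declare ``$H$ contains even holes''; if present, split $B$ into two blocks and push both; (iv) if the worklist empties with every block declared even-hole-free, declare ``$H$ is even-hole-free.''

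The correctness argument runs by induction on the size measure that the $2$-join decomposition strictly decreases: the decomposition theorem gives, for each block, the dichotomy extended-clique-tree-or-has-a-non-path-$2$-join \emph{under the assumption of even-hole-freeness}, so whenever the algorithm fails to find a non-path $2$-join in a non-trivial non-extended-clique-tree block, that block — hence (by the even-hole-preserving property of the decompositions, applied up the recursion tree) the original $H$ — must contain an even hole. Conversely, if every leaf block is even-hole-free then $H$ is, again by the even-hole-preserving property. The main obstacle I anticipate is the bookkeeping needed to establish that only $O(m)$ blocks are ever generated — i.e., that the marker-path blow-up in the $2$-join decomposition does not cascade — and the accompanying verification that ``no star-cutset'' (or a cleanable surrogate of it) is inherited by the blocks so the decomposition theorem can be reapplied; this is the crux that ties the structural theorem to the $O(mn^4)$ running time, and it is where I would spend most of the detailed work, the non-path-$2$-join detection and the extended-clique-tree test being essentially black boxes supplied by Lemmas~\ref{lemma:non-path-2-join}, \ref{lemma:clique1}, and~\ref{lemma:extended-clique-tree-even-hole}.
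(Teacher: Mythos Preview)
Your plan is essentially the paper's own proof: maintain a worklist of blocks, on each block either recognize it as an extended clique tree and decide it via Lemma~\ref{lemma:extended-clique-tree-even-hole}, or find a non-path $2$-join via Lemma~\ref{lemma:non-path-2-join} and split into two parity-preserving blocks, or (if neither) declare an even hole by Theorem~\ref{lemma:structure-new}. The two obstacles you flag are exactly the two supporting lemmas the paper supplies: Lemma~\ref{lemma:if-and-only-if} (the blocks again have no star-cutsets, and the decomposition is even-hole-preserving) and Lemma~\ref{lemma:fewer-edges} (each block has at most $n$ nodes and at most $m-1$ edges), from which a short induction yields the $O(m)$ bound on the total number of blocks.

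One small correction: your proposed extended-clique-tree test via Lemma~\ref{lemma:clique1} is not quite right, since that lemma assumes the input is $4$-hole-free, which you do not know for an arbitrary block. The paper instead tests the definition directly: for each of the $O(n^2)$ sets $S$ of at most two nodes, compute the biconnected components of $H\setminus S$ and check whether each is a clique, in $O(n^2)$ time per $S$, hence $O(n^4)$ overall (this is packaged together with the $2$-join search and Theorem~\ref{lemma:structure-new} as Lemma~\ref{lemma:non-path-2-join-free}). With that fix your outline matches the paper's argument.
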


\begin{proof}[Proof of Lemma~\ref{lemma:decomposition}]
We apply Lemma~\ref{lemma:star-cutset} on the input tracker
$T=(H,u_1u_2u_3)$ of $G$ in $O(m n^3)$ time.  If Task~1 or~2 is
completed, then the lemma is proved.  If Task~3 is completed, then 
since $H'$ has no star-cutsets, Lemma~\ref{lemma:no-star-cutsets} 
implies that it takes $O(mn^4)$ time to determine whether $H'$ 
contains even holes.  Since $H'$ is an induced subgraph of $H$, if 
$H'$ contains even holes, then so does $H$; otherwise, $T$ is not 
lucky.
\end{proof}

Subsection~\ref{section:star_cut} proves
Lemma~\ref{lemma:star-cutset}.  Subsection~\ref{section:2-join} proves
Lemma~\ref{lemma:no-star-cutsets}.


\subsection{Proving Lemma~\ref{lemma:star-cutset}}
\label{section:star_cut}

A star-cutset $S$ of graph $H$ is {\em full} if $S=N_H[s]$ holds for
some node $s$ of $S$.  Full star-cutsets in an $n$-node $m$-edge graph
can be detected in $O(mn)$ time.  Node $x$ {\em dominates} node~$y$ in
graph $H$ if $x \neq y$ and~$N_H[y] \subseteq N_H[x]$.  Node $y$ is
{\em dominated} in $H$ if some node of $H$ dominates $y$ in $H$.  We
need the following three lemmas to prove
Lemma~\ref{lemma:star-cutset}.

\begin{lemma}[{Chv\'{a}tal~\cite[Theorem~1]{Chvatal85}}]
\label{lemma:fullstar}
A graph having no dominated nodes and full star-cutsets has no
star-cutsets.
\end{lemma}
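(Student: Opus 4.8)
The plan is to establish the contrapositive: assuming $H$ has a star-cutset, I will produce either a dominated node of $H$ or a full star-cutset of $H$. Since any star-cutset of $H$ is contained in a single connected component $H_0$ of $H$ and is then a star-cutset of $H_0$, while conversely a dominated node or a full star-cutset of $H_0$ is also one of $H$, I may assume $H$ is connected.

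So let $S$ be a star-cutset of $H$ centered at a node $s\in S$, so that $S\subseteq N_H[s]$ and $H\setminus S$ has connected components $A_1,\dots,A_k$ with $k\ge 2$. The proof branches on whether the larger set $N_H[s]$ is itself a cutset. Note first that, because each $A_i$ is disjoint from $S$ and $s\in S$, we have $H\setminus N_H[s]=(H\setminus S)\setminus N_H(s)$; in particular the components of $H\setminus N_H[s]$ refine those of $H\setminus S$, with no new adjacencies created.

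\emph{Case 1: at least two components, say $A_1$ and $A_2$, each contain a node non-adjacent to $s$.} Then $A_1\setminus N_H(s)$ and $A_2\setminus N_H(s)$ are nonempty and lie in distinct components of $H\setminus N_H[s]$, so $H\setminus N_H[s]$ is disconnected. Hence $N_H[s]$ is a star-cutset of $H$, and by definition it is full; we are done. \emph{Case 2: at most one component contains a node non-adjacent to $s$.} Since $k\ge 2$, after relabeling we may assume $A_2\subseteq N_H(s)$. Fix any $y\in A_2$. Because $A_2$ is a connected component of $H\setminus S$, every neighbor of $y$ lies in $A_2\cup S$, so $N_H[y]\subseteq\{y\}\cup A_2\cup S$. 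Now $\{y\}\cup A_2\subseteq N_H(s)\subseteq N_H[s]$ and $S\subseteq N_H[s]$, hence $N_H[y]\subseteq N_H[s]$; and $y\in N_H(s)$ forces $y\ne s$. Thus $s$ dominates $y$ in $H$, and we are done.

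There is no deep obstacle here; the only points requiring care are the reduction to a connected $H$ and the bookkeeping behind ``$H\setminus N_H[s]$ has more components than $H$'', i.e., checking that passing from $S$ to $N_H[s]$ does not merge the surviving pieces of distinct $A_i$'s — both of which follow immediately from the fact that $H\setminus S$ already separates those pieces and that no edge leaves any $A_i$.
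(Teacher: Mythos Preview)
Your proof is correct. The paper does not supply its own proof of this lemma; it is stated with attribution to Chv\'atal~\cite{Chvatal85} and used as a black box. Your argument is the standard one and matches Chv\'atal's original: take a minimal star-cutset $S$ centered at $s$, and observe that either $N_H[s]$ already separates the graph (yielding a full star-cutset) or some component of $H\setminus S$ lies entirely inside $N_H(s)$, in which case any node of that component is dominated by $s$.
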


\begin{lemma}
\label{lemma:dominate-find}
If $T=(H,u_1u_2u_3)$ is a tracker of an $n$-node $m$-edge beetle-free
connected graph $G$, then it takes~$O(mn^2)$ time to obtain a tracker
$T'=(H',u'_1u'_2u'_3)$ of $G$, where $H'$ is an induced subgraph of
$H$ having no dominated nodes, such that if $T$ is lucky, then so is
$T'$.
\end{lemma}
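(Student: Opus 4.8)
The plan is to strip dominated nodes off $H$ one at a time, carrying the distinguished $3$-path along and touching it only when one of its nodes is deleted. Concretely: while the current graph has a dominated node, pick such a node $y$ together with a node $x$ that dominates it, delete $y$; if $y\notin\{u_1,u_2,u_3\}$ keep the path, and otherwise substitute $x$ for $y$ in the path (one checks below that this substitution again yields a $3$-path whenever $T$ is lucky, and if it ever fails to do so then $T$ must be not lucky, so any $3$-path of the shrunken graph will do). Each step deletes a node, so there are $O(n)$ steps; maintaining an adjacency matrix, one step costs $O(mn)$ time, since for each node $y$ and each $x\in N_H(y)$ the test $N_H[y]\subseteq N_H[x]$ takes $O(|N_H(y)|)$ time, for a total of $\sum_y|N_H(y)|^2\le n\cdot\sum_y|N_H(y)|=O(mn)$. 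Hence the whole procedure runs in $O(mn^2)$ time and its output $H'$ has no dominated nodes; note that $H$, and every graph arising along the way, is beetle-free as an induced subgraph of $G$.

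Correctness reduces, by induction on the number of deletions, to the following claim: if the current tracker $(H,u_1u_2u_3)$ is lucky and $y$ is dominated in $H$ by $x$, then the tracker obtained by deleting $y$ and updating the path is still lucky. Fix a $u_1u_2u_3$-hole $C$ of $H$ (a clean shortest even hole through the path); by Lemma~\ref{lemma:hole-node}, $C$ is in fact super clean, i.e.\ $M_H(C)=N_H^{2,2}(C)=N_H^{1,2}(C)=N_H^4(C)=\varnothing$. If $y\notin V(C)$, then $C$ remains an induced cycle of $H\setminus\{y\}$, remains a shortest even hole there (an induced subgraph of $H$ has no shorter even hole), and for every node $z$ still adjacent to $C$ the pair $(C,N_C(z))$ is unchanged, so $M_{H\setminus\{y\}}(C)=N_{H\setminus\{y\}}^{2,2}(C)=\varnothing$ and $(H\setminus\{y\},u_1u_2u_3)$ is lucky. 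Now assume $y\in V(C)$, with $C=\cdots\alpha\,y\,\beta\cdots$. Since $\alpha,\beta\in N_H(y)\subseteq N_H[x]$ with $x\neq\alpha,\beta$, if $x$ lay on $C$ it would have to be adjacent on $C$ to both $\alpha$ and $\beta$, impossible on a hole; hence $x\notin V(C)$ and $\{\alpha,y,\beta\}\subseteq N_C(x)$. As $C$ is super clean and $\{\alpha,y,\beta\}$ is a connected run of three nodes of $C$, the only class of~(\ref{equation:partition}) that $x$ can belong to is $N_H^3(C)$, so $N_C(x)=\{\alpha,y,\beta\}$ exactly. We therefore \emph{rotate} $C$ into $C'=(C\setminus\{y\})\cup\{x\}$: since $x$ is adjacent within $V(C')$ only to $\alpha$ and $\beta$, $C'$ is an induced cycle of $H$ of the same (even, hence shortest) length as $C$, it avoids $y$, and thus it is a shortest even hole of $H\setminus\{y\}$; moreover $N_C(x)=\{\alpha,y,\beta\}$ makes the updated path (obtained by the substitution $y\mapsto x$ when $y\in\{u_1,u_2,u_3\}$, unchanged otherwise) a $3$-path of $C'$.

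The crux — and the main obstacle — is showing that $C'$ is clean in $H\setminus\{y\}$; since $C'$ is already a shortest even hole there, it suffices to rule out that any node is major for $C'$ or of type $N^{2,2}$. Let $z$ be a node of $H\setminus\{y\}$ outside $C'$, so $z\notin V(C)$ and $z\neq x$, and domination gives $z\sim_H y\Rightarrow z\sim_H x$. Via the cycle isomorphism $\varphi\colon C\to C'$ fixing $V(C)\setminus\{y\}$ and sending $y$ to $x$: if $z\sim y$ then $N_{C'}(z)=\varphi(N_C(z))$, so $z$ inherits its (non-major, non-$N^{2,2}$) type from $C$; and if $z\not\sim x$ (hence $z\not\sim y$) then $N_{C'}(z)=N_C(z)$ and the same holds. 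The remaining case is $z\sim x$, $z\not\sim y$, where $N_{C'}(z)=N_C(z)\cup\{x\}$ with, by super cleanness of $C$, $N_C(z)$ a connected run of at most three nodes of $C$ or a pair of non-adjacent singletons. A short inspection using that on $C'$ the node $x$ is adjacent only to $\alpha$ and $\beta$ shows: if $z$ is adjacent to $\alpha$ or to $\beta$ then $C'[N_{C'}(z)]$ is a connected run, so $z$ is nice for $C'$ unless it ends up in $N^4_{H\setminus\{y\}}(C')$, which by Figure~\ref{figure:fig2}(b) would make $G[C'\cup\{z\}]$ a beetle — impossible; and if $z$ is adjacent to neither $\alpha$ nor $\beta$, then $x$ sits as an isolated component of $C'[N_{C'}(z)]$, so $z$ can become bad for $C'$ only when $N_C(z)$ is a run of exactly three consecutive nodes of $C$, or a pair of non-adjacent singletons, disjoint from $\{\alpha,y,\beta\}$ — and this forces $|C|\ge 6$. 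In that last situation one combines $x$, $z$, the edge $xz$, and the two arcs of $C$ out of $\{\alpha,y,\beta\}$ (together with the node $y$ when needed) to exhibit inside $H$ either an even hole shorter than $C$, contradicting minimality of $C$, or a beetle — with diamond $\{x,\alpha,y,\beta\}$ and chord $xy$, completed by a three-leaf tree routed through $z$ along a long arc of $C$ — contradicting beetle-freeness of $G$. Hence no bad node exists, $C'$ is clean, the updated tracker is lucky, and the induction yields the lemma. The delicate case distinction in this last paragraph, which relies essentially on beetle-freeness of $G$ and on the super-cleanness of $C$ guaranteed by Lemma~\ref{lemma:hole-node}, is where the real work lies.
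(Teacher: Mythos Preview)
Your algorithm and inductive invariant are exactly the paper's: delete a dominated node $y$, substitute its dominator $x$ into the path when needed, and argue that the rotated hole $C'=(C\setminus\{y\})\cup\{x\}$ is again a clean shortest even hole. The divergence is only in how you prove cleanness of $C'$. The paper disposes of the case $z\in M_H(C')$ in one line via Lemma~\ref{lemma:hole-parity}: since $C'$ is a shortest even hole of $H$, any major node of $C'$ has an \emph{even} number of neighbours on it, hence $|N_{C'}(z)|\ge 4$; together with $|N_C(z)|\le 3$ this forces $z\in N^3_H(C)$ with $N_{C'}(z)=N_C(z)\cup\{x\}$ of size four, and then $H[C'\cup\{z\}]$ itself is a beetle (the diamond sits at $z$ and its $3$-run on $C$, not at $\{x,\alpha,y,\beta\}$). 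In particular the parity lemma instantly kills your $N^{1,1}$ subcase, where $|N_{C'}(z)|=3$ is odd.

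Your alternative --- hunting for a shorter even hole or an ad hoc beetle in the $N^{1,1}$ and $N^3$ subcases --- can indeed be made rigorous (for $N^{1,1}$ one of the four holes through $z$ and possibly $x$ is even and shorter; for $N^3$ your diamond $\{x,\alpha,y,\beta\}$ with tree routed through $z$ along $C\setminus\{y,c_2\}$ works), but you leave precisely this, the crux, as a sketch. Invoking Lemma~\ref{lemma:hole-parity} is both cleaner and already available. One small inaccuracy: your claim that ``$C'[N_{C'}(z)]$ is a connected run'' whenever $z\sim\alpha$ or $z\sim\beta$ is false (take $z\in N^{1,1}_H(C)$ with one foot at $\alpha$ and the other elsewhere; then $z\in N^{1,2}(C')$), though your conclusion that $z\notin M(C')\cup N^{2,2}(C')$ in that case is still correct.
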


\begin{proof}
\begin{figure}[t]
\centerline{\scalebox{0.9}{\input{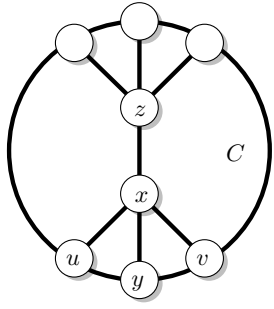}}}
\caption{An illustration for the proof of
  Lemma~\ref{lemma:dominate-find}.}
\label{figure:fig5}
\end{figure}
We first prove the following claim for any beetle-free graph $H$: {\em
  ``If a node $x$ of $H$ dominates a node~$y$ of a clean shortest even
  hole $C$ of $H$, then $C'=H[C \cup \{x\} \setminus \{y\}]$ is a
  clean shortest even hole of $H$.''}  Let $u$ and $v$ be the
neighbors of $y$ on $C$.  Since $C$ is a hole and $y\in C$, we
know~$x\notin C$, implying~$x\in N_H(C)$.  Since $x$ dominates $y$ and
$|N_C[y]| = 3$, there is a connected component of~$C[N_C(x)]$ having
at least $3$ nodes.  By Lemma~\ref{lemma:hole-node}, we have $x\in
N_H^{3}(C)$, implying $N_C(x)=\{u,y,v\}$.  Thus, $C'$ is a shortest
even hole of $H$.  Assume $z\in M_H(C') \cup N_H^{2,2}(C')$ for
contradiction.  By~$y\in N_H^3(C')$, $z\neq y$.  By $C \setminus \{y\}
= C' \setminus \{x\}$, exactly one of $x$ and $y$ is adjacent to $z$
in $H$ or else $z\in M_H(C)\cup N_H^{2,2}(C)$, contradicting the fact
that $C$ is clean.
{\em Case~1: $z \in N_H^{2,2}(C')$}.
If~$z\in N_H(y)\setminus N_H(x)$, then we have~$z\in M_H(C)$,
contradicting the assumption that $C$ is a clean hole of $H$.
If~$z\in N_H(x)\setminus N_H(y)$, then~$z\in N_H^{1,2}(C)$,
contradicting Lemma~\ref{lemma:hole-node}.
{\em Case~2: $z \in M_H(C')$}. 
By $|N_{C'}(z)|\geq 3$ and Lemma~\ref{lemma:hole-parity}, $|N_{C'}(z)|
\ge 4$.  By $M_H(C) = N_H^{2,2}(C) = \varnothing$ and
Lemma~\ref{lemma:hole-node}, $|N_{C}(z)| \le 3$.  By $C\setminus
\{x\}=C\setminus\{y\}$, we have $z\in N_H(x)\setminus N_H(y)$,
$|N_C(z)|=3$, and $|N_{C'}(z)|=4$.  By Lemma~\ref{lemma:hole-node},
$z\in N_H^{3}(C)$. See Figure~\ref{figure:fig5} for an
illustration. Thus, $C[N_C(z)]$ is a $3$-path, implying that $H[C'\cup
  \{z\}]$ is a beetle $B$ of $H$ in which $B[N_B[z]]$ is a diamond, a
contradiction.  The claim is proved.

The algorithm first iteratively updates $(H,u_1u_2u_3)$ by the
following steps until $H$ has no dominated nodes, and then
outputs the resulting $(H,u_1u_2u_3)$ as $(H',u'_1u'_2u'_3)$.
\begin{enumerate}[\em {Step}~1:]
\addtolength{\itemsep}{-0.5\baselineskip}
\item 
\label{dominated1}
Let $x$ and $y$ be two nodes of $H$ such that $x$ dominates $y$.

\item
\label{dominated2}
If there is an $i\in \{1,2,3\}$ with $y=u_i$, then let $u_i=x$.

\item 
\label{dominated3}
Let $H=H \setminus \{y\}$.
\end{enumerate}
It takes $O(mn)$ time to detect nodes $x$ and $y$ such that $x$
dominates $y$. Each iteration of the loop decreases $|V(H)|$ by one
via Step~\ref{dominated3}.  Therefore, the overall running time is
$O(m n^2)$.  Graph $H'$ is an induced subgraph of the initial
$H$. $H'$ has no dominated nodes. It suffices to ensure that if the
tracker $T=(H,u_1u_2u_3)$ of $G$ at the beginning of an iteration is
lucky, then the tracker at the end of the iteration, denoted
$T'=(H',u'_1u'_2u'_3)$, remains lucky.  Let $C$ be a $u_1u_2u_3$-hole
of $H$.  If $y\notin C$, then $C$ is a $u'_1u'_2u'_3$-hole of
$H'=H\setminus \{y\}$.  If $y\in C$, then the claim above ensures that
$C' = H[C \cup \{x\} \setminus \{y\}]$ is a clean shortest even hole
of $H$.  Since $x$ dominates $y$, $u'_1u'_2u'_3$ is a path of hole
$C'$. Thus, $C'$ is a $u'_1u'_2u'_3$-hole of~$H'$.  Either way,
$(H',u'_1u'_2u'_3)$ is lucky.
\end{proof}

\begin{lemma}
\label{lemma:block}
If $(H,u_1u_2u_3)$ is a lucky tracker of graph $G$ and $S$ is a full
star-cutset of $H$, then one of the following two conditions holds:
\begin{enumerate}[{Condition~B}1:]
\addtolength{\itemsep}{-0.5\baselineskip}
\item 
\label{block1}
For each $u_1u_2u_3$-hole $C$ of $H$, there exists a connected
component $B$ of $H\setminus S$ satisfying $C\subseteq H[B\cup S]$.

\item 
\label{block2}
There are two non-adjacent nodes $s_1$ and $s_2$ of $S$ and two
connected components $B_1$ and $B_2$ of $H\setminus S$ with
$\{s_1,s_2\}\subseteq N_H(B_1)$ and $\{s_1,s_2\}\subseteq N_H(B_2)$.
\end{enumerate}
\end{lemma}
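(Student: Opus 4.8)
I would prove the equivalent implication that if Condition~B1 fails then Condition~B2 holds. So suppose there is a $u_1u_2u_3$-hole $C$ of $H$ that is not a subgraph of $H[B\cup S]$ for any connected component $B$ of $H\setminus S$. Deleting the nodes of $V(C)\cap S$ from the cycle $C$ leaves a collection of vertex-disjoint arcs of $C$ that together cover $V(C)\setminus S$; each such arc is connected in $H\setminus S$ and hence lies inside a single component. If $V(C)\cap S$ were empty, or more generally if $C[V(C)\cap S]$ had at most one connected component, then there would be a single arc, lying in one component $B$, and $C$ would be a subgraph of $H[B\cup S]$, a contradiction. So the first thing to record is that $C[V(C)\cap S]$ has at least two connected components, and in particular $|V(C)\cap S|\ge 2$.

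Write $S=N_H[s]$, which is legitimate because $S$ is a \emph{full} star-cutset. I would first argue $s\notin V(C)$: if $s\in V(C)$, then the two neighbors $s',s''$ of $s$ on the induced cycle $C$ lie in $N_H(s)\subseteq S$ and are non-adjacent (a chord of $C$ is forbidden), so $V(C)\cap S=\{s,s',s''\}$ induces the $3$-path $s'ss''$ of $C$, contradicting the previous paragraph. Hence $V(C)\cap S = V(C)\cap N_H(s)=N_C(s)$, so $s\in N_H(C)$ and $C[N_C(s)]$ has at least two connected components. Because $C$ is a clean shortest even hole, $M_H(C)=N_H^{2,2}(C)=\varnothing$; thus $s\notin M_H(C)$ and $s\notin N_H^{2,2}(C)$, and since $C[N_C(s)]$ is disconnected, Equation~(\ref{equation:partition}) forces $s\in N_H^{1,1}(C)\cup N_H^{1,2}(C)$.

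The possibility $s\in N_H^{1,2}(C)$ I would eliminate exactly as in the proof of Lemma~\ref{lemma:hole-node}: writing $N_C(s)=\{u\}\cup\{v_1,v_2\}$ with $v_1v_2$ an edge of $C$, one of $H[\{s\}\cup P_1]$ and $H[\{s\}\cup P_2]$ --- where $P_1$ (resp.~$P_2$) is the path of $C\setminus\{v_2\}$ (resp.~$C\setminus\{v_1\}$) between $u$ and $v_1$ (resp.~$v_2$) --- is an even hole of $H$ strictly shorter than $C$, contradicting the minimality of $C$. Therefore $s\in N_H^{1,1}(C)$, i.e.~$N_C(s)=\{s_1,s_2\}$ with $s_1s_2\notin E(H)$. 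The two nodes $s_1$ and $s_2$ split $C$ into exactly two arcs $A_1$ and $A_2$, both nonempty because $s_1s_2$ is not an edge of $C$; if $A_1$ and $A_2$ were in the same component $B$ we would again get $C\subseteq H[B\cup S]$, so $A_1\subseteq B_1$ and $A_2\subseteq B_2$ for distinct components $B_1,B_2$ of $H\setminus S$. Since on $C$ each of $s_1,s_2$ is adjacent to an endpoint of $A_1$ (a node of $B_1$) and to an endpoint of $A_2$ (a node of $B_2$), we get $\{s_1,s_2\}\subseteq N_H(B_1)$ and $\{s_1,s_2\}\subseteq N_H(B_2)$, which is Condition~B2.

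The part I expect to be most delicate is the middle step --- pinning down, via cleanness, the minimality of $C$, and the partition of Equation~(\ref{equation:partition}), that $s$ must lie in $N_H^{1,1}(C)$; once that is in hand, reading off Condition~B2 from the two arcs cut out by $N_C(s)$ is straightforward. I would also note that this argument never invokes beetle-freeness of $G$ or $H$, so it applies under the hypotheses exactly as stated.
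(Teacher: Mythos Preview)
Your proof is correct and follows essentially the same route as the paper's: assume Condition~B1 fails for some $u_1u_2u_3$-hole $C$, deduce that $C[V(C)\cap S]$ is disconnected, rule out $s\in V(C)$, pin $s$ down to $N_H^{1,1}(C)$, and read off Condition~B2 from the two arcs of $C\setminus\{s_1,s_2\}$. The one noteworthy difference is that the paper reaches $s\in N_H^{1,1}(C)$ by citing Lemma~\ref{lemma:hole-node}, whose hypothesis includes beetle-freeness (not assumed in Lemma~\ref{lemma:block}); you instead inline the shortest-even-hole argument to exclude $N_H^{1,2}(C)$ directly, which makes your proof self-contained under the stated hypotheses --- a small but genuine improvement in rigor, though not a different strategy.
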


\begin{proof}
Let $s$ be a node of $S$ with $N_H[s]=S$.  Let $C$ be a
$u_1u_2u_3$-hole of $H$.  Assume that Condition~B\ref{block1} does not
hold with respect to $C$. There exist two distinct connected
components $B_1$ and $B_2$ of $H\setminus S$ such that $V(C)\cap
V(B_1)\ne\varnothing$ and $V(C)\cap V(B_2)\ne\varnothing$.  Thus,
$C[S]$ has at least two connected components.  Let $s_1$ and $s_2$ be
two nodes in distinct connected components of $C[S]$.  By
$\{s_1,s_2\}\subseteq N_H[s]$, we have $s\notin C$ or else $s$, $s_1$,
and $s_2$ are in the same connected component of $C[S]$.  By
Lemma~\ref{lemma:hole-node}, we have $s\in N_H^{1,1}(C)$, implying
$\{s_1,s_2\}=V(C)\cap S$. It follows that both $s_1$ and $s_2$ are
adjacent to both $B_1$ and $B_2$. Let paths $P_1$ and $P_2$ be the two
connected components of $C\setminus \{s_1,s_2\}$. One of $P_1$ and
$P_2$ has to be in $B_1$ and the other of $P_1$ and $P_2$ has to be in
$B_2$. Therefore, Condition~B\ref{block2} holds.
\end{proof}

\begin{proof}[Proof of Lemma~\ref{lemma:star-cutset}]
Let $T_0$ be the initial given tracker $(H,u_1u_2u_3)$ of $G$.  The
algorithm iteratively updates $(H,u_1u_2u_3)$ by the following three
steps until Task~1,~2, or~3 is completed.
\begin{enumerate}[\em {Step}~1:]
\addtolength{\itemsep}{-0.5\baselineskip}
\item 
\label{starcut1}
Apply Lemma~\ref{lemma:dominate-find} in $O(mn^2)$ time on tracker
$T=(H,u_1u_2u_3)$ to obtain a tracker $T'=(H',u'_1u'_2u'_3)$ of $G$,
where $H'$ is an induced subgraph of $H$ having no dominated nodes,
such that if $T$ is lucky, then so is $T'$.  Determine in $O(mn)$ time
whether $H'$ has full star-cutsets.  If $H'$ has full star-cutsets,
then let $(H,u_1u_2u_3)=(H',u'_1u'_2u'_3)$ and proceed to
Step~\ref{starcut2}; Otherwise, complete Task~3 by outputting $H'$.

\item 
\label{starcut2}
Let $S$ be a full star-cutset of $H$.  If Condition~B\ref{block2} of
Lemma~\ref{lemma:block} holds, then complete Task~1 by outputting that
$G$ contains even holes.  Otherwise, proceed to Step~\ref{starcut3}.

\item 
\label{starcut3}
If either one of the following statements hold for $U=\{u_1,u_2,u_3\}$:
\begin{itemize}
\addtolength{\itemsep}{-0.25\baselineskip}
\item 
$U\subseteq S$ and a connected component $B$ of $H\setminus S$ is
  adjacent to both $u_1$ and $u_3$;

\item 
$U\not\subseteq S$ and $U\subseteq B\cup S$ holds for a connected
  component $B$ of $H\setminus S$,
\end{itemize}

\noindent
then let $H=H[B\cup S]$ and proceed to the next iteration of the loop.
Otherwise, complete Task~2 by outputting that $T_0$ is not lucky.
\end{enumerate}
Step~\ref{starcut1} does not increase $|V(H)|$.  If
Step~\ref{starcut3} updates $H$, then $|V(H)|$ is decreased by at
least one, since $H\setminus S$ has more than one connected component.
The algorithm halts in $O(n)$ iterations.  Step~\ref{starcut1} takes
$O(mn^2)$ time. Step~\ref{starcut2} takes $O(mn^2)$ time: For any two
non-adjacent nodes $s_1$ and $s_2$ in $S$, it takes $O(m)$ time to
determine whether $s_1$ and $s_2$ have two or more common neighboring
connected components of $H\setminus S$.  Step~\ref{starcut3} takes
$O(m)$ time.  The overall running time is $O(mn^3)$.

\begin{figure}[t]
\centerline{\scalebox{0.9}{\input{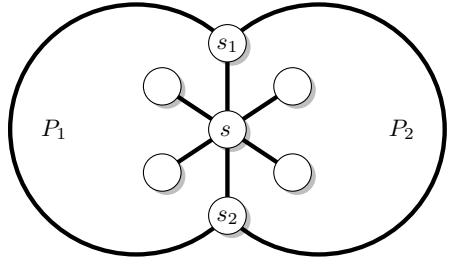}}}
\caption{An illustration for the proof of
  Lemma~\ref{lemma:star-cutset}.}
\label{figure:fig6}
\end{figure}

We first show the following claim for each iteration of the algorithm:
{\em ``If the $(H,u_1u_2u_3)$ at the beginning of an iteration is a
  lucky tracker of $G$, then (1) the intermediate $(H,u_1u_2u_3)$
  throughout the iteration remains a lucky tracker of $G$, and (2)
  Step~\ref{starcut3}, if reached, proceeds to the next iteration.''}
It suffices to consider the situation that Step~\ref{starcut3} is
reached and focus on the update operation that replaces $H$ with
$H[B\cup S]$ via Step~\ref{starcut3}.  By definition of
Step~\ref{starcut2}, Condition~B\ref{block2} does not hold.  By
Lemma~\ref{lemma:block}, Condition~B\ref{block1} holds. That is, some
$u_1u_2u_3$-hole $C$ of $H$ is in a connected component $B^*$ of
$H\setminus S$. We prove the claim by showing that $B^*$ has to be the
connected component $B$ of $H\setminus S$ in Step~\ref{starcut3}.
Since $B^*=B$ holds trivially for the case $\{u_1,u_2,u_3\}
\not\subseteq S$, we assume $\{u_1,u_2,u_3\}\subseteq S$.  If $s\in
C$, then exactly two nodes of $C$ are adjacent to $s$ in $H$; and
otherwise, Lemma~\ref{lemma:hole-node} implies that $s$ has at most
three neighbors of~$H$ in $C$.  Either way, we have $|V(C)\cap S|\leq
3$.  Since $u_1u_2u_3$ is a path of even hole $C$, nodes $u_1$ and
$u_3$ are not adjacent in $H$.  Since Condition~B\ref{block2} does not
hold, at most one connected component of~$H\setminus S$ can be
adjacent to both~$u_1$ and $u_3$ in $H$. By $V(C)\subseteq B^*\cup S$
and $|V(C)\cap S|\leq 3$, we have $(N_C(u_1)\cup N_C(u_3))\setminus
\{u_2\}\subseteq B^*$, implying $B^*=B$.  The claim is proved.

For the correctness of the algorithm, we consider the three possible
steps via which the algorithm halts.  Step~\ref{starcut1}:~Since $H'$
has no dominated nodes and full-star-cutsets,
Lemma~\ref{lemma:fullstar} implies that $H'$ has no star-cutsets.  By
the claim above, Task~3 is completed.
Step~\ref{starcut2}:~Condition~B\ref{block2} holds.  Let $P_1$ be a
shortest path between $s_1$ and $s_2$ in $H[B_1\cup \{s_1,s_2\}]$.
Let $P_2$ be a shortest path between~$s_1$ and $s_2$ in $H[B_2\cup
  \{s_1,s_2\}]$.  Since $s_1$ and $s_2$ are not adjacent, at least one
of the three cycles of graph $P_1\cup P_2\cup \{ss_1,ss_2\}$ is an
even hole of $H$. Since $H$ is an induced subgraph of $G$, $G$
contains even holes.  See Figure~\ref{figure:fig6} for an
illustration.  Task~1 is completed.
Step~\ref{starcut3}:~By the claim above, if $T_0$ is lucky, then
Step~\ref{starcut3} always proceeds to the next iteration of the
loop. Thus, Task~2 is completed.
\end{proof}


\subsection{Proving Lemma~\ref{lemma:no-star-cutsets}}
\label{section:2-join}

\subsubsection{Extended clique trees}
Graph $H$ is an {\em extended clique tree}~\cite{daSilvaV13} if there
is a set $S$ of two or less nodes of $H$ such that each biconnected
component of $H\setminus S$ is a clique.  da~Silva
and~Vu\v{s}kovi\'{c}~\cite[\S2.3]{daSilvaV13} described
an~$O(n^5)$-time algorithm to determine whether an $n$-node extended
clique tree contains even holes, which can actually be implemented to
run in $O(n^4)$ time.

\begin{lemma}
\label{lemma:faster-extended-clique-tree}
\label{lemma:extended-clique-tree-even-hole}
It takes $O(n^4)$ time to determine whether an $n$-node
extended clique tree contains even holes.
\end{lemma}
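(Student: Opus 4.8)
The plan is to first produce a witnessing node set $S$ with $|S|\le 2$, and then analyze how a hole of $H$ can meet $S$. First I would find $S$ such that every biconnected component of $H\setminus S$ is a clique: there are only $O(n^2)$ candidates for $S$ (the empty set, the $n$ singletons, and the $O(n^2)$ pairs), and each candidate is tested in $O(n+m)=O(n^2)$ time by computing the block decomposition of $H\setminus S$ via depth-first search and verifying, for every block $B$, that $2|E(B)|=|V(B)|(|V(B)|-1)$; since $H$ is an extended clique tree at least one candidate succeeds, so this step runs in $O(n^4)$ time and can be done first. Fix such an $S$. A graph all of whose blocks are cliques contains no hole, because a hole is $2$-connected and would therefore lie inside a single block; hence every hole of $H$ meets $S$, and in particular if $S=\varnothing$ then $H$ is even-hole-free. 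I would also use the standard fact that drives the rest of the proof: in a connected block graph any two nodes $u$ and $v$ are joined by a unique induced path $P(u,v)$, and $P(u,v)$ can be produced in $O(n)$ time after $O(n+m)$ preprocessing.

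For the case analysis, note that since $C$ is induced, each node of $S\cap V(C)$ has exactly two neighbors on $C$, and $|S\cap V(C)|\in\{1,2\}$. If $S\cap V(C)=\{s\}$, then $C$ consists of $s$ together with $P(u,v)$ for two distinct neighbors $u,v$ of $s$ in $H\setminus S$; such a $C$ is an even hole exactly when $P(u,v)$ has an odd number of nodes and no node of $N_H(s)$ lies in the interior of $P(u,v)$. Enumerating the $O(n^2)$ pairs $(u,v)$ and spending $O(n)$ time per pair settles this subcase in $O(n^3)$ time; when $|S|=2$, running the same search once in $H\setminus\{s_1\}$ and once in $H\setminus\{s_2\}$ -- each an extended clique tree whose special set is a single node -- catches every hole meeting $S$ in exactly one node. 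If $S\cap V(C)=\{s_1,s_2\}$ and $s_1 s_2$ is an edge of $C$, then $C$ is the cycle on $\{s_1,s_2\}\cup V(Q)$ where $Q=P(w,z)$ joins a neighbor $w$ of $s_2$ to a neighbor $z$ of $s_1$; this is an even hole exactly when $|V(Q)|$ is even, $w\neq z$, $N_H(s_1)\cap V(Q)=\{z\}$, and $N_H(s_2)\cap V(Q)=\{w\}$, so again an $O(n^2)$ enumeration with $O(n)$ checks suffices. The remaining subcase is $s_1,s_2\in V(C)$ non-adjacent on $C$ (hence non-adjacent in $H$, for otherwise $s_1 s_2$ would be a chord of $C$): here $C\setminus\{s_1,s_2\}$ is a pair of vertex-disjoint induced paths $Q_1,Q_2$ of $H\setminus S$ with no edge between them, each running from a neighbor of $s_1$ to a neighbor of $s_2$, meeting $N_H(s_1)$ and $N_H(s_2)$ only at its ends, and with $|V(Q_1)|+|V(Q_2)|$ even. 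Rather than choosing all four ends independently (which would cost $\Theta(n^5)$), I would fix the two ends $a_1,a_2$ of $Q_1,Q_2$ that are neighbors of $s_1$ ($O(n^2)$ choices) and delete $N_H(s_1)\setminus\{a_1,a_2\}$ from $H\setminus S$; the result is still a block graph, inside which $Q_1$ and $Q_2$ are forced to equal $P(a_1,c_1)$ and $P(a_2,c_2)$ for some neighbors $c_1,c_2$ of $s_2$, so the only remaining freedom is the pair $(c_1,c_2)$, whose admissibility (disjointness of the two paths, no edges between them, $N_H(s_2)$ meeting them only at $c_1,c_2$, total parity even) I would decide by one bottom-up traversal of the block-cut tree. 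Summing the subcases gives $O(n^4)$ time overall.

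The block-graph facts and the subcases with $|S\cap V(C)|\le 1$ are routine; the technical heart is the last subcase, where a hole uses both nodes of a two-node $S$ but not the edge between them. A direct parametrization by the four hole-neighbors of $s_1$ and $s_2$ costs $\Theta(n^5)$, so the point is to argue that, once the two $s_1$-side ends are fixed and the forbidden neighbors of $s_1$ are trimmed away, the two arcs become the unique induced paths of the resulting block graph to suitable neighbors of $s_2$, and that a compatible choice of those neighbors -- or a certificate that none exists -- can be extracted in linear time by a dynamic program over the block-cut tree that records, per subtree, the reachable parities of an induced path from the trunk to $N_H(s_2)$ that keeps clear of the other arc. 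Getting this dynamic program right, in particular the ``no edge between $Q_1$ and $Q_2$'' requirement and the endpoint-adjacency bookkeeping when arcs branch off at a shared cut vertex, is the only place where real care is needed; everything else follows from the block-graph structure of $H\setminus S$. This matches, a factor of $n$ faster, the $O(n^5)$-time procedure of da~Silva and Vu\v{s}kovi\'{c}.
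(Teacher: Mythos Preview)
Your overall architecture matches the paper's: find a witnessing set $S$ with $|S|\le 2$ in $O(n^4)$ time, observe that every hole of $H$ meets $S$, and do a case analysis on how. The first three subcases ($|S\cap V(C)|=1$; both nodes in $C$ with the edge $s_1s_2$ used) are handled essentially the same way in both arguments.

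Where you diverge is in the last subcase, and here the paper's method is considerably simpler than your proposed dynamic program. Rather than fixing only the two $s_1$-side ends and then running a block--cut--tree DP to locate compatible $s_2$-side ends, the paper precomputes two tables over $H\setminus S$. Table $M_1$ stores, for every pair $(u,v)$, the parity of $|V(P(u,v))|$ and whether the interior of $P(u,v)$ is adjacent to $s_1$ (respectively $s_2$); this costs $O(n^4)$. Table $M_2$ stores, for every quadruple $(u_1,v_1,u_2,v_2)$, whether $P(u_1,v_1)$ and $P(u_2,v_2)$ are vertex-disjoint and non-adjacent. The key observation, which replaces your DP entirely, is that this holds precisely when $u_2$ and $v_2$ lie in the same connected component of $(H\setminus S)\setminus N_{H\setminus S}[P(u_1,v_1)]$; computing these components once per pair $(u_1,v_1)$ fills $M_2$ in $O(n^4)$ total. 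With $M_1$ and $M_2$ in hand, the paper simply enumerates all $O(n^4)$ quadruples $(u_x,v_x,u_y,v_y)$ of neighbors of $s_1$ and $s_2$ and checks each in $O(1)$ time.

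So your plan is sound in outline, but the part you yourself flag as ``the only place where real care is needed'' can be bypassed: the connected-components trick for $M_2$ turns the hard subcase into a flat $O(n^4)$ enumeration with constant-time lookups, with no dynamic programming and none of the bookkeeping about arcs branching at shared cut vertices.
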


\begin{proof}
Let $H_0$ be the $n$-node extended clique tree.  Let $x$ and $y$ be
two nodes of $H_0$ such that each biconnected component of
$H=H_0\setminus\{x,y\}$ is a clique.  For nodes $u$ and $v$ of $H$,
let $P(u,v)$ be the shortest path of $H$ between $u$ and $v$ and let
$p(u,v)$ be the number of edges in $P(u,v)$.  We spend~$O(n^4)$ time
to store the following information in a table $M_1$ for every two
nodes $u$ and $v$ of~$H$:
(i)~$p(u,v)$ and (ii) whether or not $P(u,v)\setminus\{u,v\}$ is
adjacent to $x$ (respectively, $y$).  With $M_1$, it takes $O(n^2)$
time to determine whether $H_0$ contains an even hole that passes $y$
but not $x$: $H_0\setminus\{x\}$ contains even holes if and only if
there are two non-adjacent neighbors $u$ and $v$ of $y$ in $H$ such
that~$p(u,v)$ is even and $P(u,v)\setminus\{u,v\}$ is not adjacent to
$y$.
Similarly, with $M_1$, it takes $O(n^2)$ time to determine whether
$H_0$ contains an even hole that passes $x$ but not $y$.

To determine whether $H_0$ contains an even hole that passes both $x$
and $y$, we store in a table $M_2$ for every four nodes
$u_1,v_1,u_2,v_2$ whether or not $P(u_1,v_1)$ and $P(u_2,v_2)$ are
both disjoint and non-adjacent.  It takes $O(n^2)$ time to compute the
connected components of $H\setminus
N_H[P(u_1,v_1)]$. Paths~$P(u_1,v_1)$ and $P(u_2,v_2)$ are both
disjoint and non-adjacent if and only if $u_2$ and $v_2$ are in the
same connected component of $H\setminus N_H[P(u_1,v_1)]$.  Therefore,
$M_2$ can also be computed in $O(n^4)$ time.  With tables $M_1$ and
$M_2$, it takes $O(n^4)$ time to determine whether $H_0$ contains an
even hole that passes both $x$ and $y$:
{\em Case~1: $x$ and $y$ are adjacent in $H_0$.}  $H_0$ contains an
even hole that passes both~$x$ and $y$ if and only if there are nodes
$u$ and $v$ such that (1) $H_0[\{u,x,y,v\}]$ is path $uxyv$, (2)
$p(u,v)$ is odd, and (3)~$P(u,v)\setminus \{u,v\}$ is not adjacent to
$\{x,y\}$.
{\em Case~2: $x$ and $y$ are not adjacent in $H_0$.}  $H_0$ contains
an even hole that passes both $x$ and $y$ if and only if there are
nodes $u_x,v_x,u_y,v_y$ of $H$ such that (1)~$H_0[\{u_x,x,v_x\}]$ is
path $u_xxv_x$ and $H_0[\{u_y,y,v_y\}]$ is path $u_yyv_y$, (2)
$p(u_x,u_y)+p(v_x,v_y)$ is even, and (3) $P(u_x,u_y)$ and $P(v_x,v_y)$
are both disjoint and non-adjacent.
\end{proof}

\subsubsection{2-joins and non-path 2-joins}

\begin{figure}[t]
\centerline{\scalebox{0.9}{\input{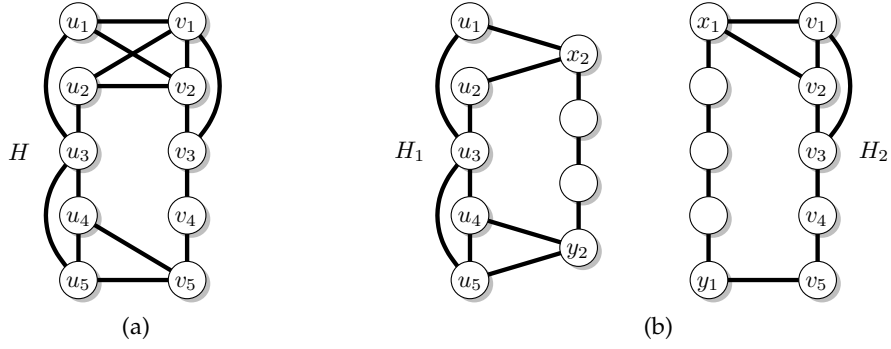}}}
\caption{(a) A connected non-path $2$-join $V_1| V_2$ of graph $H$
  with split $(X_1,Y_1,X_2,Y_2)$, where $X_1=\{u_1,u_2\}$,
  $X_2=\{v_1,v_2\}$, $Y_1=\{u_4,u_5\}$, $Y_2=\{v_5\}$, $V_1=X_1\cup
  Y_1\cup \{u_3\}$, and $V_2=X_2\cup Y_2\cup \{v_3,v_4\}$. (b) The
  parity-preserving blocks $H_1$ and $H_2$ of decomposition of $H$ for
  the connected $2$-join $V_1| V_2$ with respect to the split
  $(X_1,Y_1,X_2,Y_2)$.}
\label{figure:fig7}
\end{figure}

We say that $V_1|V_2$ is a {\em
  $2$-join}~\cite{CornuejolsC85,TrotignonV12} of a graph $H$ with {\em
  split} $(X_1,Y_1,X_2,Y_2)$ if (1) $V_1$ and $V_2$ form a disjoint
partition of $V(H)$ with $|V_1|\geq 3$ and $|V_2|\geq 3$, (2) $X_1$
and $Y_1$ (respectively, $X_2$ and $Y_2$) are disjoint non-empty
subsets of $V_1$ (respectively, $V_2$), and (3) each node of $X_1$ is
adjacent to each node of $X_2$, each node of $Y_1$ is adjacent to each
node of $Y_2$, and there are no other edges between~$V_1$ and
$V_2$. See Figure~\ref{figure:fig7}(a) for an example.

\begin{lemma}[Trotignon et al.~{\cite[Lemma~3.2]{TrotignonV12}}]
\label{lemma:is-connected}
Let $H$ be a graph having no star-cutsets.  If $V_1|V_2$ is a $2$-join
of $H$ with split $(X_1,Y_1,X_2,Y_2)$, then all of the following four statements
hold for each $i\in \{1,2\}$.
\begin{enumerate}
\addtolength{\itemsep}{-0.5\baselineskip}
\item Each connected component of $H[V_i]$ has at least one node
  in $X_i$ and at least one node in $Y_i$.

\item Each node of $V_i$ has a neighbor in $V_i$.

\item Each node of $X_i$ has a non-neighbor in $Y_i$.  Each node of
  $Y_i$ has a non-neighbor in $X_i$.

\item $|V_i|\geq 4$.
\end{enumerate}
\end{lemma}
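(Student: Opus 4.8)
The plan is to prove each of the four statements separately, each by contradiction and each time by exhibiting a \emph{full} star-cutset of $H$ — almost always the closed neighborhood $N_H[v]$ of a single well-chosen node $v$, which is automatically a full star-cutset the moment its deletion strictly increases the number of connected components. Since every full star-cutset is a star-cutset, the hypothesis supplies, for every node $v$, that $H\setminus N_H[v]$ has at most as many components as $H$; taking $H$ to be connected (this is the setting in which the lemma is used, and connectedness is genuinely needed: $K_3+K_3$ admits a $2$-join, has no star-cutset, yet violates statement~2), this reads: $H\setminus N_H[v]$ is connected or empty for every $v$. Two bookkeeping facts about the split $(X_1,Y_1,X_2,Y_2)$ are used repeatedly: (i)~the only edges between $V_1$ and $V_2$ run from $X_1$ to $X_2$ or from $Y_1$ to $Y_2$, so the neighborhood in $V_{3-i}$ of a connected component $D$ of $H[V_i]$ with $D\cap X_i=\varnothing$ equals $Y_{3-i}$ when $D\cap Y_i\neq\varnothing$ and is reached only through $D\cap Y_i$; and (ii)~$Y_{3-i}\subseteq N_H[y]$ for every $y\in Y_i$ and $X_{3-i}\subseteq N_H[x]$ for every $x\in X_i$.

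I would prove the statements in the order (2), (1), (3), (4), so that each proof may use its predecessors to clear away degenerate configurations. Statement~2: a node $v\in V_i$ with no neighbor in $V_i$ has all of $N_H(v)$ in $V_{3-i}$, hence (being internal would make $v$ isolated, contradicting connectedness) $v\in X_i$ with $N_H(v)=X_{3-i}$ or $v\in Y_i$ with $N_H(v)=Y_{3-i}$, and one then exhibits a star-cutset detaching $\{v\}$. Statement~1: suppose a component $D$ of $H[V_1]$ misses $X_1$; by~(i) and connectedness $D\cap Y_1\neq\varnothing$. Deleting $N_H[y_2]$ for any $y_2\in Y_2$ removes all of $Y_1$, hence every edge from $D$ to $V_2$, while sparing $X_1$, so $D\setminus Y_1$ would be cut off unless $D\subseteq Y_1$; then deleting $N_H[d]$ for $d\in D$ removes $Y_2=N_H(D)$, so $D\setminus N_H[d]$ would be cut off unless $H[D]$ is a clique, which is then complete to $Y_2$; and finally $\{d\}\cup Y_2$, which lies in $N_H[d]$, is a star-cutset centered at $d$ separating the nonempty set $D\setminus\{d\}$ from $X_1$ — here $D\setminus\{d\}\neq\varnothing$ because $|D|\ge 2$, a singleton component being forbidden by statement~2. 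Statement~3 is symmetric: a node $x\in X_i$ adjacent to all of $Y_i$ makes $N_H[x]\supseteq X_{3-i}\cup Y_i$, and deleting it strands $Y_{3-i}$, which has lost all of $Y_i$, from the rest. Statement~4: by~(1) every component of $H[V_i]$ contains a node of $X_i$ and a node of $Y_i$, which are distinct, so $|V_i|\ge 4$ whenever $H[V_i]$ is disconnected; and when $H[V_i]$ is connected, statements~2 and~3 pin a hypothetical $3$-node $V_i$ down to an induced path with interior node $z$ and ends $x\in X_i$ and $y\in Y_i$ with $x\not\sim y$, from which one further closed-neighborhood cut (e.g.\ $N_H[z]=V_i$, forcing $H[V_{3-i}]$ connected, followed by $N_H[x]$) produces the contradiction.

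The real obstacle lies in precisely the degenerate sub-cases flagged above — a component of $H[V_i]$ that is a single $Y_i$-node, a side $V_i$ on only three vertices, or (in statements~2 and~3) the case $|X_i|=1$ or $|Y_i|=1$ — where the obvious set $N_H[v]$ simply fails to separate $H$. The two remedies I would rely on are: first, the ordering above, so that an earlier statement (ultimately statement~2, which rests on connectedness) already rules out the worst degeneracy before it can arise in a later proof; and second, when $N_H[v]$ does not separate, replacing it either by a trimmed star-cutset $\{v\}\cup T$ with $T\subseteq N_H(v)$ (still centered at $v$, as in the $\{d\}\cup Y_2$ step above) or by the closed neighborhood of a carefully chosen node on the \emph{opposite} side of the $2$-join — an internal node of $V_{3-i}$ whose deletion together with its neighbors isolates $v$ or $D$. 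Verifying that in each of these finitely many configurations the chosen set genuinely increases the component count, rather than detaching a set that silently reconnects through an internal node of $V_{3-i}$, is the only step requiring care; everything else reduces to combining facts~(i) and~(ii).
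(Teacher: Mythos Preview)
The paper does not prove this lemma at all: it is quoted verbatim from Trotignon and Vu\v{s}kovi\'{c}~\cite[Lemma~3.2]{TrotignonV12} and used as a black box, so there is no in-paper argument to compare yours against. Your strategy---for each hypothetical violation, exhibit a star (usually a closed neighbourhood $N_H[v]$, sometimes a trimmed subset of it, sometimes centred on the opposite side of the join) whose removal increases the component count---is the standard one and is essentially how the cited source argues; your ordering $(2),(1),(3),(4)$ and your explicit flagging of the degenerate cases $|X_i|=1$, $|Y_i|=1$, $|V_i|=3$ are appropriate.

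One small correction: your counterexample $K_3+K_3$ does not actually admit a $2$-join, because the definition requires nonempty $X_1$ complete to nonempty $X_2$, which forces at least one edge between $V_1$ and $V_2$. Your underlying point that connectedness is tacitly required is nonetheless correct: take any connected graph $G$ with a $2$-join and no star-cutset (a long even cycle will do) and form $H=G+K_1$, placing the isolated vertex in $V_1\setminus(X_1\cup Y_1)$; then $H$ still has no star-cutset under the paper's ``increases the number of components'' definition, yet statements~1 and~2 both fail. So the implicit connectedness hypothesis you identified is genuinely needed, and in the paper the lemma is only ever invoked for connected $H$.
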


A $2$-join $V_1|V_2$ of $H$ with split $(X_1,Y_1,X_2,Y_2)$ is a {\em
  non-path $2$-join}~\cite{Trotignon08} of $H$ if $H[V_1]$ is not a
path between a node of $X_1$ and a node of $Y_1$ and $H[V_2]$ is not a
path between a node of $X_2$ and a node of~$Y_2$.
For instance, the $2$-join in Figure~\ref{figure:fig7}(a) is a
non-path $2$-join.
(Non-path $2$-joins are called~$2$-joins by da~Silva and
Vu\v{s}kovi\'{c}~\cite[\S1.3]{daSilvaV13}.)

\begin{lemma}[Charbit et al.~{\cite[Theorem~4.1]{CharbitHTV12}}]
\label{lemma:non-path-2-join}
Given an $n$-node connected graph $H$, it takes $O(n^4)$ time to
either output a non-path $2$-join of $H$ together with a split or
ensure that $H$ has no non-path $2$-joins.
\end{lemma}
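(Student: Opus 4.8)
The plan is to detect (and, when present, extract) a non-path $2$-join by a guess-and-force strategy: enumerate a bounded number of ``seed'' vertices that, in a hypothetical split, are forced to play prescribed roles, propagate the $2$-join constraints from the seeds by a closure rule until the vertex set is partitioned, and then test whether the resulting partition is a genuine non-path $2$-join. It is convenient first to argue that one may restrict attention to \emph{connected} $2$-joins, i.e., those in which $H[V_1]$ and $H[V_2]$ are both connected; this is the analogue, for an arbitrary connected $H$, of the structural cleanup supplied by Lemma~\ref{lemma:is-connected} in the star-cutset-free setting, and it lets the closure propagate along the two sides.

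For the search itself the key point is that a connected $2$-join with split $(X_1,Y_1,X_2,Y_2)$ is pinned down, up to harmless ambiguities, by a constant number of its vertices. Having fixed an edge $x_1x_2$ with $x_1\in X_1,\ x_2\in X_2$, an edge $y_1y_2$ with $y_1\in Y_1,\ y_2\in Y_2$, and a few further anchors inside $V_1$ and $V_2$, every remaining vertex $v$ is pushed into place by local adjacency rules: a vertex adjacent to a known $X$-vertex and a known $Y$-vertex of the same side $V_j$ cannot lie in $V_{3-j}$; a vertex known to lie in $X_i$ forces all of its cross-neighbours into $X_{3-i}$, and likewise for $Y_i$; a vertex known to lie in the interior of $V_i$ forces all its neighbours into $V_i$; and a vertex of $V_i$ adjacent to one known element of $X_{3-i}$ must be adjacent to \emph{all} known elements of $X_{3-i}$, on pain of contradiction. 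Iterating such rules to a fixed point, splitting any vertices that remain genuinely undetermined in whatever way stays consistent, and finally checking that $X_1$--$X_2$ and $Y_1$--$Y_2$ are complete bipartite, that no other edge crosses the partition, and that $|V_1|,|V_2|\ge 3$, decides whether a $2$-join consistent with the seeds exists. Enumerating all seed tuples and running one closure per tuple gives a polynomial algorithm; trimming the number of independent guesses (the extra anchors can be chosen greedily, or eliminated by a domination argument, once the four edge-endpoints are fixed) and implementing a closure pass in $O(n^2)$ time pushes the total down to $O(n^4)$.

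The step I expect to carry the real difficulty is the \emph{non-path} requirement, because it is not inherited by the natural shrinking operations: the closure may return a valid $2$-join whose side $H[V_1]$ happens to be an $X_1$-to-$Y_1$ path even though $H$ admits a non-path $2$-join. I would resolve this either by (i) detecting when the returned $2$-join is a path on a side and then re-seeding so as to grow that side into a genuinely non-path piece --- feasible because a path side can be lengthened, or a neighbouring vertex pulled across, without disturbing the $2$-join structure of the other side --- or by (ii) proving a dichotomy: a connected graph having a $2$-join but no non-path $2$-join has a very restricted shape (essentially a path attached to one bounded ``core''), a property directly testable in the allotted time. The two principal obstacles are therefore establishing soundness and completeness of the closure --- that whenever a non-path $2$-join exists some admissible seed tuple triggers a closure terminating on a valid split, and that the closure never forces a vertex to the wrong side --- and correctly handling the non-path condition across the reduction to connected $2$-joins.
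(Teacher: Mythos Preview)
The paper does not prove this lemma at all: it is quoted verbatim as Theorem~4.1 of Charbit, Habib, Trotignon, and Vu\v{s}kovi\'{c} and used as a black box. So there is no ``paper's own proof'' to compare your attempt against; any comparison is really between your sketch and the cited external result.

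That said, your outline does track the standard forcing paradigm that underlies the Charbit et al.\ algorithm (and the earlier Cornu\'ejols--Cunningham detection for ordinary $2$-joins): guess a small seed, propagate membership constraints, verify the resulting split. Two points in your sketch would not survive as written. First, your reduction to \emph{connected} $2$-joins (both $H[V_i]$ connected) invokes the analogue of Lemma~\ref{lemma:is-connected}, but that lemma requires $H$ to have no star-cutsets, whereas Lemma~\ref{lemma:non-path-2-join} is stated for an arbitrary connected $H$; you would need a separate argument (or a different normalisation) to justify the restriction in full generality. Second, the non-path requirement is where the actual work lies in Charbit et al., and your two proposed fixes are both left at the level of intentions: option~(i) (``re-seed and grow the path side'') needs a proof that such a growth is always possible when a non-path $2$-join exists somewhere, and option~(ii) (a structural dichotomy for graphs with $2$-joins but no non-path $2$-joins) is not obvious and is not stated precisely enough to assess. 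Finally, the $O(n^4)$ bound in Charbit et al.\ comes from a careful amortisation, not from ``$O(n^2)$ seed pairs times $O(n^2)$ closure''; your accounting (``trimming the number of independent guesses\ldots pushes the total down to $O(n^4)$'') is asserted rather than shown. As a plan this is the right shape, but as a proof it is missing exactly the parts that make the cited theorem non-trivial.
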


\begin{theorem}[da~Silva and~Vu\v{s}kovi\'{c}~{\cite[Corollary~1.3]{daSilvaV13}}]
\label{lemma:structure-new}
A connected even-hole-free graph that has no star-cutsets and
non-path-$2$-joins is an extended clique tree.
\end{theorem}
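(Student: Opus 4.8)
\emph{Overall plan.} I would prove the contrapositive: a connected even-hole-free graph $G$ that is not an extended clique tree has a star-cutset or a non-path $2$-join. Note that if the decomposition theorem of Conforti, Cornu\'{e}jols, Kapoor, and Vu\v{s}kovi\'{c}~\cite{ConfortiCKV02a} is read exactly as quoted in Section~\ref{section:intro} --- an even-hole-free connected graph is an extended clique tree, has a non-path $2$-join, or has a $k$-star-cutset with $k\in\{1,2,3\}$ --- then the statement is immediate, since every $k$-star-cutset is a star-cutset. The real content lies in securing a decomposition of this clean form: the original decomposition also produces ``double star-cutsets'' (cutsets contained in $N_G[u_1]\cup N_G[u_2]$ for two nodes $u_1,u_2$) and related composite cutsets that are not in general star-cutsets, and the bulk of the work is to eliminate these. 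I would argue by taking $G$ to be a counterexample with $|V(G)|$ minimum.

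\emph{Base case.} If $G$ contains no hole then, being connected, $G$ is either complete or has a clique cutset $K$ by Dirac's theorem on chordal graphs; picking any $s\in K$ gives $K\subseteq N_G[s]$, so $K$ is a star-cutset, contradicting the hypothesis. Hence $G$ is complete and therefore an extended clique tree (with exceptional set $\varnothing$), contradicting minimality. So $G$ contains a hole, which, as $G$ is even-hole-free, is odd.

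\emph{Eliminating composite cutsets.} Applying the full decomposition theorem to $G$: if $G$ is an extended clique tree or has a non-path $2$-join we are done, so assume $G$ has a composite cutset but neither a star-cutset nor a non-path $2$-join. In the principal case this is a double star-cutset $S\subseteq N_G[u_1]\cup N_G[u_2]$, and I would show that then every biconnected component of $G\setminus\{u_1,u_2\}$ is a clique, so that $G$ is an extended clique tree with exceptional set $\{u_1,u_2\}$ --- a contradiction. Suppose some biconnected component $B$ of $G\setminus\{u_1,u_2\}$ is not a clique; then $B$ contains an induced $P_4$ or a hole. Using that $S$ disconnects $G$ and that $G$ is even-hole-free, I would route suitable paths through $u_1$ and/or $u_2$ and control their parities so that the resulting subgraph is either an even hole, a genuine star-cutset of $G$ centred at $u_1$ or $u_2$, or a non-path $2$-join of $G$ with one side inside $B\cup\{u_1,u_2\}$ --- each a contradiction. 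The parity bookkeeping here is where Lemma~\ref{lemma:hole-parity} and the ``cleaning'' device enter: one first replaces an arbitrary shortest hole by a clean one, whose attachments are of the restricted types in Lemma~\ref{lemma:hole-node}, after which the routing is tractable. The other composite-cutset types, including path $2$-joins, would be handled by analogous but separate arguments, each ending in a star-cutset, a non-path $2$-join, or the extended-clique-tree conclusion.

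\emph{Main obstacle.} The hard part will be exactly this passage from composite cutsets (and path $2$-joins) to the extended-clique-tree structure: it requires a detailed case analysis of how the rest of $G$ attaches to a minimal obstruction, and such obstructions are not innocuous in even-hole-free graphs --- prisms (two triangles joined by three induced paths of equal parity) and the other ``$3$-path configurations'' survive and must be handled configuration by configuration. One must also show that, with no star-cutset available, a path $2$-join cannot occur (or forces an extended clique tree); Lemma~\ref{lemma:is-connected} indicates the structural leverage available. The rest --- the chordal base case, the minimal-counterexample reduction, and reading off $\{u_1,u_2\}$ as the exceptional set --- is routine once this core is in place.
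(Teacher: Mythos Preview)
The paper does not prove this theorem at all: it is quoted verbatim as \cite[Corollary~1.3]{daSilvaV13} and used as a black box. There is therefore no ``paper's own proof'' to compare your attempt against; the present paper builds its algorithm on top of this structural result rather than re-deriving it.

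As to your sketch itself, two remarks. First, your appeal to Lemmas~\ref{lemma:hole-parity} and~\ref{lemma:hole-node} and to ``cleaning'' is misplaced: those lemmas concern the neighbourhood of a shortest even hole in a graph that \emph{contains} even holes, whereas here $G$ is assumed even-hole-free, so there is no even hole for those statements to bite on. The structural work in da~Silva and Vu\v{s}kovi\'{c}'s proof is carried by quite different machinery (attachments to long prisms and wheels in even-hole-free graphs), not by the algorithmic cleaning used elsewhere in this paper. Second, the sentence ``every $k$-star-cutset is a star-cutset'' is false for $k\ge 2$; you seem to realise this immediately afterwards, but as written it is a misstatement. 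More substantively, the heart of the da~Silva--Vu\v{s}kovi\'{c} argument is precisely to show that the $2$- and $3$-star-cutsets (and related composite cutsets) produced by the earlier Conforti--Cornu\'{e}jols--Kapoor--Vu\v{s}kovi\'{c} theorem can always be replaced by a genuine star-cutset or a non-path $2$-join; your paragraph ``Eliminating composite cutsets'' gestures at this but does not supply a mechanism, and the proposed parity-routing argument is not how the actual proof goes. If you want to reconstruct the proof, the right reference is \cite{daSilvaV13} directly.
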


Combining Lemmas~\ref{lemma:faster-extended-clique-tree}
and~\ref{lemma:non-path-2-join} and Theorem~\ref{lemma:structure-new},
we have the following lemma.
\begin{lemma}
\label{lemma:non-path-2-join-free}
Given an $n$-node graph $H$ having no star-cutsets, it takes $O(n^4)$
time to either (a) determine whether $H$ contains even holes or (b)
obtain a non-path $2$-join of $H$ with a split.
\end{lemma}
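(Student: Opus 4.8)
The plan is to assemble Lemmas~\ref{lemma:faster-extended-clique-tree} and~\ref{lemma:non-path-2-join} together with Theorem~\ref{lemma:structure-new} in the natural way. First I would reduce to the connected case. If $H$ is not connected, I process each connected component $H_i$ separately: each $H_i$ still has no star-cutsets (a star-cutset of $H_i$ would be a star-cutset of $H$), a non-path $2$-join of some $H_i$ lifts to a non-path $2$-join of $H$ by absorbing all the other components into the $V_2$ side of the split (the augmented side is then disconnected, hence not a path, and the split conditions are inherited verbatim since there are no edges between distinct components), and $H$ contains an even hole if and only if some $H_i$ does, because an even hole lives in a single component. Since $\sum_i n_i^4 \le n^4$, this keeps the total running time $O(n^4)$. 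So from now on I assume $H$ is connected.

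Next I would apply Lemma~\ref{lemma:non-path-2-join} to $H$ in $O(n^4)$ time. If it outputs a non-path $2$-join of $H$ with a split, we are in case~(b) and done. Otherwise it certifies that $H$ has no non-path $2$-joins; combined with the hypothesis that $H$ has no star-cutsets, Theorem~\ref{lemma:structure-new} says that \emph{either} $H$ is an extended clique tree \emph{or} $H$ contains an even hole. I would therefore test in $O(n^4)$ time whether $H$ is an extended clique tree: enumerate all $O(n^2)$ node subsets $S$ with $|S|\le 2$, and for each compute the biconnected components of $H\setminus S$ and check whether every one of them is a clique, which costs $O(n+m)=O(n^2)$ per subset. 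If no such $S$ exists, then by the contrapositive of Theorem~\ref{lemma:structure-new} the graph $H$ contains an even hole, and we report case~(a). If some such $S$ is found, then $H$ is an extended clique tree, and I would run the algorithm of Lemma~\ref{lemma:faster-extended-clique-tree} in $O(n^4)$ time to decide whether $H$ contains an even hole, again landing in case~(a). The overall running time is $O(n^4)$.

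I do not expect a real obstacle here, since all the substance lies in the cited results; the only points needing a little care are bookkeeping. One must verify that $H$ genuinely is an extended clique tree before invoking Lemma~\ref{lemma:faster-extended-clique-tree}, because that algorithm presupposes the existence of the two-node set witnessing the extended-clique-tree structure; one must confirm that the extended-clique-tree test fits in $O(n^4)$ time, which it does because there are only $O(n^2)$ candidate sets $S$ and biconnected components are computable in linear time; and one must check the details of the disconnected reduction, in particular that the lifted $2$-join of a component stays non-path. None of these is difficult, so the main work is simply wiring the three ingredients together.
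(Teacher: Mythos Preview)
Your proof is correct and follows essentially the same approach as the paper: combine the extended-clique-tree test, Lemma~\ref{lemma:non-path-2-join}, and Theorem~\ref{lemma:structure-new}. The only differences are cosmetic---you run the $2$-join detection before the extended-clique-tree test rather than after, and you handle the disconnected case explicitly, whereas the paper defers that reduction to the caller in the proof of Lemma~\ref{lemma:no-star-cutsets}.
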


\begin{proof}
It takes $O(n^4)$ time to determine whether the graph $H$ is an
extended clique tree: For any set~$S$ of two or less nodes of $H$, it
takes $O(n^2)$ time to obtain the biconnected components of
subgraph~$H\setminus S$~\cite{HopcroftT73} and determine whether all
of them are cliques.  If $H$ is an extended clique tree, then
Lemma~\ref{lemma:faster-extended-clique-tree} implies that it takes
$O(n^4)$ time to determine whether $H$ contains even holes.  If $H$ is
not an extended clique tree, then Lemma~\ref{lemma:non-path-2-join}
implies that it takes $O(n^4)$ time to either obtain a
non-path~$2$-join of $H$ with a split or ensure that $H$ has no
non-path $2$-joins.  If $H$ has no non-path $2$-joins, then
Theorem~\ref{lemma:structure-new} implies that $H$ contains even
holes.
\end{proof}

\subsubsection{Parity-preserving blocks of decomposition for connected 2-joins}

A $2$-join $V_1| V_2$ with split $(X_1,Y_1,X_2,Y_2)$ is {\em
  connected}~\cite{TrotignonV12} if, for each $i \in \{1,2\}$, there
is an induced path $P_i$ of $H[V_i]$ between a node $x_i$ of $X_i$ and
a node $y_i$ of $Y_i$ such that $V(P_i)\setminus\{x_i,y_i\}\subseteq
V_i\setminus (X_i\cup Y_i)$.  For instance, the $2$-join $V_1| V_2$ in
Figure~\ref{figure:fig7}(a) is connected.  By
Lemma~\ref{lemma:is-connected}(1), any $2$-join of a graph having no
star-cutsets is connected with respect to any split.

Let $V_1| V_2$ be a connected $2$-join of graph $H$ with split
$(X_1,Y_1,X_2,Y_2)$.  For each $i \in \{1,2\}$, let $P_i$ be a
shortest induced path $P_i$ of $H[V_i]$ between a node $x_i$ of $X_i$
and a node $y_i$ of $Y_i$ with $V(P_i)\setminus\{x_i,y_i\}\subseteq
V_i\setminus (X_i\cup Y_i)$.  If $|V(P_i)|$ is even (respectively,
odd), then let $p_i=4$ (respectively, $p_i=5$).  The {\em
  parity-preserving blocks of decomposition}~\cite{TrotignonV12} of
$H$ for $2$-join $V_1| V_2$ with respect to split $(X_1,Y_1,X_2,Y_2)$
are the following graphs $H_1$ and $H_2$.
\begin{itemize}
\addtolength{\itemsep}{-0.5\baselineskip}
\item $H_1$ consists of (a) $H[V_1]$, (b) a $p_2$-path between nodes
  $x_2$ and $y_2$, (c) edges $x_2x$ for all nodes $x$ of $X_1$, and
  (d) edges $y_2y$ for all nodes $y$ of $Y_1$.
\item $H_2$ consists of (a) $H[V_2]$, (b) a $p_1$-path between nodes
  $x_1$ and $y_1$, (c) edges $x_1x$ for all nodes $x$ of $X_2$, and
  (d) edges $y_1y$ for all nodes $y$ of $Y_2$.
\end{itemize}
See Figure~\ref{figure:fig7}(b) for an example of $H_1$ and $H_2$.

\begin{lemma}[Trotignon and Vu\v{s}kovi\'{c}~{\cite[Lemma~3.8]{TrotignonV12}}]
\label{lemma:if-and-only-if}
If $V_1| V_2$ is a connected $2$-join of a graph $H$ having no
star-cutsets with split $(X_1,Y_1,X_2,Y_2)$, then the
parity-preserving blocks $H_1$ and $H_2$ of decomposition of $H$
for~$V_1| V_2$ with respect to $(X_1,Y_1,X_2,Y_2)$ are graphs having
no star-cutsets such that $H$ is even-hole-free if and only if both
$H_1$ and $H_2$ are even-hole-free.
\end{lemma}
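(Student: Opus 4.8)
The plan is to establish the two assertions of the lemma---that \(H_1\) and \(H_2\) have no star-cutsets, and that \(H\) is even-hole-free iff both \(H_1\) and \(H_2\) are---by first setting up a correspondence between holes across the \(2\)-join. Since the only edges between \(V_1\) and \(V_2\) lie in the complete bipartite parts \(X_1\times X_2\) and \(Y_1\times Y_2\), a short chord argument shows that every hole of \(H\) (and likewise of \(H_1\) and \(H_2\)) is of one of three kinds: it lies entirely in \(H[V_i]\) for some \(i\); or it meets one side in a single interface node that merely joins two \(X\)-nodes (or two \(Y\)-nodes), a \emph{connector} hole; or it uses exactly one \(X_1X_2\)-edge and one \(Y_1Y_2\)-edge, so that it is \(Q_1\cup Q_2\) for induced paths \(Q_i\) of \(H[V_i]\) from \(X_i\) to \(Y_i\) whose interiors avoid \(X_i\cup Y_i\), a \emph{crossing} hole. (Two crossing edges of the same type would force a chord by completeness of the bipartite part.) I would note the isomorphism \(H_1[V_1\cup\{x_2\}]\cong H[V_1\cup\{v\}]\) for any \(v\in X_2\) (both add one node adjacent to exactly \(X_1\)), which makes connector holes and in-\(V_1\) holes correspond between \(H\) and \(H_1\) node-for-node, hence parity-for-parity; symmetric statements handle \(V_2\) and \(H_2\).

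For the equivalence it then remains to track crossing holes, where parity is governed by \(p_i\in\{4,5\}\), chosen so that \(p_i\equiv|V(P_i)|\pmod 2\) for the shortest good path \(P_i\). \emph{Forward:} a crossing even hole \(Q_1\cup(\text{gadget})\) of \(H_1\) has \(|V(Q_1)|\equiv p_2\equiv|V(P_2)|\pmod2\), so replacing the gadget by \(P_2\) yields a hole \(Q_1\cup P_2\) of \(H\) of the same parity---it is chordless because the interior of \(Q_1\) avoids \(X_1\cup Y_1\) and that of \(P_2\) avoids \(X_2\cup Y_2\), leaving only the two intended interface edges. \emph{Reverse:} given a crossing even hole \(Q_1\cup Q_2\) of \(H\) with common parity \(\rho\), if \(\rho\) equals the parity of \(P_2\) then \(Q_1\cup(\text{gadget}_2)\) is an even hole of \(H_1\); otherwise \(H[V_2]\) contains good paths of both parities, whence a good path \(Q_2'\) of the parity of \(P_1\) exists and \((\text{gadget}_1)\cup Q_2'\) is an even hole of \(H_2\). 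Lemma~\ref{lemma:hole-parity} is not strictly needed here, but the same even-parity bookkeeping underlies the choice \(p_i\in\{4,5\}\). This settles the iff, and I regard it as the routine half.

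The harder half is that \(H_1\) (and, symmetrically, \(H_2\)) has no star-cutset, which I would prove by contradiction, assuming a star-cutset \(S\subseteq N_{H_1}[s]\) and splitting on the location of the center \(s\); write the gadget as \(x_2 m_1\cdots y_2\) with first interior node \(m_1\). If \(s\) is an interior gadget node, then \(N_{H_1}[s]\) meets \(H_1\) in at most three consecutive gadget nodes; since each connected component of \(H[V_1]\) meets both \(X_1\) and \(Y_1\) (Lemma~\ref{lemma:is-connected}(1)), the endpoints \(x_2,y_2\) keep everything glued, so \(H_1\setminus S\) stays connected---a contradiction. If \(s\in V_1\), then \(N_{H_1}[s]\) and \(N_H[s]\) agree on \(V_1\) and differ only by \(x_2,y_2\) standing in for \(X_2,Y_2\), and a part separated in \(H_1\setminus S\) can be shown, via Lemma~\ref{lemma:is-connected}(1)--(2), to remain separated in \(H\setminus S\), giving a star-cutset of \(H\). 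The crux is the interface case \(s\in\{x_2,y_2\}\), where \(N_{H_1}[x_2]=\{x_2\}\cup X_1\cup\{m_1\}\) collapses all of \(X_2\) into one node: a part \(D\) cut off from \(Y_1\) in \(H_1\) splits into the easy subcase \(D\cap X_1=\varnothing\), where \(D\) has no cross-edges and is isolated in \(H\) by the star-cutset \((S\cap X_1)\cup\{v\}\subseteq N_H[v]\) for \(v\in X_2\), and the delicate subcase \(D\cap X_1\neq\varnothing\), where \(D\) may reconnect in \(H\) through \(X_2\)-nodes other than \(v\).

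I expect this last subcase to be the main obstacle. The gadget's collapse of \(X_2\) genuinely hides the reconnection, so a star-cutset of \(H\) cannot be read off verbatim: one must build it with center chosen inside (or on the boundary of) the cut-off blob \(D\) and removed set comprising the \(V_1\)-boundary of \(D\) together with \(X_2\), then verify via Lemma~\ref{lemma:is-connected}(1),(3) that this set indeed lies in a single closed neighborhood of \(H\) and separates \(D\) from the \(Y\)-side. Making this precise---descending to a minimal blob when the boundary is not dominated by one node, and handling the case \(D\subseteq X_1\), where the center must instead be taken inside \(D\) itself---is where the full strength of Lemma~\ref{lemma:is-connected} is indispensable and where the real work of the proof resides.
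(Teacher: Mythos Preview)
The paper does not prove this lemma at all: it is quoted verbatim as \cite[Lemma~3.8]{TrotignonV12} and used as a black box in the proof of Lemma~\ref{lemma:no-star-cutsets}. There is therefore no ``paper's own proof'' to compare against; the present paper contributes nothing toward establishing this statement.

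That said, your sketch is broadly along the lines of the Trotignon--Vu\v{s}kovi\'{c} argument. The hole classification (in-$V_i$, connector, crossing) and the parity bookkeeping for crossing holes are standard and your handling of the equivalence is essentially correct; in particular, your observation that when the crossing-hole parity $\rho$ differs from $|V(P_2)|$ you can fall back on the fact that $H[V_2]$ now contains good paths of both parities (namely $Q_2$ and $P_2$) and pick one matching $|V(P_1)|$ is the right fix. For the star-cutset half, your case split on the location of the center $s$ is the right organization, and you have correctly identified the delicate interface subcase $s\in\{x_2,y_2\}$ with $D\cap X_1\neq\varnothing$ as the crux. Your plan there---rebuild a star-cutset of $H$ centered at a carefully chosen node near $D$, with removed set the $V_1$-boundary of $D$ together with $X_2$, and invoke Lemma~\ref{lemma:is-connected} to certify both domination and separation---is the intended mechanism, though carrying it out cleanly (especially when $D\subseteq X_1$) requires more care than your outline indicates. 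If you want to flesh this out, the original \cite{TrotignonV12} is the right reference; for the purposes of the present paper, simply citing the lemma as the authors do is sufficient.
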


\begin{lemma}
\label{lemma:fewer-edges}
Let $H$ be an $n$-node $m$-edge graph having no star-cutsets.  Both of
the parity-preserving blocks~$H_1$ and $H_2$ of decomposition for an
arbitrary non-path $2$-join of $H$ with respect to any split have at
most $n$ nodes and $m-1$ edges.
\end{lemma}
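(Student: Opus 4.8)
The plan is to establish both bounds for $H_1$; those for $H_2$ then follow by the symmetry that exchanges $(V_1,X_1,Y_1)$ with $(V_2,X_2,Y_2)$, hence $H_1$ with $H_2$ and $p_1$ with $p_2$ (the non-path requirement is symmetric in the two sides). Keep the notation $P_2,x_2,y_2,p_2$ from the definition of the parity-preserving blocks and write $Z_2=V_2\setminus(X_2\cup Y_2)$. By construction $|V(H_1)|=|V_1|+p_2$ and $|E(H_1)|=|E(H[V_1])|+(p_2-1)+|X_1|+|Y_1|$, while $|V(H)|=|V_1|+|V_2|$ and $|E(H)|=|E(H[V_1])|+|E(H[V_2])|+|X_1|\,|X_2|+|Y_1|\,|Y_2|$. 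Thus $|V(H_1)|\le n$ is equivalent to $p_2\le|V_2|$, and $|E(H_1)|\le m-1$ is equivalent to $|E(H[V_2])|+|X_1|(|X_2|-1)+|Y_1|(|Y_2|-1)\ge p_2$; since $X_1$ and $Y_1$ are non-empty, the latter follows from
\[
|E(H[V_2])|+|X_2|+|Y_2|\ \ge\ p_2+2.
\]
So the whole lemma reduces to proving $p_2\le|V_2|$ and this last inequality (and, by symmetry, the analogues for $V_1$).

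Before the case analysis I would record what ``no star-cutset'' buys us. By Lemma~\ref{lemma:is-connected}(1) every component of $H[V_i]$ meets both $X_i$ and $Y_i$, so together with the complete joins $X_1$--$X_2$ and $Y_1$--$Y_2$ the graph $H$ is connected; moreover $H$ has no cut vertex, because a cut vertex $v$ would make the singleton $\{v\}\subseteq N_H[v]$ a star-cutset. Consequently every vertex of $H$ has degree at least $2$; in particular every vertex of $Z_2$ has at least two neighbours, all of which lie inside $H[V_2]$. By Lemma~\ref{lemma:is-connected}(2),(4) we have $|V_2|\ge 4$ and $H[V_2]$ has minimum degree $\ge 1$, so $|E(H[V_2])|\ge 2$. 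I will also use the following remark repeatedly: if $z\in Z_2$, $w$ is a vertex with $z\notin N_H[w]$, and every neighbour of $z$ in $H[V_2]$ lies in $N_H[w]$, then $z$ is isolated in $H\setminus N_H[w]$, so $N_H[w]$ is a star-cutset of $H$ --- impossible.

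Then I would split on $|V(P_2)|$, which is $\ge 2$ and whose parity fixes $p_2\in\{4,5\}$. If $|V(P_2)|\ge 4$, then $|V(P_2)|\ge p_2$, so $|V_2|\ge|V(P_2)|\ge p_2$; moreover either $V_2$ has a vertex off $P_2$, which (having a $V_2$-neighbour) contributes an edge outside $E(P_2)$, or $V_2=V(P_2)$, in which case $H[V_2]=P_2$ would be a path between a node of $X_2$ and a node of $Y_2$, contradicting the non-path hypothesis, so again $H[V_2]$ has an edge beyond $E(P_2)$; hence $|E(H[V_2])|\ge|V(P_2)|\ge p_2$, and with $|X_2|+|Y_2|\ge 2$ the displayed inequality holds. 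If $|V(P_2)|=2$, then $p_2=4$ and some node of $X_2$ is adjacent to some node of $Y_2$, so Lemma~\ref{lemma:is-connected}(3) forces $|X_2|\ge 2$ and $|Y_2|\ge 2$; then $|V_2|\ge 4=p_2$, and $|E(H[V_2])|+|X_2|+|Y_2|\ge 2+2+2=p_2+2$.

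The remaining case $|V(P_2)|=3$ is the crux: here $p_2=5$, $P_2=a\,c\,b$ with $a\in X_2$, $b\in Y_2$, $c\in Z_2$, $a\not\sim b$, and (by minimality of $P_2$ over all admissible endpoint pairs) there is no edge at all between $X_2$ and $Y_2$. One must show $|V_2|\ge 5$ and $|E(H[V_2])|+|X_2|+|Y_2|\ge 7$. For $|V_2|\ge 5$ I would rule out $|V_2|=4$: writing $V_2=\{a,b,c,d\}$ and using that $c$ has $\ge 2$ neighbours in $H[V_2]$, one enumerates the location of $d$ (in $X_2$, $Y_2$, or $Z_2$) together with its adjacencies so that each subcase either makes $H[V_2]$ a path from $X_2$ to $Y_2$ (contradicting the non-path hypothesis) or makes all $H[V_2]$-neighbours of one of $c,d$ lie in some $N_H[w]$ with $w\in\{a,b,c\}$ (producing a star-cutset via the remark above). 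Granting $|V_2|\ge 5$, the node bound $p_2=5\le|V_2|$ is done. For the edge bound, $\deg_{H[V_2]}\ge 2$ on $Z_2$ and $\ge 1$ elsewhere give $2\,|E(H[V_2])|\ge|V_2|+|Z_2|$, so $|E(H[V_2])|+|X_2|+|Y_2|\ge(3|V_2|-|Z_2|)/2\ge|V_2|+1\ge 7$ once $|V_2|\ge 6$ (using $|Z_2|\le|V_2|-2$); for $|V_2|=5$ the same estimate settles $|Z_2|\le 2$, and the remaining case $|Z_2|=3$ is closed by a short degree-saturation argument (the three vertices of $Z_2$ cannot all be fed by the set $\{a,b,c\}$, which is already saturated once $|E(H[V_2])|=4$), forcing a fifth edge. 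I expect this $|V_2|=4$ subcase analysis --- arranging the possibilities for $d$ so that each one either violates the non-path condition or yields a star-cutset --- to be the main obstacle.
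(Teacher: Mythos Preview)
Your reduction to the two conditions $p_2\le|V_2|$ and $|E(H[V_2])|+|X_2|+|Y_2|\ge p_2+2$ is correct and is a clean way to organise the count; the cases $|V(P_2)|\ge 4$ and $|V(P_2)|=2$ go through as you describe, and your degree argument for the edge bound when $|V_2|\ge 5$ (including the $|V_2|=5$, $|Z_2|=3$ saturation check) is fine.

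There is, however, a genuine gap in the crucial step --- ruling out $|V_2|=4$ when $|V(P_2)|=3$. Your plan is that every placement of $d$ either yields a path or makes all $H[V_2]$-neighbours of one of $c,d$ lie in some \emph{full} closed neighbourhood $N_H[w]$ with $w\in\{a,b,c\}$ and with that vertex outside $N_H[w]$; but this fails in several sub-cases. Take $d\in Z_2$ with $N_{H[V_2]}(d)=\{a,c\}$ (so $H[V_2]$ has edges $ac,cb,ad,cd$, not a path). Then $d$ is adjacent to $a$ and to $c$, so $d\in N_H[a]$ and $d\in N_H[c]$; and for $w=b$ you would need $a\in N_H[b]$, which is false. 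Likewise for $c$: it is adjacent to each of $a,b,d$, hence $c\in N_H[w]$ for every $w\in\{a,b,c\}$. The same obstruction arises when $N_{H[V_2]}(d)=\{a,b,c\}$, and (with your remark not even applicable since $d\notin Z_2$) when $d\in X_2$ with $N_{H[V_2]}(d)=\{a,c\}$. In each of these the required full star-cutset does not exist.

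The fix --- and this is exactly what the paper does --- is to allow \emph{non-full} star-cutsets. If $d\in Z_2$ then $V(P_2)=\{a,b,c\}\subseteq N_H[c]$ is a star-cutset, since all of $d$'s $H$-neighbours lie in $\{a,b,c\}$ and removing $\{a,b,c\}$ separates $d$ from $V_1$. If $d\in X_2$ (so $d\not\sim b$) and $H[V_2]$ is not a path, then $N_H[a]\setminus\{d\}=\{a,c\}\cup X_1$ is a star-cutset: $d$'s $H$-neighbours are contained in $X_1\cup\{a,c\}$, so $d$ is isolated once this set is removed, while $b$ is still connected to $Y_1\subseteq V_1\setminus X_1$. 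With this correction (replace ``$N_H[w]$'' by ``a subset of $N_H[w]$'' in your remark, or equivalently invoke $V(P_2)$ and $N_H[a]\setminus\{d\}$ directly), the rest of your plan goes through and matches the paper's argument.
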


\begin{proof}
We prove the lemma for $H_1$. The proof for $H_2$ is similar.  Let
$V_1|V_2$ be the non-path $2$-join.  Let $(X_1,Y_1,X_2,Y_2)$ be the
split.  Let $P_2$ be a shortest path of $H[V_2]$ between a node of
$X_2$ and a node of $Y_2$.
For the case that $|V(P_2)|$ is even, we have $p_2=4$. By
Lemma~\ref{lemma:is-connected}(4), $|V_2|\geq 4$, implying
$|V(H_1)|=n-|V_2|+p_2\leq n$. By the following case analysis, $H_1$
has at most $m-1$ edges.
\begin{itemize}
\addtolength{\itemsep}{-0.5\baselineskip}
\item $|V(P_2)| \geq 6$: By $P_2\subseteq H[V_2]$, $H[V_2]$ has at
  least five edges. Thus, $H_1$ has at most $m-2$ edges.
\item $|V(P_2)|=4$: Since $V_1|V_2$ is a non-path $2$-join of $H$,
  $P_2\subsetneq H[V_2]$. If $V(P_2)=V_2$, then $H[V_2]$ has at least
  four edges.  If $V(P_2)\subsetneq V_2$, then
  Lemma~\ref{lemma:is-connected}(2) implies that $H[V_2]$ has at least
  four edges.  Either way, $H_1$ has at most $m-1$ edges.
\item $|V(P_2)|=2$: Lemma~\ref{lemma:is-connected}(3) ensures
  $|X_2|\geq 2$ and $|Y_2|\geq 2$.  Lemma~\ref{lemma:is-connected}(1)
  implies that $H[V_2]$ has at least two edges.  By $|X_2|\geq 2$ and
  $|Y_2|\geq 2$, the number of edges between $V_1$ and $V_2$ in $H$ is
  at least two more than the number of edges between $V_1$ and
  $V(H_1)\setminus V_1$ in $H_1$.  Therefore, $H_1$ has at most $m-1$
  edges.
\end{itemize}
As for the case that $|V(P_2)|$ is odd, we have $p_2=5$.  The
following case analysis shows that $H_1$ has at most $n$ nodes and at
most $m-1$ edges.
\begin{itemize}
\addtolength{\itemsep}{-0.5\baselineskip}
\item $|V(P_2)| \geq 5$: By $|V_2|\geq 5$, we have $|V(H_1)|\leq n$.
  $P_2$ has at least four edges. Since $V_1|V_2$ is a non-path
  $2$-join of $H$, $P_2\subsetneq H[V_2]$. If $V(P_2)=V_2$, then
  $H[V_2]$ has at least five edges.  If $V(P_2)\subsetneq V_2$, then
  Lemma~\ref{lemma:is-connected}(2) implies that $H[V_2]$ has at least
  five edges.  Either way, $H_1$ has at most $m-1$ edges.

\item $|V(P_2)|=3$: By Lemma~\ref{lemma:is-connected}(4), the proper
  subset $Z=V_2\setminus V(P_2)$ of $V_2$ is non-empty. We know $Z\cap
  (X_2\cup Y_2)\ne\varnothing$ or else $V(P_2)$ would be a star-cutset
  of $H$. Assume $Z\cap X_2\ne\varnothing$ without loss of generality.
  Let $B$ be an arbitrary connected component of $H[Z]$ with $B\cap
  X_2\ne\varnothing$. We know that $B$ is adjacent to $Y_2$ in $H$ or
  else $N_H[x]\setminus Z$ would be a star-cutset of $H$, where~$x$ is
  the endpoint of $P_2$ in $X_2$.  Since $P_2$ is a shortest path
  between a node of $X_2$ and a node of $Y_2$, at least one node of
  $B$ is not in $X_2\cup Y_2$.  Therefore, $|V_2|\geq 5$, implying
  $|V(H_1)|\leq n$.  Moreover,~$H[V_2]$ has at least four edges.  By
  $|X_2|\geq 2$, the number of edges between $V_1$ and $V_2$ in $H$ is
  at least one more than the number of edges between $V_1$ and
  $V(H_1)\setminus V_1$ in $H_1$.  Thus, $H_1$ has at most $m-1$
  edges.
\end{itemize}
\end{proof}

\subsubsection{Proving Lemma~\ref{lemma:no-star-cutsets}}
We now prove Lemma~\ref{lemma:no-star-cutsets} by
Lemmas~\ref{lemma:non-path-2-join-free},~\ref{lemma:if-and-only-if}
and~\ref{lemma:fewer-edges}.
\begin{proof}[Proof of Lemma~\ref{lemma:no-star-cutsets}]
Assume without loss of generality that the given $n$-node $m$-edge
graph $H_0$ having no star-cutsets is connected.  Let set $\HH$
initially consist of a single graph $H_0$.  We then repeat the
following loop until $\HH=\varnothing$ or we output that $H_0$
contains even holes.  Let $H$ be a graph in $\HH$.
{\em Case~1: $H$ has at most $11$ edges.}  It takes $O(1)$ time
to determine whether $H$ contains even holes. If $H$ contains even
holes, then we output that $H_0$ contains even holes.  Otherwise, we
delete $H$ from $\HH$.
{\em Case~2: $H$ has at least $12$ edges.}  We first delete $H$
from $\HH$ and then apply Lemma~\ref{lemma:non-path-2-join-free} on
$H$.  If $H$ contains even holes, then we output that $H_0$ contains
even holes.  If we obtain a non-path $2$-join $V_1|V_2$ of $H$ with
split $(X_1,Y_1,X_2,Y_2)$, then we add to $\HH$ the parity-preserving
blocks $H_1$ and $H_2$ of decomposition for $V_1|V_2$ with respect to
$(X_1,Y_1,X_2,Y_2)$.  If the above loop stops with $\HH=\varnothing$,
then we output that $H_0$ is even-hole-free.

The correctness of our algorithm follows immediately from
Lemma~\ref{lemma:if-and-only-if}.  By Lemma~\ref{lemma:fewer-edges},
each graph ever in $\HH$ throughout our algorithm has at most $n$
nodes.  By Lemma~\ref{lemma:non-path-2-join-free}, each iteration of
the loop takes $O(n^4)$ time.  It remains to show that the loop halts
in $O(m)$ iterations.  Observe that each iteration increases the
overall number of edges of the graphs in $\HH$ by no more than $10$.
Let~$f(m)$ be the maximum number of iterations of the above loop in
which Lemma~\ref{lemma:non-path-2-join-free} is
applied. Lemma~\ref{lemma:fewer-edges} implies
\begin{displaymath}
f(m)\leq 
\left\{
\begin{array}{ll}
0&\mbox{if $m\leq 11$}\\
\max\{1+f(m_1)+f(m_2): m_1,m_2\leq m-1, m_1+m_2\leq m+10\}&\mbox{if $m\geq 12$}.
\end{array}
\right.
\end{displaymath}
By induction on $m$, we show $f(m)\leq\max(m-11,0)$, which clearly
holds for $m=1,2,\ldots,11$.  As for the induction step, if~$m\geq 12$, then the inductive
hypothesis implies that $f(m)$ is at most
\begin{eqnarray*}
&&\max\{1+\max(m_1-11,0)+\max(m_2-11,0): m_1,m_2\leq m-1, m_1+m_2\leq m+10\}\\
&\leq&\max\{\max(m_1+m_2-21,m_1-10,m_2-10,1): m_1,m_2\leq m-1, m_1+m_2\leq m+10\}\\
&\leq&\max(m-11,m-11,m-11,1)\\
&=&\max(m-11,0).
\end{eqnarray*}
By $f(m)=O(m)$, the number of iterations of the above loop is $O(m)$. 
\end{proof}


\section{Concluding remarks}
\label{section:conclude}

For any family $\mathbbmsl{G}$ of graphs, one can augment a
recognition algorithm for $\mathbbmsl{G}$-free graphs into a
$\mathbbmsl{G}$-detection algorithm for an $n$-node graph $G$ with a
factor-$O(n)$ increase in the time complexity by a node-deletion
method: (1) Let $H=G$. (2) For each node $v$ of $G$, if $H\setminus
\{v\}$ is not $\mathbbmsl{G}$-free, then let $H=H\setminus\{v\}$. (3)
Output the resulting graph~$H$.  See, e.g.,~\cite[\S4]{Vuskovic10} for
the case that $\mathbbmsl{G}$ consists of even holes.  Thus,
Theorem~\ref{theorem:theorem1} immediately yields a detection
algorithm that runs in time $O(m^3 n^6)=O(n^{12})$.  However, our
$O(m^3 n^5)$-time recognition algorithm can be augmented into an
even-hole-detection algorithm without increasing the time complexity.

The combination of the proofs of Theorem~\ref{theorem:theorem1} and
Lemma~\ref{lemma:cleaning} actually gives two algorithms.  The first
algorithm determines if $G$ is both beetle-free and $4$-hole-free.
The second algorithm determines if a beetle-free and $4$-hole-free
graph $G$ is also even-hole-free.  We first describe how to augment
the first algorithm into an $O(m^3n^5)$-time detection algorithm.
Since it takes $O(n^4)$ time to detect a $4$-hole in $G$, it suffices
to show how to detect an even hole in a graph $G$ with beetles in
$O(m^3 n^5)$ time.  As stated in the proof of
Lemma~\ref{lemma:cleaning}, for each of the $O(m^3 n)$ choices of node
$b_4$ and edges $b_1b_5$, $b_2b_6$, and $b_3b_7$, it takes $O(n^4)$
time via Theorem~\ref{lemma:three_tree} to determine if $b_4$,
$b_1b_5$, $b_2b_6$, and $b_3b_7$ are in a beetle $B$ in which
$\{b_1,b_2,b_3,b_4\}$ induces a diamond.  Once we know that a
particular choice of $b_4$, $b_1b_5$, $b_2b_6$, and $b_3$ is in some
beetle $B$, it takes $O(n^5)$ time to actually detect such a
beetle~$B$ by Theorem~\ref{lemma:three_tree} augmented via the
node-deletion method above. Therefore, if $G$ contains beetles, then
it takes $O(m^3 n)\cdot O(n^4)+O(n^5)=O(m^3n^5)$ time to find a beetle
of $G$, in which an even hole of $G$ can be detected in $O(n)$ time.

The second algorithm can also be augmented into an $O(m^3 n^5)$-time
detection algorithm for a beetle-free graph $G$ that contains even
holes.  By Lemma~\ref{lemma:cleaning}, we obtain in $O(m^3 n^5)$ time
a set $\mathbbmsl{T}$ of $O(m^2 n)$ trackers of $G$ that satisfies
Condition~L1.  Since $G$ contains even holes, there must be a tracker
$(H,u_1u_2u_2)$ of $\mathbbmsl{T}$ such that $H$ contains an even hole
of $G$, which according to Lemma~\ref{lemma:decomposition} can be
found in time $O(m^2 n)\cdot O(m n^4)=O(m^3 n^5)$. By the proof of
Lemma~\ref{lemma:decomposition}, $H$ is ensured to contain even holes
in two ways.  (1) If it is ensured through completing Task~1 of
Lemma~\ref{lemma:star-cutset}, then the proof of
Lemma~\ref{lemma:star-cutset} actually gives a constructive proof for
the existence of an even hole of $H$, which is also an even hole of
$G$.  (2) If it is ensured through completing Task~3 of
Lemma~\ref{lemma:star-cutset}, then we have an induced subgraph $H'$
of $H$ having no star-cutsets which is ensured to contain even holes
via Lemma~\ref{lemma:no-star-cutsets}.
We then apply the above node-deletion method on $H'$ using
Lemma~\ref{lemma:no-star-cutsets} to detect in $O(m n^5)$ time an even
hole of $H'$, which is also an even hole of $H$ and $G$. Therefore, if
$G$ is a $4$-hole-free and beetle-free graph that contains even holes,
then it takes time $O(m^3n^5)+O(mn^5)=O(m^3n^5)$ to output an even
hole of~$G$.  Combining the two detection algorithms above, we have an
$O(m^3n^5)$-time algorithm that outputs an even hole in an $n$-node
$m$-edge graph containing even holes.


\section*{Acknowledgments}
We thank Gerard J. Chang for discussion.
We also thank the anonymous reviewers for their helpful comments.

\bibliographystyle{abbrv}
\bibliography{hole}
\end{document}